\newtheorem{theorem}{Theorem}[section]
\newtheorem{lemma}[theorem]{Lemma}
\newtheorem{proposition}[theorem]{Proposition}
\newtheorem{definition}[theorem]{Definition}
\newtheorem{corollary}[theorem]{Corollary}
\newtheorem{remark}[theorem]{Remark}
\newtheorem{example}[theorem]{Example}
\def\R{{\mathbb R}}
\def\Z{{\mathbb Z}}
\def\Zd{{\Z}^d}
\def\Rd{{\R}^d}
\numberwithin{equation}{section}
\def\BState{\State\hskip-\ALG@thistlm}
\begin{document}

\title[Phaseless Sampling and Reconstruction]{Phaseless Sampling and Reconstruction of Real-Valued Signals  in   Shift-Invariant Spaces}
\author{Cheng Cheng, Junzheng Jiang and  Qiyu Sun}
\address{Cheng: Department of Mathematics,
University of Central Florida,
Orlando 32816, Florida, USA}
\email{cheng.cheng@knights.ucf.edu}
\address{Jiang: School of Information and Communication, Guilin University of Electronic Technology, Guilin 541004, China}
\email{jzjiang@guet.edu.cn
}
\address{Sun: Department of Mathematics,
University of Central Florida,
Orlando 32816, Florida, USA.}
\email{qiyu.sun@ucf.edu}
\thanks{The project is partially supported by the National Science Foundation 
 (DMS-1412413).}

\maketitle

\begin{abstract}
Sampling in shift-invariant spaces  is a
realistic model for signals with smooth spectrum.
In this paper, we consider phaseless  sampling and reconstruction of real-valued signals in a shift-invariant space from their magnitude measurements on the whole Euclidean space and from their phaseless samples taken on a discrete set with finite sampling density.
We introduce an undirected graph to a signal 
 and use  connectivity of the graph to characterize whether the signal can be determined, up to a sign, from its magnitude measurements on the whole Euclidean space. Under the local complement property assumption on a shift-invariant space, we find a discrete set with finite sampling density such that  signals in the shift-invariant space, that
are determined from  their magnitude measurements on the whole Euclidean space,
can be reconstructed in a stable way from their phaseless samples taken on that discrete set. In this paper, we also propose a reconstruction algorithm  which provides a suboptimal approximation to the original signal when its noisy phaseless samples  are available only. Finally, numerical simulations are performed to demonstrate the robust
reconstruction of  box spline signals from their noisy phaseless samples.
\end{abstract}

\section{Introduction}
In this paper, we consider the  phaseless sampling and reconstruction problem whether a real-valued signal $f$ on $\Rd$ 
can be determined, up to a sign,  from its magnitude measurements $|f(x)|$ on $\Rd$ or a 
 subset $X\subset \Rd$.  The above problem is ill-posed  inherently
and it could be solved only if  we
 have some extra information  about the signal $f$.

The additional  knowledge about the signals   in this paper is that they live in a shift-invariant space
\begin{equation} \label{vphi.def} 
V(\phi):=\Big\{\sum_{{k}\in \Zd}
 c({k}) \phi({x}-{k}): \  c({ k})\in \R \ {\rm for \ all} \   { k} \in \Zd\Big\}\end{equation}
 generated by a real-valued  continuous function $\phi$ with compact support.
Shift-invariant spaces have been used in  wavelet analysis and approximation theory, and sampling in shift-invariant spaces
is a
realistic model for signals with smooth spectrum,
see \cite{AG01, AST05, Bow00, DDR94, jia92} and references therein. Typical examples of shift-invariant spaces include those generated by  refinable functions
(\cite{daubechiesbook, mallatbook}) and
 box splines  $M_\Xi$, which are defined by
\begin{equation}\label{boxspline.def}
\int_{\Rd}g({x})M_\Xi({x}) d{x}=\int_{\R^s} g(\Xi { y}) d{ y}, \ \ g\in L^2(\Rd),
\end{equation}
where  $\Xi 
\in \Z^{d\times s}$  is a matrix with full rank $d$ (\cite{deboorbook, unser99, wahba90}).

The phaseless sampling and reconstruction problem of one-dimensional signals in shift-invariant spaces has been studied in \cite{YC16, pyb14, volker14, shenoy16, T11}.  Thakur  proved in \cite{T11} that one-dimensional real-valued signals in a Paley-Wiener space,  the shift-invariant space generated by the sinc function  $\frac{\sin \pi t}{\pi t}$, could be reconstructed from their phaseless samples  taken
 at more than twice the Nyquist rate.
Reconstruction of one-dimensional signals in a
shift-invariant space
 was studied
in \cite{shenoy16} when frequency magnitude  measurements are available. Not all signals in a shift-invariant space generated by a  compactly supported function are determined, up to a sign, from their magnitude  measurements on the whole line. In \cite{YC16}, the set of signals that can be determined from their magnitude measurement on the real line $\R$ is fully characterized and a fast algorithm is proposed to
reconstruct signals in a shift-invariant space from  their phaseless samples taken on a discrete set with finite sampling density. Up to our knowledge, there is no literature available on the phaseless sampling and reconstruction of high-dimensional signals in a shift-invariant space, which is the core of this paper.

The phaseless sampling and reconstruction of signals in  a shift-invariant space
 is an infinite-dimensional phase retrieval problem, which has received considerable attention in recent years  \cite{daubechies16,  Rima16, alaifari16, cahill15,  YC16, mallat14, pyb14, volker14,    shenoy16, T11}. 
Phase retrieval  plays important roles in signal/speech/image processing and it is a highly nonlinear mathematical problem. Even in the finite-dimensional setting, there are still lots of  mathematical and engineering problems about phase retrieval unanswered.  The reader may refer to  \cite{BCE06, BBCE09,  CSV12, candes13, casazza17,  gao16, jaganathany15, liieee17, schechtman15} and references therein for historical remarks and recent advances.

\smallskip

 The paper is organized as follows.  An introductory  problem about phaseless sampling and reconstruction  in the shift-invariant space  $V(\phi)$ is that
 a  real-valued signal $f$ 
  is determined, up to a sign,  from its magnitude $|f(x)|, x\in \Rd$. An equivalence has been provided in \cite{YC16}, see
 Theorem \ref{separable.tm} in Section \ref{NScond.sec},  that the signal
 $f$ must be nonseparable,  i.e.,   $f$ is not the sum of two nonzero signals in $V(\phi)$
with their supports being essentially disjoint.  
 A natural question arisen
 is  how to  determine nonseparability of a given signal in a  shift-invariant space.
For a one-dimensional signal $f$,
 it is shown in \cite{YC16}
 that $f$ is nonseparable if and only if the amplitude vector  $(c(k))_{k\in \Z}$ does not have consecutive zeros. However, there is no corresponding notion of consecutive zeros in the high-dimensional shift-invariant spaces $V(\phi)$  with $d\ge 2$.
  In Section \ref{NScond.sec}, we introduce
 an undirected graph  ${\mathcal G}_f$ to a high-dimensional signal $f$ in the shift-invariant space $V(\phi)$ 
 and use the connectivity of the graph ${\mathcal G}_f$ to characterize the nonseparability of the signal $f$, i.e.,  it is determined, up to a sign, from the
 magnitude measurements $|f(x)|, x\in \Rd$, see Theorems \ref{necessarypr.thm} and \ref{sufficient.thm}.

 In Section \ref{finiteset.sec},
 we consider the preparatory problem
whether a signal in a shift-invariant space  is determined, up to a
sign, from its phaseless samples taken on a discrete set with finite sampling
density. A necessary condition is that the signal is nonseparable. In Theorem \ref{tofinite.thm}, we show that
the above nonseparable requirement is also sufficient, and furthermore the discrete sampling set can be selected explicitly to be of the form $\Gamma+\Zd$, where $\Gamma$ contains finitely many elements.
However, the above result does not provide us an algorithm to reconstruct
a nonseparable signal from its phaseless samples taken on that discrete set.
In
Section \ref{finiteset.sec}, we introduce a local complement property for the shift-invariant space $V(\phi)$,
 which is similar to the complement property for frames in Hilbert/Banach spaces \cite{alaifari16, BCE06, Bandeira14, cahill15, YC16}.
 Local complement property is closely related to local phase retrievability, see Appendix  \ref{localcomplementproperty.appendix}.
 We apply the local complement property in Theorem \ref{necandsuf.thm} to find another discrete set
with finite sampling density on which nonseparable signals in a shift-invariant space
can be recovered from their phaseless samples.
Moreover,
  our proof of Theorem \ref{necandsuf.thm} is constructive.

 Stability is a pivotal problem in the phaseless sampling and reconstruction.
We aim to find  a good approximation to the original signal $f$, up to a sign,
when its noisy  phaseless samples
 \begin{equation*}
 z_\epsilon(y)=|f(y)|+\epsilon(y), \ y\in X, \end{equation*}
 taken on a discrete set $X$ with finite sampling density are available only,
 where  $\epsilon=(\epsilon(y))_{y\in X}$ is the additive bounded noise.
 A conventional approach  of the above problem is to solve the following
 minimization problem,
\begin{equation*} 
\min \sum_{y\in X} \big ||g(y)|-
z_{ \epsilon}(y)\big|^2 \ {\rm subject \ to} \ g\in V(\phi).
\end{equation*}
However, it is infinite-dimensional and infeasible.
For a shift-invariant sampling set $X$ of the form $\bigcup_{m=1}^M \Gamma_m+\Zd$,
we propose a four-step approach starting from local minimization problems,
\begin{equation*} 
\min \sum_{y\in \Gamma_m+l} \big ||g(y)|-
z_{\epsilon}(y)\big|^2  \ {\rm subject \ to} \ g\in V(\phi),
\end{equation*}
where $1\le m\le M$ and $l\in \Zd$, see \eqref{meps.def1}--\eqref{thresholding} in Section \ref{reconstruction.section}.
 In Theorem \ref{stability.thm}, we establish stability of the four-step approach to reconstruct nonseparable signals in a shift-invariant space.
     The above stability
   implies the nonexistence of   
    resonance phenomenon
  when the noise level is far below  the minimal magnitude of amplitude vector of the original signal, see Remark \ref{remark5.2}.

A fundamental problem in the phaseless sampling and reconstruction is to
 design efficient and robust algorithms for signal reconstruction in a noisy environment.
 Based on the four-step approach in Section  \ref{reconstruction.section}, we
 propose  
 an algorithm  
to reconstruct nonseparable signals in $V(\phi)$ from their noisy phaseless samples on the shift-invariant set $\bigcup_{m=1}^M\Gamma_m+\Zd$.  The complexity of the proposed algorithm
depends almost linearly on the support length of the original nonseparable signal.
 The reader may refer to
\cite{CSV12, candes13, F78, hand2016, iwen16, schechtman15} and references therein 
for various algorithms to reconstruct a finite-dimensional signals from
magnitude  of  its finite frame measurements.
The implementation and performance of the proposed algorithm to recover box spline signals are given in
Section \ref{numerical.sec}.

 Proofs 
  are collected in Section \ref{proof.section}
  and the local complement property is
discussed in  Appendix \ref{localcomplementproperty.appendix}.

 Notation:  Denote  the cardinality of a set $E$  by $\#E$ and    the closed ball in $\Rd$ with center $x$ and radius $R\ge 0$  by $B(x, R)$. 
 Define the power $x^k=\prod_{i=1}^d x_i^{k_i}$
  for  $x=(x_1,\ldots, x_d)^T\in \Rd$ and $k=(k_1, \ldots, k_d)^T\in \Zd$, and  the partial order
 $x\le y$  for $y=(y_1, \ldots, y_d)^T\in \Rd$ if $x_i\le y_i, 1\le i\le d$.

\section{Phase retrievability, nonseparability and connectivity}\label{NScond.sec}

 The phase retrievability of a  real-valued signal on  $\Rd$ is  whether it is determined, up to a sign, from its magnitude measurements.  It is characterized in \cite{YC16} as follows.

 \begin{theorem}\label{separable.tm}
 Let $\phi$ be a real-valued continuous function with compact support,  and
  $V(\phi)$  be the shift-invariant space  in \eqref{vphi.def} generated by $\phi$.
Then a  signal $f\in V(\phi)$ is determined, up to a sign,  by its magnitude measurements $|f({x})|, {x}\in \Rd$, if and only if $f$ is nonseparable, i.e.,
there does not  exist nonzero  signals  $f_1$ and $f_2$ in $V(\phi)$ such that
\begin{equation}\label{separable.condition}
f=f_1+f_2\ \ {\rm and}\  \ f_1 f_2=0.\end{equation}
\end{theorem}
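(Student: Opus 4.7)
The plan is to prove the equivalence by exploiting the linear structure of $V(\phi)$ together with the elementary algebraic identity $(f+g)(f-g)=f^2-g^2$.

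For the necessity ($\Rightarrow$), I would prove the contrapositive. Assume $f$ is separable, so $f=f_1+f_2$ with $f_1,f_2\in V(\phi)$, both nonzero, and $f_1f_2\equiv 0$. Set $g:=f_1-f_2$. Since $V(\phi)$ is a real linear space, $g\in V(\phi)$. The identity $g^2 = f_1^2 - 2f_1f_2 + f_2^2 = f_1^2 + f_2^2 = f^2$ (where the $2f_1f_2$ terms vanish by the disjoint-support condition) gives $|g(x)|=|f(x)|$ for every $x\in\R^d$. Yet $g\ne\pm f$, because $g=f$ would force $f_2=0$ and $g=-f$ would force $f_1=0$, contradicting nonzeroness. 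Hence $f$ is not determined up to a sign by its magnitudes.

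For the sufficiency ($\Leftarrow$), suppose $g\in V(\phi)$ satisfies $|g(x)|=|f(x)|$ for all $x\in\R^d$. Define
\begin{equation*}
f_1:=\tfrac{1}{2}(f+g),\qquad f_2:=\tfrac{1}{2}(f-g).
\end{equation*}
Both belong to $V(\phi)$ by linearity, and $f=f_1+f_2$. Moreover,
\begin{equation*}
f_1 f_2 = \tfrac{1}{4}(f^2-g^2)\equiv 0
\end{equation*}
on $\R^d$, since $|f|=|g|$ pointwise. If $f$ is nonseparable, the defining condition \eqref{separable.condition} forces $f_1\equiv 0$ or $f_2\equiv 0$, which respectively yields $g=-f$ or $g=f$. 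So $f$ is determined up to a sign.

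There is no serious obstacle: the argument is essentially a one-line consequence of $f^2-g^2=(f+g)(f-g)$ combined with the fact that $V(\phi)$ is closed under real linear combinations. The only point needing a brief verification is that $f_1,f_2$ defined by these half-sums genuinely lie in $V(\phi)$, which is immediate because the coefficient sequences of $f$ and $g$ are real and $V(\phi)$ is defined with real coefficients; continuity of $\phi$ together with pointwise equality $|f|=|g|$ ensures that $f_1f_2=0$ holds everywhere, not just almost everywhere, so no measure-theoretic subtleties arise.
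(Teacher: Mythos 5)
Your proof is correct: both directions are sound, and the key identity $f^2-g^2=(f+g)(f-g)$ together with the closure of $V(\phi)$ under real linear combinations is all that is needed. Note that the paper itself does not prove Theorem \ref{separable.tm} — it imports the result from \cite{YC16} — so there is no internal proof to compare against; your argument (decomposing $f=\tfrac12(f+g)+\tfrac12(f-g)$ for sufficiency and perturbing to $f_1-f_2$ for necessity) is the standard one for this equivalence and is complete as written.
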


\smallskip

The question arisen  
 is  how to  determine nonseparability of a signal in a  shift-invariant space.
   To answer the above question, we need the one-to-one correspondence
         between an amplitude vector $c$ and
                  a signal $f$ in the shift-invariant space $ V(\phi)$,
   \begin{equation}
    \label{gli.def}
    c:=(c(k))_{k\in \Zd} \longmapsto \sum_{k\in \Zd} c(k)\phi(\cdot-k)=:f\in V(\phi),
    \end{equation}
which is known as
      the {\em global linear independence} of the generator $\phi$  \cite{ron90, jia92, ron89}. 
For $d=1$, the nonseparability of a signal in  a shift-invariant space is characterized in \cite{YC16}
 that its amplitude vector  does not have consecutive zeros. However, there is no corresponding notion of consecutive zeros in the high-dimensional setting $(d\ge 2)$. 
To characterize the nonseparability of signals  on $\Rd$ with $d\ge 2$, we 
introduce an undirected graph for a signal in the shift-invariant space $V(\phi)$
generated by a real-valued continuous function $\phi$ with compact support.

\begin{definition} {\rm
For any $f(x)=\sum_{{k}\in \Zd} c({k}) \phi({x}-{k})\in V(\phi)$,
define an {\em undirected graph}
\begin{equation}\label{graphG.def}
{\mathcal G}_f:=(V_f, E_f),\end{equation}
where
 the vertex set
 $$V_f=\{{k}\in \Zd: \ c({k})\neq  0\}$$
  contains  supports of  the amplitude vector of the signal $f$,  and
\begin{equation*} E_f = \big\{({k},{k}')\in  V_f\times V_f: \  \ { k}\neq {k}' \  {\rm and} \  \phi({x}-{k})
\phi({x}-{k}') \neq 0 \ {\rm for \  some} \ {x}\in  \Rd\big\}
\end{equation*}
is the edge set associated with the signal $f$.
}
\end{definition}

The graph ${\mathcal G}_f$  in \eqref{graphG.def} is well-defined for any signal $f$ in the shift-invariant space $V(\phi)$ when $\phi$ has global linear independence.
Moreover, 
\begin{equation}
(k,k')\in E_f \ {\rm if\ and\ only\ if} \
k-k'\in \Lambda_\phi,
\end{equation}
 where
$\Lambda_\phi$ contains all $k\in \Zd$ such that
\begin{equation}\label{necandsuf.thm.eq2}
S_k:=\{{x}: \ \phi({x}) \phi({x}-k)\ne 0\}\ne \emptyset.
\end{equation}
In the following theorem, we show that
  connectivity  of  the graph ${\mathcal G}_f$
is a necessary condition for the
 nonseparability of the signal $f\in V(\phi)$.

\begin{theorem}\label{necessarypr.thm}  Let $\phi$ be
  a compactly supported  continuous function on $\R^d$ with
global linear independence, and $V(\phi)$ be the shift-invariant space
\eqref{vphi.def} generated by $\phi$. If  $f\in V(\phi)$
 is nonseparable,
then the  graph ${\mathcal G}_f$ in \eqref{graphG.def} is connected.
\end{theorem}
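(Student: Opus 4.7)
I would prove the contrapositive: if $\mathcal{G}_f$ is disconnected, then $f$ is separable in the sense of \eqref{separable.condition}.

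Suppose $\mathcal{G}_f=(V_f,E_f)$ is disconnected. Then the vertex set $V_f$ splits into two nonempty disjoint subsets $V_1$ and $V_2$ with no edge of $E_f$ joining them. By the definition of $E_f$, this means that for every $k\in V_1$ and every $k'\in V_2$ (note $k\neq k'$ since $V_1\cap V_2=\emptyset$), the product $\phi(x-k)\phi(x-k')$ vanishes identically in $x\in\mathbb{R}^d$. I would use this disjoint-support information as the central combinatorial input.

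Next I would define the natural decomposition
\[
f_1(x):=\sum_{k\in V_1}c(k)\phi(x-k),\qquad f_2(x):=\sum_{k\in V_2}c(k)\phi(x-k),
\]
so that $f=f_1+f_2$. Since $\phi$ has global linear independence and each $V_i$ contains indices with nonzero coefficients, the map \eqref{gli.def} restricted to the support set $V_i$ yields $f_i\neq 0$. Multiplying out, the cross sum
\[
f_1(x)f_2(x)=\sum_{k\in V_1}\sum_{k'\in V_2}c(k)c(k')\phi(x-k)\phi(x-k')
\]
vanishes term by term for every $x$, so $f_1f_2\equiv 0$. Thus $f$ is separable, contradicting the hypothesis; by contrapositive, $\mathcal{G}_f$ must be connected.

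I do not expect any real obstacle here. The one subtle point worth spelling out is the use of global linear independence to guarantee $f_1,f_2\neq 0$; without it one could only conclude the decomposition of the amplitude vector into two nonzero pieces. Everything else is a direct translation between the graph-theoretic notion (no edge across the partition) and the analytic condition (pointwise product vanishes), made possible precisely by the way $E_f$ was defined.
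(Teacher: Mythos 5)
Your proof is correct and follows essentially the same route as the paper: the paper also argues from a disconnection of $\mathcal{G}_f$, forms the same decomposition $f=f_1+f_2$ over the two vertex classes, uses global linear independence to get $f_1,f_2\neq 0$, and identifies edges with non-vanishing products $\phi(\cdot-k)\phi(\cdot-k')$. The only difference is presentational (the paper runs it as a contradiction, extracting an edge from a point where $f_1f_2\neq 0$, while you verify $f_1f_2\equiv 0$ term by term), which is the same argument read in the opposite direction.
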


%
 Before stating sufficiency for the connectivity  of  the graph ${\mathcal G}_f$,
we recall a concept of
 local linear independence on an open set.

\begin{definition}  {\rm
 Let $\phi$ be a continuous function with compact support and $A$ be an open set. We say that $\phi$ has {\em local linear independence on  $A$ } if  $\sum_{{ k}\in \Zd} c({ k})\phi({ x}-{k}) =0$ for all $x\in A$ implies that $ c({k})=0$
for all ${k}\in \Zd$  satisfying $\phi({ x}-{k})\not\equiv  0$ on $A$.
} \end{definition}

 The global linear independence of a compactly supported function  $\phi$
can be interpreted as its local linear independence  on $\Rd$ (\cite{ron90, sun93}). 
Define
 \begin{equation}\label{phiA.def} \Phi_A({x}):=\big(\phi({ x}-{ k})\big)_{{k}\in K_A},\ {x}\in A\end{equation}
 and
 \begin{equation}\label{KA.def}
K_A:=\{{k}\in \Z^d: \ \phi(\cdot-{k})\not\equiv  0\ {\rm  on} \ A\}. \end{equation}
One may 
verify that $\phi$ has  local linear independence on  $A$
 if and only if the dimension of the
linear space spanned by  $\Phi_A({x}), x\in A$,
is  the cardinality of the set $K_A$.  The above characterization  can be used to verify the local linear independence on a bounded open set,
especially when $\phi$ has the explicit expression. For instance, one may verify that the generator $\phi_0$ in  Example \ref{hatexample1} below has local linear independence on $(0, 1)$, but it is locally linearly
dependent on $(0, 1/2)$ and $(1/2, 1)$.

The local linear independence on any open sets and global linear independence are equivalent for some compactly supported functions, such as box splines and one-dimensional refinable functions (\cite{deboor83, dahmen85,  dahmen83, jia85,  S10}). In the following theorem, we show that the converse in Theorem \ref{necessarypr.thm} is also true
 if  the generator $\phi$ is assumed to have local linear independence on any open set.

\begin{theorem}\label{sufficient.thm}
Let $\phi$ be a compactly supported  continuous function on $\R^d$ with  local linear independence on any open set, and  $f$ be a signal in the shift-invariant space $V(\phi)$.
 If  the graph ${\mathcal G}_f$ in \eqref{graphG.def} is connected, then  $f$ is nonseparable. 
\end{theorem}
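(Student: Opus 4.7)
The plan is to prove the contrapositive: assuming $f$ is separable, I will construct a nontrivial partition $V_f = V_1 \sqcup V_2$ of the vertex set with no edges between the two pieces, exhibiting a disconnection of ${\mathcal G}_f$. Concretely, write $f = f_1 + f_2$ with nonzero $f_i \in V(\phi)$ satisfying $f_1 f_2 \equiv 0$, let $c_i$ denote the amplitude vector of $f_i$, and set $A_i := \{x \in \Rd : f_i(x) \neq 0\}$, which are nonempty disjoint open sets by continuity. The candidate partition is
\[
V_i := \{k \in V_f : \phi(\cdot - k) \not\equiv 0 \text{ on } A_i\}, \quad i = 1, 2.
\]

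Three things would need verification. First, $V_1 \cap V_2 = \emptyset$: if some $k$ lies in both, applying local linear independence of $\phi$ on $A_1$ to the identity $f_2 \equiv 0$ on $A_1$ forces $c_2(k) = 0$, and symmetrically on $A_2$ gives $c_1(k) = 0$, hence $c(k) = c_1(k) + c_2(k) = 0$, contradicting $k \in V_f$. Second, $V_f = V_1 \cup V_2$: if $k \in V_f$ lies in neither, the open set $B_k := \{x : \phi(x-k) \neq 0\}$ is nonempty and disjoint from $A_1 \cup A_2 = \{f \neq 0\}$, so $f \equiv 0$ on $B_k$, and local linear independence on $B_k$ (with $k \in K_{B_k}$ by construction) forces $c(k) = 0$, a contradiction. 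Third, both $V_1$ and $V_2$ are nonempty: at any point of $A_i$, $f$ is nonzero, so some $k \in V_f$ must have $\phi(\cdot - k)$ nonzero there, placing $k$ in $V_i$.

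To rule out edges between $V_1$ and $V_2$ I would use the same local-linear-independence trick. Suppose $k \in V_1$, $k' \in V_2$, and $\phi(x_0 - k)\phi(x_0 - k') \neq 0$. Since $k \notin V_2$, the factor $\phi(\cdot - k)$ vanishes on $A_2$, so $x_0 \notin A_2$; symmetrically $x_0 \notin A_1$. Taking a ball $U$ around $x_0$ on which both translates remain nonzero (by continuity), the same reasoning gives $U \cap (A_1 \cup A_2) = \emptyset$, hence $f \equiv 0$ on $U$, and local linear independence on $U$ forces $c(k) = 0$, a contradiction.

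The main conceptual step is committing to the definition of $V_i$ via translates whose supports meet $A_i$, rather than via the supports of $c_1, c_2$ (which for an arbitrary separating decomposition need not be disjoint). With that choice, every subsequent step reduces to a single application of local linear independence on a carefully chosen open set ($A_i$, $B_k$, or $U$). The only subtle point is that these open sets can be quite irregular, but the hypothesis of local linear independence on \emph{any} open set makes this harmless.
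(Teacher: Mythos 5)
Your proof is correct, but it takes a genuinely different route from the paper's. The paper deduces Theorem \ref{sufficient.thm} in one line from its sampling machinery: Proposition \ref{llilocalcomplement.pr} produces open sets $A_1,\ldots,A_M$ satisfying \eqref{necandsuf.thm.eq1} on which $V(\phi)$ has the local complement property, Theorem \ref{necandsuf.thm} then upgrades connectivity of ${\mathcal G}_f$ to determination of $f$, up to a sign, from phaseless samples (hence from $|f|$ on $\Rd$), and Theorem \ref{separable.tm}, imported from \cite{YC16}, converts phase retrievability into nonseparability. You instead prove the contrapositive directly: from a separating pair $f_1, f_2$ you build the partition $V_f = V_1 \sqcup V_2$, where $V_i$ collects the vertices $k$ whose translates $\phi(\cdot-k)$ do not vanish identically on $A_i = \{f_i \neq 0\}$, and each nontrivial verification (disjointness, covering, absence of crossing edges) reduces to one application of local linear independence on a suitably chosen open set ($A_i$, $B_k$, or the small ball $U$); your observation that $V_i$ must be defined through $A_i$ rather than through ${\rm supp}\, c_i$ is exactly what makes the argument close, and all steps check out (including the implicit use of global linear independence, i.e.\ local linear independence on $\Rd$, to get $c = c_1 + c_2$). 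Your route is self-contained and elementary: it needs neither the local complement property, nor any sampling set construction, nor the equivalence of Theorem \ref{separable.tm}, and it is the natural mirror of the paper's proof of Theorem \ref{necessarypr.thm}. What the paper's detour buys is economy (Theorem \ref{necandsuf.thm} is needed anyway for the sampling results) together with the stronger conclusion that connected signals are recoverable from discrete phaseless samples; your argument proves only the stated theorem, but with strictly weaker logical dependencies.
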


For $d=1$, we have
\begin{equation}\label{onedimensionalconnected}
(k,k')\in E_f \ {\rm if\ and\ only\ if} \
|k-k'|\le L-1,
\end{equation}
provided that
 the support of $\phi$
is   $[0, L]$  for some $L\ge 1$.
This together with Theorems
\ref{necessarypr.thm} and \ref{sufficient.thm} leads to the following result, which is established in \cite{YC16} under different assumptions on the generator $\phi$.

\begin{corollary}
Let $\phi$ be a compactly supported  continuous function on $\R$, and  $f=\sum_{k\in \Z} c(k) \phi(\cdot-k)$ be a signal in the shift-invariant space $V(\phi)$.
If $\phi$ has local linear independence on any open set and its supporting set is $[0,L]$ for some $L\ge 1$, then $f$ is nonseparable if and only if $\sum_{l=0}^{L-2} |c(k+l)|^2\ne 0$
 for all $K_-(f)-L+1<k<K_+(f)+1$, where
$K_-(f)=\inf\{k: c(k)\ne 0\}$ and $K_+(f)=\sup\{k: c(k)\ne 0\}$.
\end{corollary}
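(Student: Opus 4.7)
The plan is to combine Theorems \ref{necessarypr.thm} and \ref{sufficient.thm} with the one-dimensional edge description in \eqref{onedimensionalconnected} and to translate the resulting graph-theoretic condition into the stated arithmetic condition on the amplitude vector $c$. The hypothesis that $\phi$ has local linear independence on every open set implies in particular that $\phi$ has global linear independence, so both theorems apply and yield that $f$ is nonseparable if and only if the graph ${\mathcal G}_f$ is connected. The cases $f=0$ and $L=1$ I would dispose of first, since they reduce the claimed equivalence to a tautology; for the remainder of the argument I assume $f\neq 0$ and $L\geq 2$.

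Next I would use the fact that $V_f\subseteq\Z$ is totally ordered and that, by \eqref{onedimensionalconnected}, the edges of ${\mathcal G}_f$ are precisely the pairs $(k,k')$ with $0<|k-k'|\leq L-1$. From this I would prove that ${\mathcal G}_f$ is connected if and only if every pair $a<b$ of consecutive nonzero positions in $V_f$ (that is, with $c(a),c(b)\neq 0$ and $c(k)=0$ for $a<k<b$) satisfies $b-a\leq L-1$. The forward direction holds because any path in ${\mathcal G}_f$ from a vertex $\leq a$ to a vertex $\geq b$ must either traverse some vertex strictly between $a$ and $b$, which does not exist by consecutiveness, or use a direct edge, forcing $b-a\leq L-1$; the reverse direction is immediate by concatenating such consecutive-nonzero hops.

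Finally I would rewrite the resulting condition as the non-vanishing of $\sum_{l=0}^{L-2}|c(k+l)|^2$ on the stated range. A gap $b-a\geq L$ between two consecutive nonzero positions corresponds, via $k=a+1$, to a block $[k,k+L-2]$ of $L-1$ consecutive indices lying strictly inside $(K_-(f),K_+(f))$ on which $c$ vanishes, and conversely any such vanishing block produces such a gap. The set of indices $k$ where a vanishing block can legitimately occur is $K_-(f)+1\leq k\leq K_+(f)-L+1$; this is contained in the open interval $K_-(f)-L+1<k<K_+(f)+1$ stated in the corollary, and the extra boundary values of $k$ may be included without changing the condition because the corresponding block then necessarily contains $K_-(f)$ or $K_+(f)$ and so contributes a nonzero term to the sum. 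The main obstacle is purely bookkeeping at this last step: one must verify that the range has been chosen precisely so that the two boundary indices are automatically satisfied, eliminating any need for separate case analysis in the statement.
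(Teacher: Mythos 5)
Your proposal is correct and follows the paper's own route: the paper obtains this corollary precisely by combining Theorems \ref{necessarypr.thm} and \ref{sufficient.thm} with the one-dimensional edge description \eqref{onedimensionalconnected}, leaving the translation of graph connectivity into the consecutive-zero condition as an implicit routine verification. Your gap/window bookkeeping, including the observation that the boundary values of $k$ in the stated range are automatically satisfied because the corresponding window contains $K_-(f)$ or $K_+(f)$, is exactly the omitted detail, carried out correctly.
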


 As demonstrated by the following example,  the  connectivity of
 the graph ${\mathcal G}_f$  is not sufficient for the signal $f$ to be nonseparable
 if the local  linear independence assumption on the  generator
$\phi$ is dropped.

\begin{example}\label{hatexample1} {\rm
Define
 $\phi_0(t)= h(4t-1)+h(4t-3)+h(4t-5)-h(4t-7)$, where $h(t)=\max(1-|t|, 0)$ is the hat function supported on $[-1, 1]$. One may easily verify that $\phi_0$ is a continuous  function having global linear independence.
Set
 $$f_1(t)=\sum_{k\in \Z} \phi_0(t-k)\ \ {\rm and}\ \  f_2(t)=\sum_{k\in \Z} (-1)^k \phi_0(t-k).$$
Then $f_1$ and $f_2$ are nonzero signals in $V(\phi_0)$ supported on $[0, 1/2]+\Z$ and $[1/2, 1]+\Z$ respectively, and
$f_1(t)f_2(t)=0$ for all   $t\in \R$.
Hence $f_1\pm 2 f_2$
 have the same magnitude measurement $|f_1|+2|f_2|$ on the real line 
 but they are different, even up to a sign, i.e., $f_1+2f_2\not\equiv \pm (f_1-2f_2)$.
   On the other hand, one may 
   verify that
  their associated graphs
 ${\mathcal G}_{f_1\pm 2f_2}$ 
 are connected.
}\end{example}

 Consider  a continuous solution
 $\phi$ of a refinement equation
\begin{equation}\label{refinement.def}
\phi(x)=\sum_{n=0}^N a(n) \phi(2x-n)\ \  {\rm and}  \ \  \int_{\R} \phi(x)dx=1 \end{equation}
with global linear independence,  where $\sum_{n=0}^N a(n)=2$  and $N\ge 1$ (\cite{daubechiesbook, mallatbook}).  The B-spline $ 
B_N$ of order $N$, which is  obtained
by convolving the indicator function $\chi_{[0,1)}$ on the unit interval  $N$ times,
satisfies the above refinement equation
(\cite{unser99, wahba90}).
The function $\phi$  in \eqref{refinement.def}
has support $[0, N]$ and
 it  has local linear independence on any open set if and only if it has global linear independence
 (\cite{lemarie91, meyer91, 
  S10}). Therefore we  have the following result for wavelet signals by Theorems \ref{necessarypr.thm} and \ref{sufficient.thm}, which is also established in
 \cite{YC16} with a different approach.

 \begin{corollary}\label{wavelet.cor}
 Let $\phi$ satisfy the refinement equation \eqref{refinement.def} and have global linear independence. Then $f\in V(\phi)$ is nonseparable
         if and only if the graph ${\mathcal G}_f$ in \eqref{graphG.def} is connected.
 \end{corollary}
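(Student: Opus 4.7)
The plan is to obtain this corollary as an immediate consequence of Theorems \ref{necessarypr.thm} and \ref{sufficient.thm}, with the only substantive input being the known equivalence of global and local linear independence for refinable functions.

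For the forward direction, suppose $f\in V(\phi)$ is nonseparable. The generator $\phi$ is assumed to have global linear independence, so Theorem \ref{necessarypr.thm} applies directly and the graph ${\mathcal G}_f$ is connected. This direction uses nothing beyond what is already in hand.

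For the reverse direction, suppose ${\mathcal G}_f$ is connected. I would like to invoke Theorem \ref{sufficient.thm}, but that theorem requires local linear independence of $\phi$ on every open set, whereas the hypothesis only gives global linear independence. The bridge is supplied by the preceding paragraph in the excerpt: for a continuous compactly supported solution $\phi$ of the refinement equation \eqref{refinement.def}, the property of local linear independence on every open set is equivalent to global linear independence (as shown in \cite{lemarie91, meyer91, S10}). Once this equivalence is cited, $\phi$ automatically satisfies the hypothesis of Theorem \ref{sufficient.thm}, and applying that theorem gives the nonseparability of $f$.

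The main obstacle, if any, is justifying the passage from global to local linear independence for refinable $\phi$, and the strategy is simply to reference the cited results rather than reprove them; all other steps are direct appeals to theorems already established in the excerpt. Combining the two directions completes the proof.
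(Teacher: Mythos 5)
Your proposal is correct and matches the paper's own argument exactly: the paper likewise cites the equivalence of global and local linear independence for refinable functions (\cite{lemarie91, meyer91, S10}) and then deduces the corollary directly from Theorems \ref{necessarypr.thm} and \ref{sufficient.thm}. Nothing further is needed.
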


 The  local linear independence requirement in Theorem \ref{sufficient.thm} can be verified
for  box splines  $M_\Xi$ in \eqref{boxspline.def}.
It is known that the box spline $M_\Xi$  has local linear independence on any open set if and only if
all $d\times d$ submatrices  of $\Xi$ have   determinants being either $0$ or $\pm 1$
if and only if it
has  global linear independence
(\cite{deboor83,  dahmen85,  dahmen83, jia85}).
   The reader may refer to \cite{deboorbook}
 for more properties and applications of  box splines.
As applications of Theorems \ref{necessarypr.thm} and \ref{sufficient.thm}, we have the following result for  box spline signals.

 \begin{corollary}\label{spline222.cor}
 Let $\Xi 
 \in \Z^{d\times s}$
  be a  matrix of full rank $d$ such that  its $d\times d$ submatrices  have   determinants being either $0$ or $\pm 1$.
 Then $f\in V(M_\Xi)$ is nonseparable
if and only if the graph ${\mathcal G}_f$ in \eqref{graphG.def} is connected.
 \end{corollary}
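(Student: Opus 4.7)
The plan is to deduce the corollary by specializing Theorems \ref{necessarypr.thm} and \ref{sufficient.thm} to the generator $\phi = M_\Xi$, and to verify that all hypotheses of those two theorems are automatically available under the determinantal condition on $\Xi$. The three items that need checking are: $M_\Xi$ is a compactly supported continuous function on $\Rd$, $M_\Xi$ has global linear independence (needed for Theorem \ref{necessarypr.thm} and for the graph ${\mathcal G}_f$ in \eqref{graphG.def} to be well-defined), and $M_\Xi$ has local linear independence on every open set (needed for Theorem \ref{sufficient.thm}).

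For the first item, I would note that the definition \eqref{boxspline.def} makes $M_\Xi$ supported on the zonotope $\Xi[0,1]^s$, which is compact, and that the full rank hypothesis together with the cited properties of box splines makes $M_\Xi$ at least continuous on $\Rd$ under the standard regularity count for box splines (once the determinantal condition is imposed, the degenerate cases that would only produce a distribution are ruled out). For the second and third items, I would invoke the structural characterization of linear independence for box splines recalled in the paragraph preceding the corollary (De Boor--H\"ollig, Dahmen--Micchelli, Jia): the three properties (i) every $d\times d$ submatrix of $\Xi$ has determinant in $\{0,\pm 1\}$, (ii) $M_\Xi$ has global linear independence, and (iii) $M_\Xi$ has local linear independence on every open set, are mutually equivalent. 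The hypothesis of the corollary is exactly (i), so (ii) and (iii) hold.

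With all hypotheses in hand, the equivalence follows at once. For the forward direction, nonseparability of $f\in V(M_\Xi)$ together with (ii) lets me apply Theorem \ref{necessarypr.thm} to conclude that ${\mathcal G}_f$ is connected. For the reverse direction, connectivity of ${\mathcal G}_f$ together with (iii) lets me apply Theorem \ref{sufficient.thm} to conclude that $f$ is nonseparable.

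The main obstacle, if there is one, is not conceptual but bookkeeping: correctly citing the classical equivalence between the determinantal condition on $\Xi$ and the two forms of linear independence, and checking that the box spline $M_\Xi$ meets the minimal continuity requirement used implicitly in Section \ref{NScond.sec}. Beyond these checks, the corollary is a direct specialization of Theorems \ref{necessarypr.thm} and \ref{sufficient.thm} and requires no additional ideas.
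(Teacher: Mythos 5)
Your proposal is correct and is precisely the paper's own route: in the paragraph preceding the corollary, the paper cites the same classical equivalence (de Boor--H\"ollig, Dahmen--Micchelli, Jia) between the determinantal condition on $\Xi$, global linear independence, and local linear independence on every open set, and then obtains the corollary as a direct application of Theorems \ref{necessarypr.thm} and \ref{sufficient.thm}, exactly as you do. The only soft spot is your side remark that the determinantal condition rules out the non-continuous cases --- it does not (for $s=d$ with $\Xi$ unimodular, $M_\Xi$ is an indicator function, hence discontinuous while satisfying the hypothesis) --- but the paper makes the same implicit continuity assumption, so your argument is no less complete than the paper's.
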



\section{Phaseless sampling and reconstruction}\label{finiteset.sec}

In this section, we consider the problem
whether a signal
in the shift-invariant space $V(\phi)$ is determined, up to a sign, from its phaseless samples taken on a discrete  set with finite sampling density.
Here we define the sampling density of a discrete
  set $X\subset \Rd$ by
  $$D(X):=D_+(X)=D_-(X)$$
if its    upper  sampling density $D_+(X)$ and
lower sampling density $D_-(X)$ are the same \cite{AG01,  cjs16, sun06}, where
\begin{equation}\label{samplingrate.def}
D_+(X):=\limsup_{R\to +\infty}\sup_{{x}\in \Rd}   \frac{\#(X\cap B({ x}, R))}{R^d} 
\end{equation}
and
\begin{equation}\label{samplingrate.def2}
D_-(X):=\liminf_{R\to +\infty}\inf_{{x}\in \Rd}   \frac{\#(X\cap
 B({x}, R))}{R^d}. 
\end{equation}
One may easily verify that a shift-invariant set $X=\Gamma+\Zd$ generated by a finite set $\Gamma$ has sampling density $\# \Gamma$.

 To determine a signal, up to a sign, from its phaseless samples taken on a discrete set, a necessary condition is that  the signal is nonseparable (hence phase retrievable). 
  In the next theorem, we show that the above requirement is also sufficient.

 \begin{theorem}\label{tofinite.thm}  Let $\phi$ be a compactly supported continuous function
  and $V(\phi)$ be the shift-invariant space in \eqref{vphi.def} generated by  $\phi$. Then there exists a discrete set $\Gamma\subset (0, 1)^d$ such that
any nonseparable signal $f\in  V(\phi)$ is determined, up to a sign,
 from its phaseless samples on the set $\Gamma+\Zd$ with finite sampling density.
 \end{theorem}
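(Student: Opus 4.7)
The plan is to reduce the phaseless sampling question to a finite-dimensional uniqueness problem on a single cell of $\Rd$ and then invoke Theorem \ref{separable.tm} to pass from equality of magnitudes on all of $\Rd$ to equality up to a global sign. Concretely, let $f=\sum_k c(k)\phi(\cdot-k)$ be nonseparable and suppose $g=\sum_k d(k)\phi(\cdot-k)\in V(\phi)$ satisfies $|f(y)|=|g(y)|$ for every $y\in \Gamma+\Zd$, where $\Gamma\subset (0,1)^d$ is still to be chosen. Introduce
\begin{equation*}
u := f^2-g^2 = \sum_{k,k'\in \Zd}\big(c(k)c(k')-d(k)d(k')\big)\,\phi(\cdot-k)\phi(\cdot-k'),
\end{equation*}
which is a continuous function on $\Rd$ vanishing on $\Gamma+\Zd$. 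Since Theorem \ref{separable.tm} guarantees that nonseparable signals are phase retrievable from magnitude data on all of $\Rd$, it suffices to force $u\equiv 0$.

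The next step is to localize $u$ to a single fundamental cell. Because $\phi$ has compact support, the set $K:=\{m\in\Zd:\phi(\cdot-m)\not\equiv 0\text{ on }(0,1)^d\}$ is finite, so the restriction of $u$ to the cell $(0,1)^d+l$ lies, after translation by $-l$, in the finite-dimensional space
\begin{equation*}
W := \Span\big\{\phi(\cdot-m)\phi(\cdot-m'):m,m'\in K\big\}
\end{equation*}
of continuous functions on $\overline{(0,1)^d}$. Thus it is enough to choose $\Gamma\subset (0,1)^d$ so as to be a \emph{uniqueness set} for $W$, meaning that any $w\in W$ with $w|_\Gamma=0$ must vanish identically on $(0,1)^d$. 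Once such a $\Gamma$ is fixed, translation invariance gives that $\Gamma+l$ is a uniqueness set for the translated copy of $W$ living on $(0,1)^d+l$, for every $l\in \Zd$.

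Existence of the uniqueness set $\Gamma$ is a standard fact about finite-dimensional spaces of continuous functions: proceed inductively, at each stage choosing a point in $(0,1)^d$ where some element of the current residual subspace fails to vanish, and cutting the subspace down by one dimension. This terminates in at most $\dim W$ steps and produces a finite $\Gamma\subset (0,1)^d$ with $\#\Gamma\le \dim W$. With $\Gamma$ in hand, $u$ vanishes on $\Gamma+l$ for each $l\in\Zd$, so by the uniqueness property $u$ vanishes on every open cell $(0,1)^d+l$; by continuity $u\equiv 0$ on $\Rd$, i.e.\ $|f|\equiv|g|$ on $\Rd$. Theorem \ref{separable.tm} then delivers $g=\pm f$. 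The sampling density of $\Gamma+\Zd$ is $\#\Gamma<\infty$, as required.

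The only genuine technical point is the clean construction of the uniqueness set $\Gamma$, and the inductive argument above handles it without effort since $W$ is finite-dimensional and its elements are continuous; no explicit description of $\Gamma$ is needed for the statement of the theorem, although the bound $\#\Gamma\le \dim W$ and the localization to $W$ already suggest how one would construct $\Gamma$ in practice for specific generators such as box splines.
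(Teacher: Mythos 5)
Your proof is correct and follows essentially the same route as the paper's: localize to the cell $(0,1)^d$ by shift-invariance, observe that $f^2-g^2$ restricted to each cell lies in the finite-dimensional space spanned by products of translates of $\phi$, choose finitely many points in the cell whose evaluations determine that space, conclude $|f|\equiv|g|$ on $\Rd$, and invoke Theorem \ref{separable.tm}. The paper phrases the point selection dually---$\Gamma$ is chosen so that the outer products $\Phi_{(0,1)^d}(\gamma)\Phi_{(0,1)^d}(\gamma)^T$ form a basis of the matrix space $W_{(0,1)^d}$, which yields the reproducing formula $|f(x)|^2=\sum_{\gamma\in\Gamma}d_\gamma(x)|f(\gamma)|^2$---but that condition is equivalent to your uniqueness-set condition, so the two arguments coincide in substance.
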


For a compactly supported function $\phi$ and a bounded open set $A$,
let
\begin{equation}\label{WPhiA.def}
W_{ A}\ {\rm  be \ the\ linear \ space\ spanned \ by} \ \Phi_{A}({x}) (\Phi_{A}({x}))^T,\ x\in A,
\end{equation}
where $\Phi_A$ is given in \eqref{phiA.def}.
Observe that for any bounded set $A$, the space $W_A$   spanned by
outer products  $\Phi_{A}({x}) (\Phi_{A}({x}))^T, x\in A$,
is of finite dimension.  Therefore there exists a finite set $\Gamma \subset A$
such that
outer products
$\Phi_{A}({\gamma}) (\Phi_{A}({\gamma}))^T, \gamma\in \Gamma$,
is a basis   of the linear space  $W_{A}$.
In the proof of Theorem \ref{tofinite.thm}, we use $A=(0, 1)^d$ and apply the above procedure to select
 the finite set $\Gamma$. With the above selection of the set $\Gamma$,
 \begin{equation}\label{dimintofinite.eq}
 \# \Gamma=\dim  W_{(0,1)^d},\end{equation}
 and
 $|f(x)|^2, x\in \Rd$, are determined by $|f(\gamma)|^2, \gamma \in \Gamma+\Zd$, see Section \ref{tofinite.section} for the detailed  proof.

 As symmetric matrices in  the space $W_{(0, 1)^d}$ are of size $\# K_{(0,1)^d}$, we have the following result  about the sampling density.

\begin{corollary}\label{corollary32} Let  $\phi$ and $V(\phi)$ be as in Theorem
\ref{tofinite.thm}. Then
 any nonseparable signal $f\in V(\phi)$  is determined from its phaseless samples on
a shift-invariant set $\Gamma+\Zd$ with sampling density
$$D(\Gamma+\Zd)\le \dim  W_{(0,1)^d}\le \frac{1}{2} \# K_{(0,1)^d} (\#K_{(0,1)^d}+1),$$
where $K_{(0,1)^d}$ is in \eqref{KA.def}.
\end{corollary}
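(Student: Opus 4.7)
The plan is to read off the conclusion directly from the construction of $\Gamma$ used in the proof of Theorem \ref{tofinite.thm} together with an elementary dimension count on the space $W_{(0,1)^d}$.

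First, I would invoke Theorem \ref{tofinite.thm} with the specific choice of $\Gamma\subset(0,1)^d$ described in the paragraph immediately preceding the corollary: pick finitely many points $\gamma\in(0,1)^d$ so that the outer products $\Phi_{(0,1)^d}(\gamma)\,(\Phi_{(0,1)^d}(\gamma))^T$ form a basis of $W_{(0,1)^d}$. By construction this gives equation \eqref{dimintofinite.eq}, namely $\#\Gamma=\dim W_{(0,1)^d}$, and Theorem \ref{tofinite.thm} guarantees that every nonseparable $f\in V(\phi)$ is recovered up to a sign from its phaseless samples on $\Gamma+\Zd$.

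Next, I would compute the sampling density. Since $\Gamma$ is finite and $\Gamma\subset(0,1)^d$, the translates $\gamma+\Zd$ for $\gamma\in\Gamma$ are pairwise disjoint subsets of $\Rd$ and each has density $1$ in the sense of \eqref{samplingrate.def}--\eqref{samplingrate.def2}; as noted just after those definitions, this yields $D(\Gamma+\Zd)=\#\Gamma$. Combining with \eqref{dimintofinite.eq} gives the first inequality $D(\Gamma+\Zd)\le\dim W_{(0,1)^d}$.

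For the second inequality, I would observe that every generator $\Phi_{(0,1)^d}(x)\,(\Phi_{(0,1)^d}(x))^T$ of $W_{(0,1)^d}$ is a symmetric matrix of order $\#K_{(0,1)^d}$, hence $W_{(0,1)^d}$ sits inside the linear space of symmetric $\#K_{(0,1)^d}\times\#K_{(0,1)^d}$ matrices, whose dimension is $\tfrac{1}{2}\#K_{(0,1)^d}(\#K_{(0,1)^d}+1)$. This yields the stated upper bound.

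There is no substantive obstacle here: the whole argument is a bookkeeping step on top of Theorem \ref{tofinite.thm}, and the only thing to be careful about is citing the correct piece of the earlier construction (the identity $\#\Gamma=\dim W_{(0,1)^d}$) and the density computation $D(\Gamma+\Zd)=\#\Gamma$ for a shift-invariant set generated by a finite $\Gamma\subset(0,1)^d$.
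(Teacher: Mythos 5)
Your proposal is correct and follows exactly the paper's route: the corollary is read off from the construction preceding it, namely the choice of $\Gamma$ so that $\Phi_{(0,1)^d}(\gamma)(\Phi_{(0,1)^d}(\gamma))^T$, $\gamma\in\Gamma$, is a basis of $W_{(0,1)^d}$ (giving \eqref{dimintofinite.eq}), the fact that a shift-invariant set generated by a finite $\Gamma\subset(0,1)^d$ has density $\#\Gamma$, and the embedding of $W_{(0,1)^d}$ into the space of symmetric matrices of order $\#K_{(0,1)^d}$. Your explicit verification of the density identity via disjoint lattice cosets is a minor elaboration of a step the paper simply asserts; otherwise the two arguments coincide.
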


The explicit construction of a discrete set  with finite sampling density in Theorem
\ref{tofinite.thm} does not provide an algorithm to reconstruct
a nonseparable signal from its phaseless samples taken on that discrete set.
Considering the phaseless reconstruction of signals in a shift-invariant space, we introduce a local complement property on a set.


\begin{definition}\label{lcp.def0} 
We say that the shift-invariant space $V(\phi)$ has local complement property on a set $A$ if
for any $A'\subset A$, there does not exist  $f, g\in V(\phi)$ such that $f, g\not\equiv 0$ on $A$, but $f({ x})=0$ for all ${x}\in A'$ and $g({y})=0$ for all ${ y}\in A\backslash A'$.
%
 \end{definition}

The  local complement property on the Euclidean  space $\Rd$ is the  complement property  in \cite{YC16}
  for ideal sampling functionals on $V(\phi)$, cf. the complement property for frames in Hilbert/Banach spaces (\cite{alaifari16, BCE06, Bandeira14, cahill15}). Local complement property is closely related to local phase retrievability.
In fact, following the argument in \cite 
{YC16} we have that $V(\phi)$ has the local complement property on $A$ if and only if all signals in $V(\phi)$ is {\em local phase retrievable} on $A$, i.e., for any $ f, g\in V(\phi)$ satisfying $|g({x})|=|f({x})|, { x}\in A$, there exists $\delta\in \{-1, 1\}$ such that $g({ x})=\delta f({x})$ for all ${x}\in A$.  More discussions on the local complement property  is given in Appendix \ref{localcomplementproperty.appendix}.

\begin{theorem}\label{necandsuf.thm}
Let $A_1, \cdots, A_M$ be bounded open sets
and $\phi$ be a compactly supported continuous function such that $\phi$
has local linear independence on $A_m, 1\le m\le M$,
and
\begin{equation}\label{necandsuf.thm.eq1}
S_{k}\cap \big(\cup_{m=1}^M (A_m+\Zd)\big)\ne \emptyset \end{equation}
for all $k\in \Zd$ with  $S_k$ in \eqref{necandsuf.thm.eq2} being nonempty.
 If the shift-invariant space $V(\phi)$ has  local complement property on $A_m, 1\le m\le M$, then there exists  a finite set $\Gamma\subset \cup_{m=1}^M  A_m$ such that the following statements are equivalent for any signal $f\in V(\phi)$:
 \begin{itemize}
 \item[{(i)}] The signal $f$ is determined, up to a sign, from its magnitude measurements on $\Rd$.

  \item[{(ii)}] The graph $\mathcal G_f$ in \eqref{graphG.def} is connected.

  \item[{(iii)}]  The signal $f$ is determined, up to a sign, from its phaseless samples $|f(y)|, y\in \Gamma+\Zd$.
      \end{itemize}
\end{theorem}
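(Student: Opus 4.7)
I would establish the three-way equivalence through the cycle (i)$\,\Rightarrow\,$(ii)$\,\Rightarrow\,$(iii)$\,\Rightarrow\,$(i). The direction (iii)$\,\Rightarrow\,$(i) is immediate, since phaseless samples on $\Gamma+\Zd$ form a subset of the magnitude measurements on $\Rd$. For (i)$\,\Rightarrow\,$(ii), I would combine Theorem \ref{separable.tm} (phase retrievability $\iff$ nonseparability) with Theorem \ref{necessarypr.thm} (nonseparability $\Rightarrow$ connectivity of $\mathcal G_f$); the latter requires global linear independence of $\phi$, which I would first deduce from the hypothesized local linear independence on a single $A_m$ by translating: if $\sum_k c(k)\phi(\cdot-k)\equiv 0$ on $\Rd$, then local linear independence on each $A_m+l$ forces $c(k)=0$ for all $k\in K_{A_m}+l$, and $\bigcup_{l\in\Zd}(K_{A_m}+l)=\Zd$ whenever $K_{A_m}\ne\emptyset$.

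The heart of the proof is (ii)$\,\Rightarrow\,$(iii). First I would construct $\Gamma$: since $W_{A_m}$ is finite-dimensional, choose a finite $\Gamma_m\subset A_m$ for which the outer products $\{\Phi_{A_m}(\gamma)\Phi_{A_m}(\gamma)^T:\gamma\in\Gamma_m\}$ span $W_{A_m}$, and set $\Gamma:=\bigcup_{m=1}^M\Gamma_m$. Writing
\[
|f(x)|^2=\sum_{k,k'\in K_{A_m+l}}c(k)c(k')\phi(x-k)\phi(x-k'),\qquad x\in A_m+l,
\]
as a linear functional of $\Phi_{A_m+l}(x)\Phi_{A_m+l}(x)^T$, linearity together with the integer-shift invariance of $V(\phi)$ shows that $|f|^2$ on $\Gamma+\Zd$ determines $|f|^2$ on $\bigcup_m(A_m+\Zd)$. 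Consequently, any $g\in V(\phi)$ (with amplitude $c'$) satisfying $|g|\equiv|f|$ on $\Gamma+\Zd$ already satisfies $|g|\equiv|f|$ on every cell $A_m+l$.

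Next I would invoke the local complement property on each translate $A_m+l$ (transferred from the hypothesis by shift invariance): whenever $f\not\equiv 0$ on $A_m+l$ there is a sign $\delta_{m,l}\in\{\pm 1\}$ with $g=\delta_{m,l}f$ on $A_m+l$, and local linear independence upgrades this to $c'(k)=\delta_{m,l}c(k)$ for every $k\in K_{A_m+l}$; when $f\equiv 0$ on $A_m+l$ then $|g|\equiv 0$ there forces $g\equiv 0$ and the identity holds for any sign. Hence $\delta_k:=c'(k)/c(k)$ is unambiguously defined for each $k\in V_f$. The decisive step matches signs along edges of $\mathcal G_f$: for $(k_1,k_2)\in E_f$ one has $k_2-k_1\in\Lambda_\phi$, so hypothesis \eqref{necandsuf.thm.eq1} produces $y_0\in S_{k_2-k_1}\cap(A_{m^*}+l_0)$, and the translated point $x^*:=y_0+k_1\in A_{m^*}+(l_0+k_1)$ satisfies $\phi(x^*-k_j)\ne 0$ for $j=1,2$; hence $k_1$ and $k_2$ both lie in $K_{A_{m^*}+l_0+k_1}$, which forces $\delta_{k_1}=\delta_{m^*,l_0+k_1}=\delta_{k_2}$. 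Connectivity of $\mathcal G_f$ then propagates this to a single global sign $\delta$; for $k\notin V_f$ one has $c(k)=0$, and since every $k\in\Zd$ belongs to some $K_{A_m+l}$ one gets $c'(k)=\delta_{m,l}\cdot 0=0$. Altogether $g=\delta f$. The principal obstacle is precisely this sign-matching step, which requires exhibiting a single cell $A_{m^*}+l^*$ whose index set simultaneously contains both endpoints of an edge in $\mathcal G_f$; this is exactly what hypothesis \eqref{necandsuf.thm.eq1}, combined with the translation trick $x^*=y_0+k_1$, is engineered to provide.
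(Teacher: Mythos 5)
Your proposal is correct and follows essentially the same route as the paper: the same construction of $\Gamma$ from spanning sets of outer products $\Phi_{A_m}(\gamma)\Phi_{A_m}(\gamma)^T$ in $W_{A_m}$, the same passage from samples to $|f|^2$ on each translated cell, the same use of the local complement property plus local linear independence to get per-cell signs on the coefficients, and the same edge-matching argument via \eqref{necandsuf.thm.eq1} followed by propagation along the connected graph $\mathcal G_f$. Your treatment is in places slightly more careful than the paper's (deducing global linear independence of $\phi$ from the local hypothesis before invoking Theorem \ref{necessarypr.thm}, and handling cells where $f\equiv 0$ and indices $k\notin V_f$ explicitly), but these are refinements of the same argument rather than a different approach.
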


The implication (i)$\Longrightarrow$(ii) has been established in Theorem \ref{necessarypr.thm} and the implication (iii)$\Longrightarrow$(i) is obvious.
Write
$f=\sum_{{ k}\in \Zd} c({ k}) \phi(\cdot-{ k})$.
To prove (ii)$\Longrightarrow$(iii), we first  determine
$c({k}), {k}\in {{K}_{A_m}+{l}}$, up to a sign $\epsilon_{m, { l}}\in \{-1, 1\}$, from phaseless samples $|f({\gamma}+{l})|, {\gamma}\in \Gamma$, and then
   we use the connectivity of the graph ${\mathcal G}_f$ to adjust phases $\epsilon_{m, {l}}\in \{-1, 1\},  1\le m\le M, { l}\in \Zd$. Finally  we sew those pieces together to recover amplitudes $c(k), k\in \Zd$, and the signal $f$.
   The detailed argument  will be given in Section \ref{necandsuf.appendix}.
Comparing with the proof of Theorem \ref{tofinite.thm}, we remark that
  our  proof of Theorem \ref{necandsuf.thm}
 is constructive and a reconstruction algorithm can be developed.

\smallskip

 For the case that  the generator $\phi$ has local linear independence on any open set, we can find open sets $A_m, 1\le m\le M$, such that
 \eqref{necandsuf.thm.eq1} holds and $V(\phi)$ has local complement property on $A_m, 1\le m\le M$, see Proposition \ref{llilocalcomplement.pr}. Then from Theorem \ref{necandsuf.thm} we obtain the following corollary, cf. Theorem  
\ref{sufficient.thm} and Corollaries \ref{wavelet.cor} and \ref{spline222.cor}.
 \begin{corollary} Let $\phi$ be a compactly supported continuous function such that $\phi$
has local linear independence on any open set.  Then there exists
a finite set $\Gamma$ such that any nonseparable
 signal  
is determined, up to a sign, from its phaseless samples
taken on the set $\Gamma+\Zd$ with finite sampling density.
 \end{corollary}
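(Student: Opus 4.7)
The plan is to derive this corollary directly from Theorem \ref{necandsuf.thm}, together with the local complement property result promised in the appendix (Proposition \ref{llilocalcomplement.pr}). Because $\phi$ is assumed to have local linear independence on every open set, the local linear independence hypothesis in Theorem \ref{necandsuf.thm} is automatically satisfied on any bounded open sets $A_m$ that we choose. Hence the only remaining hypothesis to secure is the existence of bounded open sets that satisfy both the covering condition \eqref{necandsuf.thm.eq1} and the local complement property of $V(\phi)$ on each of them.

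This is precisely the content of Proposition \ref{llilocalcomplement.pr}, which, as stated in the paragraph preceding the corollary, produces bounded open sets $A_1,\ldots,A_M$ with both properties whenever $\phi$ has local linear independence on every open set. I would feed these sets into Theorem \ref{necandsuf.thm} to obtain a finite set $\Gamma \subset \bigcup_{m=1}^M A_m$ such that the three conditions (i), (ii), (iii) of that theorem are equivalent for every $f \in V(\phi)$.

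To finish, suppose $f \in V(\phi)$ is nonseparable. By Theorem \ref{necessarypr.thm}, the graph $\mathcal{G}_f$ in \eqref{graphG.def} is connected, i.e., condition (ii) holds, and hence by the equivalence with (iii), the signal $f$ is determined up to a sign from its phaseless samples $|f(y)|$, $y \in \Gamma + \Zd$. Since $\Gamma$ is finite, the shift-invariant set $\Gamma+\Zd$ has sampling density $\#\Gamma < \infty$, completing the argument.

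The proof therefore reduces to a clean combination of results already in hand and presents no genuine obstacle beyond invoking them correctly. The substantive work has been absorbed into Theorem \ref{necandsuf.thm} (where the set $\Gamma$ is shown to be chosen uniformly for all $f$ simultaneously) and into Proposition \ref{llilocalcomplement.pr} (where the existence of suitable $A_m$ is established from local linear independence on arbitrary open sets); the main obstacle, such as it is, lies entirely inside those two earlier results rather than in the present derivation.
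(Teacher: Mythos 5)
Your proposal is correct and follows essentially the same route as the paper, which likewise derives this corollary by invoking Proposition \ref{llilocalcomplement.pr} to produce the sets $A_m$ satisfying \eqref{necandsuf.thm.eq1} and the local complement property, and then applying Theorem \ref{necandsuf.thm} (the local linear independence hypothesis being automatic). The only cosmetic difference is that you pass from nonseparability to condition (ii) via Theorem \ref{necessarypr.thm} rather than to condition (i) via Theorem \ref{separable.tm}; both are immediate and equivalent closings of the argument.
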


Take  ${\bf N}=(N_1, \ldots, N_d)^{T}$ with $N_i\ge 2, 1\le i\le d$, and
let $B_{N_i}$ be the B-spline of order $N_i$ (\cite{deboorbook, unser99, wahba90}). Define the box spline function of tensor-product  type 
\begin{equation}\label{boxspline2.def}
B_{\bf N}({x}):= B_{N_1}(x_1)\times\cdots\times B_{N_d}(x_d), \ { x}=(x_1,\ldots, x_d)^T\in \R^d.
\end{equation}
As the restriction of a signal in $V(B_{\bf N})$ on $(0, 1)^d$
is a polynomial of finite degree, the space $V(B_{\bf N})$ has the local complement property on $(0, 1)^d$. Applying Theorem \ref{necandsuf.thm}  with $M=1$ and $A_1=(0, 1)^d$
 leads to the following result for
 tensor-product splines, which is given in \cite{YC16} for $d=1$.

\begin{corollary} \label{tensor.cor}  
Let $X_i$ contain $2N_i-1$ distinct points in $(0, 1), 1\le i\le d$.
Then any nonseparable signal $f\in V(B_{\bf N})$  can be reconstructed, up to a sign, from its phaseless samples on  the set
$X_1\times \ldots\times X_d+\Zd$
with sampling  density $\prod_{i=1}^d (2N_i-1)$.
\end{corollary}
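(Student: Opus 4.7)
The plan is to apply Theorem \ref{necandsuf.thm} with $M=1$ and $A_1=(0,1)^d$, and to identify the abstract sampling set $\Gamma$ with the explicit tensor product grid $X_1\times\cdots\times X_d$. First I would verify the three hypotheses of Theorem \ref{necandsuf.thm}. The covering condition \eqref{necandsuf.thm.eq1} is immediate, since $(0,1)^d+\Zd$ is the complement of a locally finite union of hyperplanes and therefore meets every nonempty open set $S_k$. The local complement property of $V(B_{\bf N})$ on $(0,1)^d$ has already been noted in the paragraph preceding the corollary, and follows from the fact that restrictions to $(0,1)^d$ are polynomials of bounded multidegree. Local linear independence on $(0,1)^d$ reduces via a tensor product argument to the one-dimensional local linear independence of $B_{N_i}$ on $(0,1)$, which holds because $B_{N_i}$ is a refinable function with global linear independence.

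Next I would show that $\Gamma=X_1\times\cdots\times X_d$ implements the sampling set produced in the proof of Theorem \ref{necandsuf.thm}. Writing $f=\sum_{k}c(k)B_{\bf N}(\cdot-k)$ and $c_l=(c(k))_{k\in K_{(0,1)^d}+l}$, for each $l\in\Zd$ and $x\in(0,1)^d$ one has
\begin{equation*}
|f(x+l)|^2=c_l^{\,T}\,\Phi_{(0,1)^d}(x)\,\Phi_{(0,1)^d}(x)^T c_l,
\end{equation*}
which is a polynomial in $x$ of multidegree at most $(2N_1-2,\ldots,2N_d-2)$ and so lies in the polynomial space $\mathcal{P}_{\bf N}$ of dimension $\prod_{i=1}^d(2N_i-1)$. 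By classical one-dimensional Lagrange interpolation applied coordinatewise, the grid $X_1\times\cdots\times X_d$ is unisolvent for $\mathcal{P}_{\bf N}$, so $|f(\cdot+l)|^2$ is uniquely determined on $(0,1)^d$ by the phaseless samples $|f(\gamma+l)|$, $\gamma\in\Gamma$. Equivalently, the outer products $\Phi_{(0,1)^d}(\gamma)\Phi_{(0,1)^d}(\gamma)^T$, $\gamma\in\Gamma$, span the finite-dimensional space $W_{(0,1)^d}$ in \eqref{WPhiA.def}, because any symmetric matrix $\Psi$ annihilating each such outer product forces $\Phi_{(0,1)^d}(x)^T\Psi\,\Phi_{(0,1)^d}(x)\in\mathcal{P}_{\bf N}$ to vanish on the unisolvent grid and hence on all of $(0,1)^d$.

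Combining this with the local complement property on $(0,1)^d$, the phaseless samples $|f(\gamma+l)|$, $\gamma\in\Gamma$, determine $c_l$ up to a global sign $\epsilon_l\in\{-1,1\}$. The sign-reconciliation step in the proof of Theorem \ref{necandsuf.thm} then uses only the connectivity of $\mathcal{G}_f$ (equivalent to nonseparability by Theorems \ref{necessarypr.thm} and \ref{sufficient.thm}) to align the $\epsilon_l$'s across neighbouring coefficient blocks $K_{(0,1)^d}+l$, yielding $f$ up to a single global sign. The sampling density $\prod_{i=1}^d(2N_i-1)$ follows since $\Gamma+\Zd$ is a shift-invariant set with $\#\Gamma=\prod_{i=1}^d(2N_i-1)$. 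The main obstacle I anticipate is not any individual step but the bookkeeping needed to confirm that the abstract construction in the proof of Theorem \ref{necandsuf.thm} is flexible enough to accept any unisolvent spanning family as its sampling grid, rather than being tied to the specific basis of outer products produced there; once this is handled, the rest reduces to elementary tensor product polynomial interpolation.
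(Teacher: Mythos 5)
Your proposal is correct and takes essentially the same route as the paper: apply Theorem \ref{necandsuf.thm} with $M=1$ and $A_1=(0,1)^d$, obtain the local complement property from the fact that restrictions to $(0,1)^d$ are polynomials, and use the multidegree bound $2{\bf N}-{\bf 2}$ on the outer products $\Phi_{(0,1)^d}(x)\Phi_{(0,1)^d}(x)^T$ together with tensor-product unisolvence of $X_1\times\cdots\times X_d$ to conclude that these outer products evaluated on the grid span $W_{(0,1)^d}$. The obstacle you anticipate at the end is not an issue: the proof of Theorem \ref{necandsuf.thm} (through Proposition \ref{finiteset.pr}) only needs the outer products at the sampling points to form a spanning set of $W_{A_m}$, not the particular basis constructed there, as the paper itself notes in the discussion surrounding \eqref{gamma.1}.
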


The detailed proof of the above corollary is given in Section \ref{tensor.proof}.

\smallskip

In  the proof of Theorem \ref{necandsuf.thm}  given in Section \ref{necandsuf.appendix},
the discrete sampling set $\Gamma$ is chosen to be the union of  $\Gamma_m\subset A_m, 1 \le m\le M$,
\begin{equation}\label{gamma.1} \Gamma=\cup_{m=1}^M \Gamma_m,
\end{equation}
so that
outer products
$\Phi_{A_m}({\gamma}) (\Phi_{A_m}({\gamma}))^T, \gamma\in \Gamma_m$,
is a basis (or
 a spanning set) of the linear space  $W_{ A_m}$. 
Therefore we have the following result from  Theorem \ref{necandsuf.thm}.

 \begin{corollary} \label{necandsuf.cor}
 Let  $\phi$ and $A_m, 1\le m\le M$, be as in Theorem \ref{necandsuf.thm}. Then
 any nonseparable signal $f\in V(\phi)$  can be reconstructed from its phaseless samples on
a shift-invariant set $\Gamma+\Zd$ with sampling density
$$D(\Gamma+\Zd)\le \sum_{m=1}^M  {\rm dim} W_{A_m}\le \frac{1}{2}\sum_{m=1}^M \# K_{A_m} (\#K_{A_m}+1).$$
\end{corollary}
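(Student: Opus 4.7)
The plan is to read off the sampling density bound directly from the explicit construction of $\Gamma$ used in the proof of Theorem \ref{necandsuf.thm}, which we are free to assume. Recall from equation \eqref{gamma.1} and the surrounding discussion that the set $\Gamma$ furnished by Theorem \ref{necandsuf.thm} is of the form $\Gamma=\cup_{m=1}^M \Gamma_m$ with $\Gamma_m\subset A_m$, where the finite set $\Gamma_m$ is chosen so that the outer products $\Phi_{A_m}(\gamma)(\Phi_{A_m}(\gamma))^T$, $\gamma\in \Gamma_m$, span the finite-dimensional space $W_{A_m}$. The first step is to refine this choice: by passing to a basis of $W_{A_m}$ among these outer products, we may assume $\#\Gamma_m=\dim W_{A_m}$ for each $1\le m\le M$.

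With this choice, the second step is purely combinatorial. By construction, $\#\Gamma\le \sum_{m=1}^M \#\Gamma_m=\sum_{m=1}^M \dim W_{A_m}$. Since a shift-invariant set of the form $\Gamma+\Zd$ generated by a finite set $\Gamma$ was already noted to have sampling density $D(\Gamma+\Zd)=\#\Gamma$, this yields the first inequality in the stated bound.

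The third step is the second inequality. By definition \eqref{WPhiA.def}, the space $W_{A_m}$ is the linear span of the outer products $\Phi_{A_m}(x)(\Phi_{A_m}(x))^T$ for $x\in A_m$. Each such outer product is a symmetric matrix of size $\#K_{A_m}\times \#K_{A_m}$, so $W_{A_m}$ is a linear subspace of the space of symmetric matrices of that size. The latter has dimension $\tfrac{1}{2}\#K_{A_m}(\#K_{A_m}+1)$, giving $\dim W_{A_m}\le \tfrac{1}{2}\#K_{A_m}(\#K_{A_m}+1)$, and summing over $m$ completes the proof.

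There is essentially no obstacle here: the corollary is a dimension count riding on top of Theorem \ref{necandsuf.thm}. The only point that requires a small verification is that one can indeed extract a spanning set of $W_{A_m}$ of size exactly $\dim W_{A_m}$ from the family of outer products $\{\Phi_{A_m}(x)(\Phi_{A_m}(x))^T:x\in A_m\}$; this is immediate since these outer products already span $W_{A_m}$, so any maximal linearly independent subfamily is a basis.
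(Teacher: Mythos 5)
Your proposal is correct and follows essentially the same route as the paper: the paper obtains the corollary by noting that the set $\Gamma=\cup_{m=1}^M\Gamma_m$ constructed in the proof of Theorem \ref{necandsuf.thm} has each $\Gamma_m$ chosen so that the outer products $\Phi_{A_m}(\gamma)(\Phi_{A_m}(\gamma))^T$, $\gamma\in\Gamma_m$, form a basis of $W_{A_m}$, whence $\#\Gamma\le\sum_m\dim W_{A_m}$, and then bounding $\dim W_{A_m}$ by the dimension $\tfrac12\#K_{A_m}(\#K_{A_m}+1)$ of the space of symmetric matrices of size $\#K_{A_m}$. Your extra verification that a basis can be extracted from the spanning family of outer products, and the observation $D(\Gamma+\Zd)=\#\Gamma$, are exactly the (implicit) ingredients the paper relies on.
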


The discrete set  $\Gamma+\Zd$ chosen in  Corollary \ref{necandsuf.cor}
may have larger sampling density than ${\rm dim} \ W_{(0,1)^d}$ in Corollary \ref{corollary32}. 
Based on the constructive proof in Theorem \ref{necandsuf.thm},   a robust reconstruction algorithm from  phaseless samples taken on the discrete set
is developed in Section  \ref{numerical.sec}.
However, we have difficulty to find a  reconstruction algorithm from the phaseless samples taken on  the set given in Corollary \ref{corollary32}.

\begin{definition} {\rm We say that ${\mathcal M}=\{a_m\in \R^d, 1\le m\le M\}$ is
 a {\em phase retrievable frame} for $\R^d$ if
any vector ${x}\in \R^d$ is determined, up to a sign, from its measurements
$|\langle {x}, a_m\rangle|, a_m\in {\mathcal M}$, and a
{\em minimal phase retrieval frame} for $\R^d$
if any true subset of ${\mathcal M}$ is not a phase retrievable frame.
}
\end{definition}

 After careful examination on the proof of Theorem \ref{necandsuf.thm}, we can
select a subset $\Gamma'$ of  $\Gamma$ such that all nonseparable signals $f$ can be reconstructed from its phaseless samples taken on $\Gamma' + \Z^d$ in a robust manner.

 \begin{theorem}\label{finite.cor}  Let  $A_m, 1\le m\le M$,
  and $\phi$ be as in Theorem \ref{necandsuf.thm}.
Assume that  there exist $\Gamma_m^\prime\subset A_m$ such that
$\Phi_{A_m}(\gamma), \gamma\in \Gamma_m^\prime$, is a phase retrievable frame for $\R^{\# K_{A_m}}$.
Then
 any nonseparable signal $f\in V(\phi)$ is determined, up to a sign, from its phaseless samples $|f(y)|, y \in \Gamma'+\Zd$, where
 \begin{equation}\label{gamma.2}
 \Gamma'=\cup_{m=1}^M \Gamma_m^\prime.
 \end{equation}
      \end{theorem}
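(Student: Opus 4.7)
\textbf{Proof plan for Theorem \ref{finite.cor}.} The plan is to follow the scheme already developed for Theorem \ref{necandsuf.thm}, but with its cell-wise recovery step replaced by a direct appeal to the phase retrievable frame hypothesis. The only genuinely new ingredient is that instead of recovering $|f|^2$ on $A_m+l$ via the outer-product basis of $W_{A_m}$ and then invoking the local complement property, we will recover the finite amplitude vector on $K_{A_m}+l$ in one stroke from finitely many phaseless samples.

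First, I would fix $(m,l)$ with $1\le m\le M$ and $l\in\Zd$ and work on the translated cell $A_m+l$. By the local linear independence of $\phi$ on $A_m$, for any $f=\sum_{k\in\Zd}c(k)\phi(\cdot-k)\in V(\phi)$ and any $\gamma\in A_m$ we have
$$ f(\gamma+l)=\sum_{k\in K_{A_m}+l} c(k)\phi(\gamma+l-k)=\langle c_{m,l},\,\Phi_{A_m}(\gamma)\rangle,$$
where $c_{m,l}:=(c(j+l))_{j\in K_{A_m}}\in\R^{\#K_{A_m}}$. Evaluating at $\gamma\in\Gamma_m'$ and taking absolute values, the assumption that $\{\Phi_{A_m}(\gamma):\gamma\in\Gamma_m'\}$ is a phase retrievable frame for $\R^{\#K_{A_m}}$ is precisely the statement that $c_{m,l}$ is determined up to a global sign $\epsilon_{m,l}\in\{-1,1\}$ by the phaseless samples $|f(\gamma+l)|$, $\gamma\in\Gamma_m'$. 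Running this reconstruction over all $(m,l)$, I would obtain the signed amplitudes $\epsilon_{m,l}\,c(k)$ for every $k\in K_{A_m}+l$, and from condition \eqref{necandsuf.thm.eq1} applied with $k=0$ it follows that every $k_1\in\Zd$ lies in $K_{A_m}+l$ for some pair $(m,l)$, so every entry of $(c(k))_{k\in\Zd}$ is recovered with some sign ambiguity.

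Second, I would synchronize the signs $\epsilon_{m,l}$ using connectivity of the graph $\mathcal{G}_f$. Since $f$ is nonseparable, Theorems \ref{necessarypr.thm} and \ref{sufficient.thm} together with Theorem \ref{separable.tm} imply $\mathcal{G}_f$ is connected. For any edge $(k,k')\in E_f$ we have $k-k'\in\Lambda_\phi$, so by \eqref{necandsuf.thm.eq1} the set $S_{k-k'}$ meets $A_m+\Zd$ for some $m$; translating, I obtain an index $l$ with $\{k,k'\}\subseteq K_{A_m}+l$. Hence the recovered values $\epsilon_{m,l}c(k)$ and $\epsilon_{m,l}c(k')$ share the same sign ambiguity, allowing the relative sign of $c(k)$ and $c(k')$ to be determined. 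Propagating along a spanning tree of $\mathcal{G}_f$, a single global sign $\epsilon\in\{-1,1\}$ determines all coefficients $c(k)$ for $k\in V_f$; for $k\notin V_f$ we have $c(k)=0$, and thus $f$ itself is recovered up to the sign $\epsilon$. The main (and essentially only) technical point is the verification that condition \eqref{necandsuf.thm.eq1} together with the definition of $E_f$ forces every edge of $\mathcal{G}_f$ to be witnessed inside some common $K_{A_m}+l$; beyond this, the argument reuses the sign-propagation mechanism already present in the proof of Theorem \ref{necandsuf.thm}, so I do not expect any further obstruction.
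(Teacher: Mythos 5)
Your proposal is correct and follows essentially the same route as the paper: the phase retrievable frame hypothesis recovers the coefficient block on each $K_{A_m}+l$ up to a sign $\epsilon_{m,l}$, and then the signs are synchronized exactly as in the proof of Theorem \ref{necandsuf.thm}, using \eqref{necandsuf.thm.eq1} to place both endpoints of any edge of $\mathcal{G}_f$ in a common $K_{A_m}+l$ and connectivity (via Theorem \ref{necessarypr.thm}) to propagate a single global sign. The only cosmetic differences are that the paper phrases the argument by comparing two signals $f,g$ with equal phaseless samples rather than as a reconstruction, and that your cell-decomposition identity follows from the definition of $K_{A_m}$ alone rather than from local linear independence.
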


In  Theorem \ref{finite.cor}, the requirement on the sampling set is a bit weaker than the one in  Theorem \ref{necandsuf.thm},
as  for the sampling set $\Gamma=\cup_{m=1}^M \Gamma_m$ in \eqref{gamma.1},
$\Phi_{A_m}({\gamma}),  \gamma\in \Gamma_m$, is  a phase retrievable frame for $\R^{\# K_{A_m}}$, cf. Theorem \ref{localcomplementcriterion.thm}.
We remark that the
 phase retrieval frame property for
 $\Phi_{A}({\gamma}), {\gamma}\in \Gamma$, may not imply that
 their out products
$\Phi_{A}({\gamma}) (\Phi_{A}({\gamma}))^T, \gamma\in  \Gamma$,
form a basis (or a spanning set) of  $W_{A}$ in \eqref{WPhiA.def},  as shown in the following example.

\begin{example} {\rm Let
\begin{equation*}
\phi_1(x)=\left\{\begin{array}{ll} x^3/2 & {\rm if} \ 0\le x<1\\
-x^3+3x^2-2x+1/2 & {\rm if} \ 1\le x<2\\
x^3/2-3x^2+5x-3/2 & {\rm if} \ 2\le x<3\\
0 & {\rm otherwise},
\end{array}
\right.
\end{equation*}
and set $\Phi_1(x)= (\phi_1(x), \phi_1(x+1), \phi_1(x+2))^T, 0\le x<1$. Then
$$\Phi_1(x)
=\frac{1}{2}
\begin{pmatrix} 0\\
1\\
1\end{pmatrix}+
\begin{pmatrix} 0\\
1\\
-1\end{pmatrix} x+ \frac{1}{2}\begin{pmatrix} 1\\
-2\\
1\end{pmatrix} x^3,  
$$
and
\begin{eqnarray*}
\hskip-0.3in & & \Phi_1(x) \Phi_1(x)^T  =   \frac{1}{4}
\begin{pmatrix} 0 & 0 & 0\\
0 & 1 & 1\\
0 & 1 & 1\end{pmatrix}+  \begin{pmatrix} 0 & 0 & 0\\
0 & 1 & 0\\
0 & 0 & -1\end{pmatrix} x
+\begin{pmatrix} 0 & 0 & 0\\
0 & 1 & -1\\
0 & -1 & 1\end{pmatrix} x^2\nonumber\\
\hskip-0.2in & & \quad +  \frac{1}{4} \begin{pmatrix} 0 & 1 & 1\\
1 & -4 & -1\\
1 & -1 & 2\end{pmatrix} x^3 + \frac{1}{2} \begin{pmatrix} 0 & 1 & -1\\
1 & -4 & 3\\
-1 & 3 & -2\end{pmatrix} x^4+ \frac{1}{4}\begin{pmatrix} 1 & -2 & 1\\
-2 & 4 & -2\\
1& -2 & 1\end{pmatrix}
 x^6.
\end{eqnarray*}
Therefore the space spanned by $\Phi_1(x), 0< x<1$, is  $\R^3$,
and  the space $W_{(0,1)}$ spanned by $\Phi_1(x) \Phi_1(x)^T, 0< x<1$, is the $6$-dimensional linear space of
symmetric matrices of size $3\times 3$.
On the other hand, any $3\times 3$ square submatrices of  
$$ \Big(\Phi_1(0)\  \Phi_1\big(\frac{1}{5}\big)\  \Phi_1\big(\frac{2}{5}\big)\  \Phi_1\big(\frac{3}{5}\big)\  \Phi_1\big(\frac{4}{5}\big)\Big)
= \frac{1}{250}
\begin{pmatrix}          0    &  1  &     8  &    27  &   64\\
  125  &  173  &   209  &   221  &  197\\
    125 &    76 &    33 &     2 &   -11 \end{pmatrix}
    $$
    is  nonsingular, 
which implies that
$\Phi_1(m/5), 0\le m\le 4$,
forms a phase retrieval frame for $\R^3$, but their out products do not form a spanning set of the $6$-dimensional space $W_{(0,1)}$.
}\end{example}

The problem how to pare down  a phase retrieval frame to a minimal phase
retrieval frame
 will be discussed in our future work. Using the pare-down technique, we may find a discrete set  $X$ with smaller sampling density such that nonseparable signals
 in the shift-invariant space can be reconstructed from their phaseless samples taken on $X$.

\section{Stability of phaseless sampling and reconstruction}
\label{reconstruction.section}

Stability  is of paramount importance in the phaseless sampling and reconstruction problem.
Consider the scenario that
phaseless samples of a signal  
\begin{equation}\label{signal.def2}
f=\sum_{k\in \Zd} c(k) \phi(\cdot-k)\in V(\phi)
\end{equation}
taken on a shift-invariant set $\Gamma+\Zd$ are corrupted
by the additive noise,
\begin{equation}\label{noisydata.def}
{ z}_{\epsilon}({y})= |f({y})|+\epsilon({ y}), \ {y}\in \Gamma+\Zd,
\end{equation}
 where   $ \epsilon =(\epsilon({y}))_{y\in \Gamma+\Zd}$
 has the bounded noise level
$\|\epsilon\|_{\infty}=\max_{y\in \Gamma+\Zd} |\epsilon(y)|$,
and
$\Gamma=\cup_{m=1}^M \Gamma_m$ is either as in \eqref{gamma.1} or
 in \eqref{gamma.2}.  In this section, we construct
 an 
 approximation
 \begin{equation}\label{fepsilon.def}
 f_{\epsilon}=\sum_{k\in \Zd} c_{ \epsilon}(k) \phi(\cdot-k)\in V(\phi),\end{equation}
 up to a sign, to the original signal $f$ in \eqref{signal.def2}  when  the noisy phaseless samples  \eqref{noisydata.def} are available only.

Let
\begin{equation}\label{omegaset.def}
\Omega_{m}=\{k\in \Zd:\ \phi({\gamma-k})\ne 0 \ {\rm for  \ some } \ {\gamma}\in \Gamma_m\}, \ 1\le m\le M,
\end{equation}
and define the hard threshold function $H_\eta, \eta\ge 0$,  by
 $$H_\eta(t)=\left\{ \begin{array}{ll} t  & {\rm if} \ |t|\ge \eta\\
 0  & {\rm if}\ |t|<\eta.
 \end{array}\right. $$
Based on the constructive proofs of Theorems \ref{necandsuf.thm} and  
\ref{finite.cor}, we propose the following four-step approach  with its implementation discussed in Section \ref{numerical.sec}.
\begin{tcolorbox}[colback=white,colframe=black]
 \begin{itemize}
 \item[{1.}] Select a phase adjustment threshold $M_0\ge 0$ and an amplitude threshold $\eta=\sqrt{M_0}$.
\item[{2.}]
For $l\in \Zd$ and $1\le m\le M$, let \begin{equation}\label{meps.def1}
{c}_{\epsilon,  l; m}=(c_{\epsilon,  l; m}(k))_{k\in \Zd}
\end{equation}
take zero components except that
 $c_{\epsilon, l;  m}(k), k\in l+\Omega_{m}$, are  solutions of the  local  minimization problem
{
 \begin{eqnarray} \label{meps.def2}
& & \min_{c(k), k\in l+\Omega_m}  \sum_{{ \gamma} \in \Gamma_m} \Bigg|\Big|\sum_{k\in l + \Omega_{ m}} c(k) \phi({\gamma}+{l}-k)\Big|-z_{\epsilon}({ \gamma} +{l} )\Bigg|^2.
\end{eqnarray}}

\item [{3.}] Adjust  phases of ${c}_{\epsilon, {l}; m}$ appropriately
so that
the resulting vectors $ \delta_{l, m}{ c}_{\epsilon,l; m}$
with $\delta_{l, m}\in \{-1, 1\}$ satisfy
\begin{equation} \label{meps.def13}
 \langle   \delta_{l, m} { c}_{\epsilon, l;  m},  \delta_{l', m'} { c}_{\epsilon, l'; m'}\rangle\ge -M_0
\end{equation}
 for   all $l, l'\in \Zd$ and $1\le m, m'\le M$.
 \item[{4.}] Sew  vectors $\delta_{l, m}{c}_{\epsilon, l; m}, l\in \Zd, 1\le m\le M$, together to obtain
\begin{equation}\label{meps.def14}
d_{\epsilon}(k)=\frac{\sum_{m=1}^M\sum_{l\in \Zd}  \delta_{l, m}{ c}_{\epsilon, l; m}(k)}{\sum_{m=1}^M\sum_{ l\in \Zd} \chi_{l+\Omega_{m}}(k)}, \ k\in \Zd.
\end{equation}
\item [{5.}] Threshold the vector  $d_\epsilon=(d_\epsilon(k))_{k\in \Zd}$,
\begin{equation}\label{thresholding}
c_{\epsilon}(k)= H_{\eta} (d_\epsilon(k)), \ k\in \Zd
\end{equation}
to define the approximation $ f_{\epsilon}$ in \eqref{fepsilon.def}.
 \end{itemize}
\end{tcolorbox}

In the next theorem, we show that the above approach provides a suboptimal approximation to the original signal in a noisy phaseless sampling environment.

\begin{theorem}\label{stability.thm}
Let $A_1, \cdots, A_M$ be bounded open sets satisfying
\eqref{necandsuf.thm.eq1},
$\phi$ be a compactly supported continuous function such that $\phi$
has local linear independence on $A_m, 1\le m\le M$,
and $\Gamma_m\subset A_m$  be so chosen that
$\Phi_{A_m}(\gamma), \gamma\in \Gamma_m$, is a phase retrievable frame for $\R^{K_{A_m}}$. Assume that  the graph ${\mathcal G}_f=(V_f, E_f)$ of the original signal
 $f=\sum_{k\in \Z^d} c(k) \phi(\cdot-k)$ is connected and
\begin{equation}\label{stability.thm.eq0}
F_0:=\inf_{k\in  V_f}  |c(k)|^2>0.
\end{equation}
Set  $\Gamma=\cup_{m=1}^M \Gamma_m$ and
\begin{eqnarray} \label{phil2.def}
\hskip-0.05in&\hskip-0.05in \hskip-0.05in& \hskip-0.05in\big\|\Phi^{-1}\big\|_2=\sup_{\Theta_m\subset \Gamma_m, 1\le m\le M}
 \Big(\min\big(
\sup_{ \|d\|_2=1} \|\Phi_{\Theta_m} d\|_2^{-1},\nonumber\\
& & \hskip2in \qquad \qquad  \sup_{ \|d\|_2=1} \|\Phi_{\Gamma_m\backslash\Theta_m} d\|_2^{-1}\big)\Big)^{-1},  
\end{eqnarray}
where $\Phi_{\Theta_m}=(\phi(\gamma-k))_{\gamma\in \Theta_m, k\in \Omega_m}$
for $\Theta_m\subset \Gamma_m$.
If
 the phase adjustment threshold constant $M_0\ge 0$ and the noise level
 $\|\epsilon\|_\infty:=\sup_{{y}\in \Gamma+\Zd} |\epsilon(y)|$
 satisfy
\begin{equation}  \label{stability.thm.eq1-}
  M_0\le \frac{2 F_0}{9},
\end{equation}
and
\begin{equation}  \label{stability.thm.eq1}
8   \# \Gamma  \|\Phi^{-1}\|_2^2  \|\epsilon\|_\infty^2\le  M_0, 
\end{equation}
then
the  signal
 $f_{\epsilon}=\sum_{k\in\Zd}c_{ \epsilon}(k)\phi(\cdot-k)\in V(\phi)$
 reconstructed from the proposed approach \eqref{meps.def1}--\eqref{thresholding} 
  satisfies
\begin{equation}\label{stability.thm.eq3}
|c_{\epsilon}(k)-\delta c(k)|\le 2 \sqrt{\# \Gamma} \|\Phi^{-1}\|_2 \|\epsilon\|_\infty, \ k\in V_f
\end{equation}
and
\begin{equation}\label{stability.thm.eq3+}
c_{\epsilon}(k)=c(k)=0, \ k\not\in V_f,
\end{equation}
where   $\delta\in\{-1,1\}$.
 \end{theorem}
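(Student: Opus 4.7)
The plan is to track the error stage by stage through the four (really five) steps of the algorithm, using three ingredients: stability of the local phase retrieval problem against noise, connectivity of $\mathcal{G}_f$ together with the covering condition \eqref{necandsuf.thm.eq1}, and the amplitude lower bound $F_0$. I organize the argument in three blocks that follow the algorithmic pipeline.

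\textbf{Local stability.} Fix $(l,m)$ and write $c^*_{l,m}=(c(k))_{k\in l+\Omega_m}$ for the true local coefficient vector. Since $c^*_{l,m}$ is feasible in \eqref{meps.def2} with objective value at most $\#\Gamma_m\|\epsilon\|_\infty^2$, the minimizer $c_{\epsilon,l;m}$ also achieves this value; combined with two triangle inequalities this bounds the $\ell^2(\Gamma_m)$-distance between the vectors $|\Phi^T_{A_m} c_{\epsilon,l;m}|$ and $|\Phi^T_{A_m} c^*_{l,m}|$ by $2\sqrt{\#\Gamma_m}\,\|\epsilon\|_\infty$. Because $\{\Phi_{A_m}(\gamma):\gamma\in\Gamma_m\}$ is a phase retrievable frame, a real-phase-retrieval stability lemma then produces $\epsilon_{l,m}\in\{-1,1\}$ with
\begin{equation*}
\|c_{\epsilon,l;m}-\epsilon_{l,m}c^*_{l,m}\|_2 \le 2\sqrt{\#\Gamma}\,\|\Phi^{-1}\|_2\,\|\epsilon\|_\infty.
\end{equation*}
I expect this lemma to be the main obstacle; its proof splits $\Gamma_m$ into a subset $\Theta_m$ (the samples where the signs of $\Phi^T_{A_m} c_{\epsilon,l;m}$ and $\epsilon_{l,m}\Phi^T_{A_m} c^*_{l,m}$ agree) and its complement, so that on the two pieces the magnitude-difference coincides with a linear measurement of $c_{\epsilon,l;m}\mp \epsilon_{l,m}c^*_{l,m}$ respectively; passing to the piece that gives the better lower singular value reproduces the constant $\|\Phi^{-1}\|_2$ of \eqref{phil2.def}.

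\textbf{Sign adjustment via connectivity.} Whenever two patches $(l,m)$ and $(l',m')$ share a vertex $k_0\in V_f$, the true inner product
\begin{equation*}
\langle c^*_{l,m},c^*_{l',m'}\rangle = \sum_{k\in(l+\Omega_m)\cap(l'+\Omega_{m'})}\!\!\! c(k)^2 \;\ge\; F_0
\end{equation*}
dominates the perturbation coming from the local bound, which by \eqref{stability.thm.eq1} is strictly smaller than $F_0-M_0$. Hence the constraint \eqref{meps.def13} is feasible only when $\delta_{l,m}\epsilon_{l,m}=\delta_{l',m'}\epsilon_{l',m'}$ for every such adjacent pair. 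Condition \eqref{necandsuf.thm.eq1} guarantees that each edge of $\mathcal{G}_f$ lies inside some common patch $l+\Omega_m$, so patch adjacency refines edge adjacency of $\mathcal{G}_f$; since $\mathcal{G}_f$ is connected, the product $\delta_{l,m}\epsilon_{l,m}$ must equal a single global sign $\delta\in\{-1,1\}$ on every patch meeting $V_f$. Patches disjoint from $V_f$ satisfy $\|c_{\epsilon,l;m}\|_2=O(\sqrt{M_0})$ by Step~1, so any sign choice satisfies \eqref{meps.def13} and they contribute negligibly to \eqref{meps.def14}.

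\textbf{Averaging and thresholding.} The componentwise error $|\delta_{l,m}c_{\epsilon,l;m}(k)-\delta c(k)|$ is dominated by the $\ell^2$-estimate of the local step, and averaging in \eqref{meps.def14} with positive weights summing to one preserves this bound, yielding
\begin{equation*}
|d_\epsilon(k)-\delta c(k)|\le 2\sqrt{\#\Gamma}\,\|\Phi^{-1}\|_2\,\|\epsilon\|_\infty \le \sqrt{M_0/2}
\end{equation*}
for all $k\in\Zd$. The bound \eqref{stability.thm.eq1-} is calibrated so that $\sqrt{F_0}-\sqrt{M_0/2}>\sqrt{M_0}=\eta$: for $k\in V_f$ this gives $|d_\epsilon(k)|>\eta$, so $c_\epsilon(k)=d_\epsilon(k)$ satisfies \eqref{stability.thm.eq3}; for $k\notin V_f$ one has $|d_\epsilon(k)|\le\sqrt{M_0/2}<\eta$, so $c_\epsilon(k)=0$ and \eqref{stability.thm.eq3+} holds.
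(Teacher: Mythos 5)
Your proposal is correct and follows essentially the same route as the paper's own proof: the local $\ell^2$-stability bound obtained by comparing objective values at the minimizer and at the true local coefficients and then splitting $\Gamma_m$ according to sign agreement (exactly how the paper derives its key estimate, with the same constant $\|\Phi^{-1}\|_2$), the ``true inner product $\ge F_0$ dominates the perturbation'' contradiction that forces equal relative signs on any two patches sharing a vertex of $V_f$, propagation of that sign along edges of the connected graph ${\mathcal G}_f$ (each edge lying in a common patch $l+\Omega_m$, which you obtain directly from \eqref{necandsuf.thm.eq1} together with $\Omega_m=K_{A_m}$, where the paper instead invokes its Lemma \ref{stability.lem}), and the identical averaging-plus-thresholding endgame. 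The one point the paper proves that you leave implicit is that the phase-adjustment constraint \eqref{meps.def13} is satisfiable at all --- without this, Step 3 of the algorithm could produce no output --- but that follows in one line from your own local bound: choosing $\delta_{l,m}=\epsilon_{l,m}$ and using that the exact local vectors have pairwise nonnegative inner products (they are restrictions of one global coefficient vector), every pairwise inner product is at least $-2\bigl(2\sqrt{\#\Gamma}\,\|\Phi^{-1}\|_2\,\|\epsilon\|_\infty\bigr)^2=-8\,\#\Gamma\,\|\Phi^{-1}\|_2^2\,\|\epsilon\|_\infty^2\ge -M_0$ by \eqref{stability.thm.eq1}.
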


By Theorem \ref{stability.thm}, the  reconstructed signal $f_\epsilon$ in \eqref{fepsilon.def} provides
a suboptimal approximation, up to a sign,  to the original signal $f$ in \eqref{signal.def2},
\begin{equation}\label{noresonance.def}
 \|f_\epsilon- \delta f\|_\infty
\le  2 \sqrt{\# \Gamma} \|\Phi^{-1}\|_2 \Big(\sup_{x\in \Rd} \sum_{k\in \Zd} |\phi(x-k)|\Big) \|\epsilon \|_\infty
\end{equation}
and
\begin{equation}\label{separable_error.def}\sup_{y\in \Gamma+\Zd}
\big||f_\epsilon(y)|- |f(y)|\big|
\le  2 \sqrt{\# \Gamma} \|\Phi^{-1}\|_2 \Big(\sup_{x\in \Rd} \sum_{k\in \Zd} |\phi(x-k)|\Big) \|\epsilon \|_\infty.
\end{equation}

By \eqref{stability.thm.eq0}, \eqref {stability.thm.eq1-}, \eqref{stability.thm.eq1} and \eqref{stability.thm.eq3}, a vertex in  the graph ${\mathcal G}_f$
 is also a  vertex  of the  graph ${\mathcal G}_{f_\epsilon}$ associated with
 the reconstructed signal $f_\epsilon$. This together with  \eqref{graphG.def} and  \eqref{stability.thm.eq3+} implies that the graphs
 ${\mathcal G}_f$  and  ${\mathcal G}_{f_\epsilon}$ associated with the original signal
 $f$ and the reconstructed signal $f_\epsilon$ are the same, i.e.,
 $${\mathcal G}_f={\mathcal G}_{f_\epsilon}.$$

\smallskip

The selection of the threshold constant $M_0\ge 0$ is imperative
 to find an approximation to the original signal from its phaseless samples.
   In the noiseless environment, we may take $M_0=0$ and
   the proposed  approach leads to a perfect reconstruction, i.e., $f_\epsilon=\pm f$, when $f$ is nonseparable.
In  practical applications,  the noise level is positive and  the phase adjustment threshold constant $M_0$ needs to be appropriately selected. For instance, we may require that \eqref{stability.thm.eq1-} and \eqref{stability.thm.eq1} are satisfied if we have some prior information about the amplitude vector of the original signal.
From the proof of Theorem \ref{stability.thm} and also the simulations in the next section, it is observed that phases can not be adjusted to satisfy  \eqref{meps.def13} if $M_0$ is far below square of noise level $\|\epsilon\|_\infty$ (for instance, \eqref{stability.thm.eq1} is not satisfied), while the phase adjustment  \eqref{meps.def13} in the algorithm is not essentially determined   and hence the reconstructed signal is not a good approximation of the original signal if $M_0$ is comparable to the square of
minimal magnitude of amplitude vector of the original signal (for instance, \eqref{stability.thm.eq1-} is not satisfied).

\smallskip

\begin{remark}\label{remark5.2}
{\em By Theorem \ref{stability.thm},  there is   no resonance phenomenon
in the sense that
\begin{equation}\label{resonance.def23}
\inf_{\delta\in \{-1, 1\}}\|f_\epsilon-\delta f\|_\infty \le C \|\epsilon\|_\infty
\end{equation}
if  the noise level is far below  the minimal magnitude of amplitude vector of the original signal, i.e.,
\begin{equation}\label{epsilon.req}
\|\epsilon\|_\infty \le  C_0 \inf_{k\in V_f} |c(k)|\end{equation}
for some  sufficiently small constant $C_0$.
The phaseless sampling and reconstruction problem is ill-posed if the noise level is high. For instance, the estimate
\eqref{resonance.def23} is not satisfied for the nonseparable spline signal of order $2$,
$$f_\alpha(x)= B_2(x)+\alpha B_2(x-1)+B_2(x-2)\in V(B_2),$$
 if $\|\epsilon\|_\infty \ge 2\alpha/(1+\alpha)$, where $\alpha\in (0,1)$
is sufficiently small. The reasons are that the  signal
$\tilde f_\alpha(x)= B_2(x)+\alpha B_2(x-1)-B_2(x-2)\in V(B_2)$
satisfy
$$\min_{\delta\in \{-1, 1\}} \|f_\alpha-\delta \tilde f_\alpha\|_{\infty}=2\ \  {\rm  and}\  \ \big\||f_\alpha|-|\tilde f_\alpha |\big\|_\infty=\frac{2\alpha}{1+\alpha}.$$
}
\end{remark}


\section{Reconstruction Algorithm and Numerical Simulations}
\label{numerical.sec}

Consider the scenario that
phaseless samples of a signal  
$f=\sum_{k\in \Zd} c(k) \phi(\cdot-k)\in V(\phi)$
taken on a finite 
 set
$\Gamma+ K\subset \Gamma+\Zd$
are corrupted
by the additive noise,
\begin{equation}\label{noisydata2.def}
{ z}_{\epsilon}({y})= |f({y})|+\epsilon({ y}), \ {y}\in \Gamma+K,
\end{equation}
 where   $ \epsilon(y)\in [-\epsilon, \epsilon] , y\in \Gamma+K$,
for some $\epsilon \ge 0$,
and
$\Gamma=\cup_{m=1}^M \Gamma_m$ is either as in \eqref{gamma.1} or
 in \eqref{gamma.2}.
 Define
  \begin{equation}\label{original}
 f_K=\sum_{k\in \widetilde K} c(k) \phi(\cdot-k),
  \end{equation}
 where  
$ \widetilde K=\cup_{l\in K}\cup_{m=1}^M (l+\Omega_m)$  
 and
   $\Omega_m, 1\le m\le M$, are as in \eqref{omegaset.def}.
Then the noisy data $z_{\epsilon}(y), y\in \Gamma+K$, in \eqref{noisydata2.def} is
\begin{equation}\label{noisydata3.def}
{z}_{\epsilon}({y}) =|f_K({y})|+\epsilon({ y})\ge 0, \ {y}\in \Gamma+K.\end{equation}
 Based on \eqref{noisydata3.def} and the four-step approach in Section \ref{reconstruction.section}, we propose an algorithm
to find
an approximation $f_\epsilon$ of the form
 \begin{equation}\label{fepsilon2.def}
 f_{\epsilon}=\sum_{k\in \widetilde K} c_{ \epsilon}(k) \phi(\cdot-k)\in V(\phi),\end{equation}
 up to a sign, to the original signal $f$ in \eqref{original}  when  the noisy phaseless samples  \eqref{noisydata2.def}  are available only.
    The algorithm contains four parts: {\bf m}inimiztion, {\bf a}djusting {\bf p}hases, {\bf se}wing and {\bf t}hresholding, and we call it  the MAPSET algorithm.
\begin{algorithm}\label{mapset.al}
\caption{MAPSET Algorithm}
\begin{algorithmic}
\STATE {\bf Input}: finite set $K\subset \Zd$;
sampling set $\Gamma=\cup_{m=1}^M \Gamma_m$  either  in \eqref{gamma.1} or
 in \eqref{gamma.2};  noisy phaseless sampling data $\big(z_{\epsilon}(y)\big)_{y\in \Gamma+ K}$;
  index set
 $\widetilde K= \cup_{l\in K}\cup_{m=1}^M (l+\Omega_m)\subset \Zd$; and the phase adjustment threshold constant
 $M_0$.

 \STATE {\bf Initials}: Start from zero vectors
$c_{\epsilon, l; m}=\big(c_{\epsilon, l; m}(k)\big)_{k\in \widetilde K}, l\in K, 1\le m\le M$.

\STATE {\bf Instructions}:

\STATE {\bf 1)  Local minimization}:
For $l\in { K}$ and $1\le m\le M$, replace $c_{\epsilon, l; m}(k), k\in l+\Omega_m$, by a solution
of the  
 local minimization problem
{\small \begin{eqnarray*} \label{meps.def2new}
& & \min_{c(k), k\in l+\Omega_m}  \sum_{{ \gamma} \in \Gamma_m} \Bigg|\Big|\sum_{k\in l + \Omega_{ m}} c(k) \phi({\gamma}+{l}-k)\Big|-z_{\epsilon}({ \gamma} +{l} )\Bigg|^2.
\end{eqnarray*}}

\STATE {\bf 2)  Phase adjustment}:
For $l\in K$ and $1\le m\le M$,  multiply  $c_{\epsilon, l; m}$ by $\delta_{l, m}\in \{-1, 1\}$ so that
$\langle \delta_{l,m}c_{\epsilon, l; m}, \delta_{l', m'} c_{\epsilon, l'; m'}\rangle \ge -M_0
$
for all $l, l'\in K$ and $1\le m, m'\le M$.
\STATE {\bf 3) Sewing local approximations:}
\begin{equation*}
d_{\epsilon}(k)=\frac{\sum_{m=1}^M\sum_{l\in K}  \delta_{l, m}c_{\epsilon, l; m}(k)}{\sum_{m=1}^M\sum_{ l\in K} \chi_{l+\Omega_{m}}(k)}, \ k\in \widetilde K.
\end{equation*}
\STATE {\bf 4)  Hard thresholding:}
\begin{equation*}
c_{\epsilon}(k)= \left\{ \begin{array}{ll} d_{\epsilon}(k)  & {\rm if} \ |d_{\epsilon}(k) |\ge \sqrt{M_0}\\
 0  & {\rm else}
 \end{array}\right.,  \ k\in \widetilde K.
\end{equation*}
\STATE {\bf Output}: Amplitude vector $(c_{\epsilon}(k))_{k\in \widetilde K}$, and the
 reconstructed signal $f_\epsilon=\sum_{k\in \widetilde K} c_\epsilon(k) \phi(\cdot-k)$.
\end{algorithmic}
\end{algorithm}

In this section, we also demonstrate the performance of the proposed MAPSET algorithm on reconstructing  box spline signals from their  noisy phaseless samples on a discrete set.

\subsection{Nonseparable spline signals of
tensor-product type}\label{tensor.al}
Let $B_{(3,3)}$ be the tensor product of  one-dimensional quadratic spline  $B_3$,
see \eqref{boxspline2.def}.
For $A=(0, 1)^2$ and $\phi=B_{(3, 3)}$, the vector-valued function $\Phi_A$ in \eqref{phiA.def}  and the set $K_A$ in \eqref{KA.def}
becomes
 \begin{equation}\label{phiAtensor.def} \Phi_{(0,1)^2}(s, t)=\big(b_i(s) b_j(t)\big)_{(i,j)\in K_{(0,1)^2}},\ (s,  t)\in (0,1)^2\end{equation}
and
$K_{(0,1)^2}=\{(i,j): -2\le i, j\le 0\}$
respectively, where $b_0(s)= s^2/2$,  $b_{-1}(s)=(-2s^2+2s+1)/2$ and $b_{-2}(s)= (1-s)^2/2,  0\le s\le 1$.
One may verify that
 the space spanned by the outer products of $\Phi_{(0,1)^2}(s, t), (s, t)\in (0, 1)^2$,
has dimension $25$, and the set
\begin{equation}\label{gamma00}
\Gamma_0=\{(i,j)/6, 1\le i,j\le 5\} \subset (0, 1)^2\end{equation}
 with cardinality $25$   satisfies \eqref{gamma.1}, see  Figure \ref{sample.pic}.
  For the above uniformly distributed set $\Gamma_0$,
  the corresponding $\|\Phi^{-1}\|_2$ in  \eqref{phil2.def}  is
 $2.7962\times10^{3}$.  

  As $\Phi_{(0,1)^2}(s, t), (s, t)\in (0, 1)^2$,
  is a  $9$-dimensional vector-valued polynomial about $s^mt^n, 0\le m, n\le 2$,
  the shift-invariant space generated by $B_{(3, 3)}$ has local complement property on $(0, 1)^2$.   Observe that the matrix
  $\big(\Phi_{(0,1)^2}(s_i, t_i)\big)_{1\le i\le 9}$ has full rank $9$ for almost all
  $(s_i, t_i)\in (0, 1)^2, 1\le i\le 9$. Hence
    $\big(\Phi_{(0,1)^2}(s_i, t_i)\big)_{1\le i\le 17}$ is a phase retrieval frame for almost all
  $(s_i, t_i)\in (0, 1)^2, 1\le i\le 17$,  but
 the corresponding $\|\Phi^{-1}\|_2$ in  \eqref{phil2.def} are relatively large
 from  our numerical calculation.  So we 
use a randomly distributed set
 $\Gamma_1\subset (0,1)^2$ with cardinality 19 in our simulations, see Figure \ref{sample.pic}. The above set satisfies \eqref{gamma.2} and the corresponding $\|\Phi^{-1}\|_2$ in \eqref{phil2.def} is
$3.2995\times10^{4}$.  
  \begin{figure}[H]  
\begin{center}
\includegraphics[width=58mm, height=42mm]{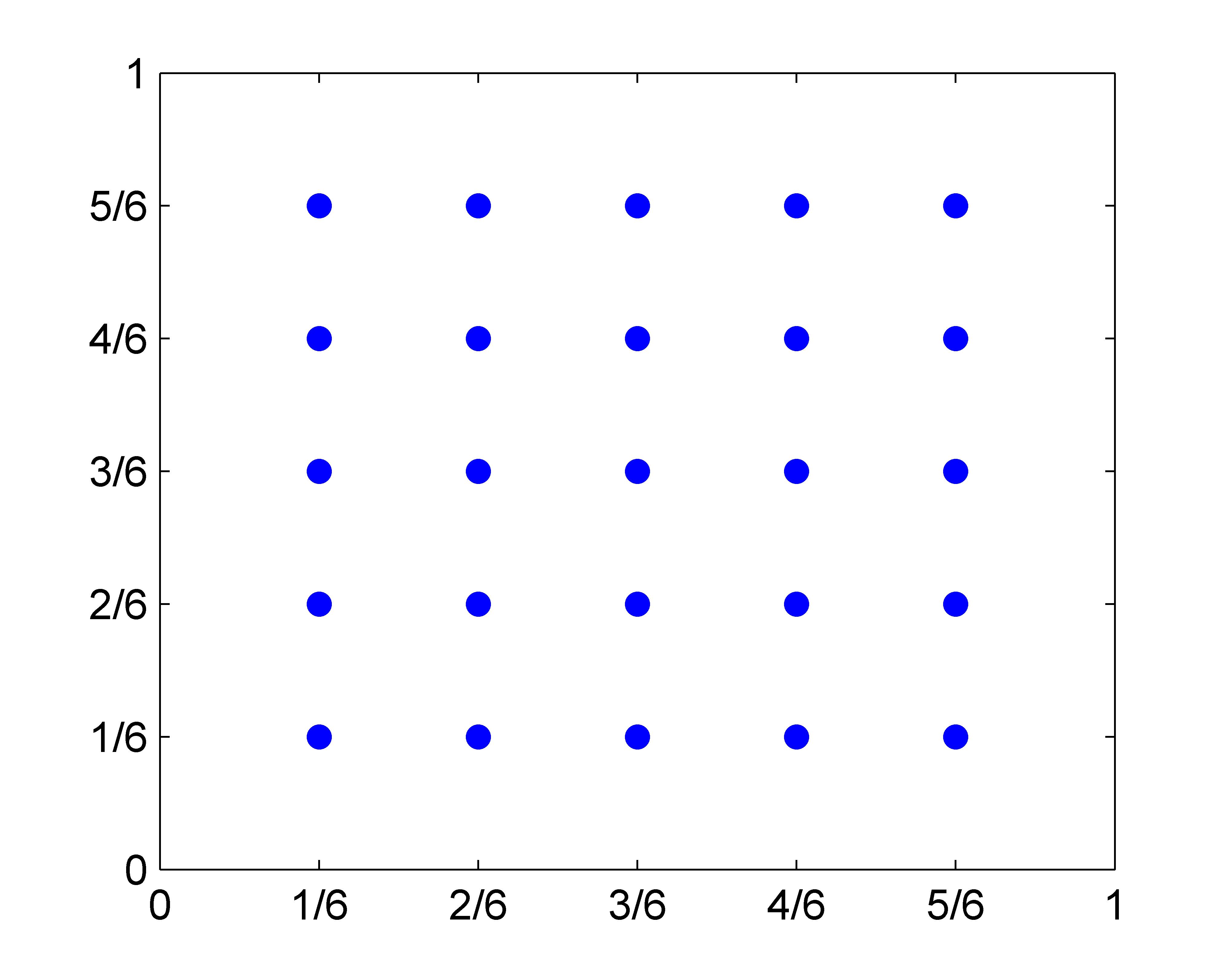}
\includegraphics[width=58mm, height=42mm]{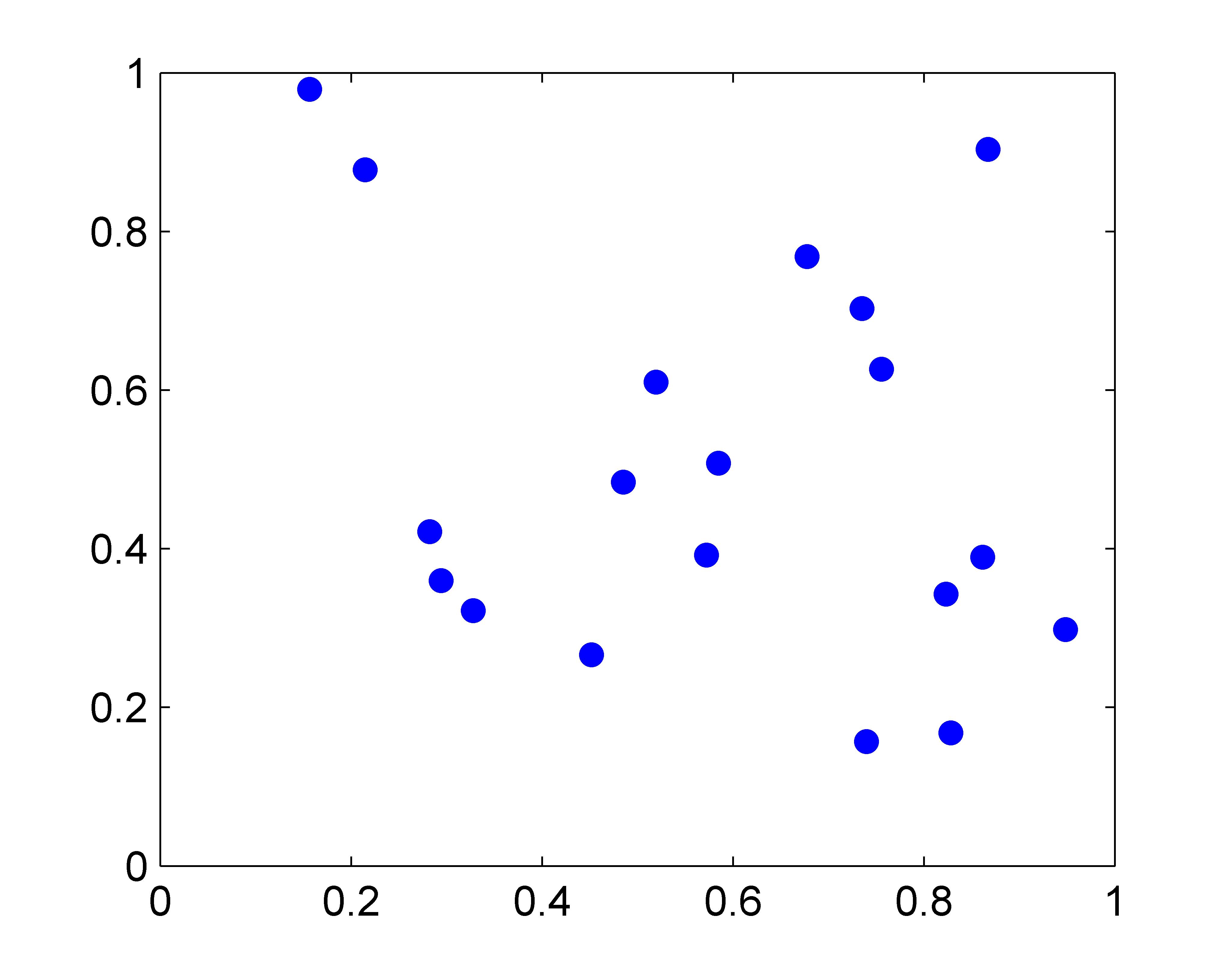}
\caption{Plotted on the left is a uniformly distributed set  $\Gamma_0$
 satisfying \eqref{gamma.1}, while on the right is
a randomly distributed set  $\Gamma_1$  satisfying \eqref{gamma.2}.
 The corresponding $\|\Phi^{-1}\|_2$ in \eqref{phil2.def} to the above sets are
 $2.7962\times10^{3}$ (left) and $3.2995\times10^{4}$ (right), respectively.
  }
\label{sample.pic}
\end{center}
\end{figure}

In our simulations, the available  data  $z_\epsilon(y)=|f(y)|+\epsilon(y)\ge 0, y\in \Gamma+K$,
  are  noisy phaseless samples of a  spline signal
  \begin{equation}\label{tensorsignal.def0}
  f(s, t)=\sum_{0\le m\le K_1, 0\le  n\le K_2} c(m,n)B_{(3,3)}(s-m, t-n),  \end{equation}
taken on $\Gamma+K$, where
 $K=[0, K_1]\times [0, K_2]$ for some positive integers $K_1, K_2\ge 1$,
 $\Gamma$ is either the uniform set $\Gamma_0$ or the random set $\Gamma_1$  in
Figure \ref{sample.pic},
 amplitudes of the signal $f$,
\begin{equation}\label{amplitude.def}
c(m,n)\in[-1, 1]\backslash [-0.1, 0.1] ,\   0\le m\le K_1, 0\le n\le K_2, \end{equation} are randomly chosen, and   the additive noises $\epsilon(y)\in [-\epsilon, \epsilon], y\in \Gamma +K$,
  with noise level $\epsilon\ge 0$ are randomly selected.
Denote the signal reconstructed by the proposed MAPSET algorithm with phase adjustment thresholding constant $M_0=0.01$, cf. \eqref{stability.thm.eq1-} with $F_0=0.01$,  by
 \begin{equation}
 f_{ \epsilon}(s, t)=\sum_{-2\le m\le K_1, -2\le n\le K_2}c_{\epsilon}(m,n)B_{(3,3)}(s-m, t-n).
 \end{equation}

 Define maximal amplitude error of the MAPSET algorithm  by
   \begin{equation}\label{rec.error.def}
 e(\epsilon):=\min_{\delta\in \{-1,1\}}\max_{-2\le m\le K_1, -2\le n\le K_2}|c_{\epsilon}(m,  n)-\delta c(m,n)|.
 \end{equation}
 As the original spline signal $f$ in \eqref{tensorsignal.def0} is nonseparable, the conclusions
\eqref{stability.thm.eq3} and \eqref{stability.thm.eq3+}
guarantee that the
reconstruction signal $f_\epsilon$  provides an approximation, up to a sign, to the original signal $f$ if  $\|\Phi^{-1}\|_{2}\epsilon$ is much smaller than a multiple of $\sqrt{M_0}$, where $M_0$ is the  phase adjustment thresholding constant.
 Our numerical simulation indicates that
  the MAPSET algorithm saves phases successfully
  in $100$ trials and the maximal amplitude error  $e(\epsilon)$ in \eqref{rec.error.def} is about
   $O(\epsilon)$,
  provided that  $\epsilon \le 2\times 10^{-3}$ for  $\Gamma=\Gamma_0$ and $\epsilon \le  7\times 10^{-4}$ for  $\Gamma=\Gamma_1$,
  where $\sqrt{M_0}/\|\Phi^{-1}\|_{2}$ are $3.5763\times 10^{-5}$ and  $3.0307 \times 10^{-6}$ respectively.  Presented in Figure \ref{signalandrec.fig} are
  a spline  signal  $f$ in  \eqref{tensorsignal.def0} with $K_1= K_2 = 9$ and the difference between the original signal $f$ and
 the reconstructed signal $f_{\epsilon}$ via the MAPSET algorithm with  noise level $\epsilon =10^{-4}$.
\begin{figure}[h] 
\begin{center}
\includegraphics[width=40mm, height=28mm]{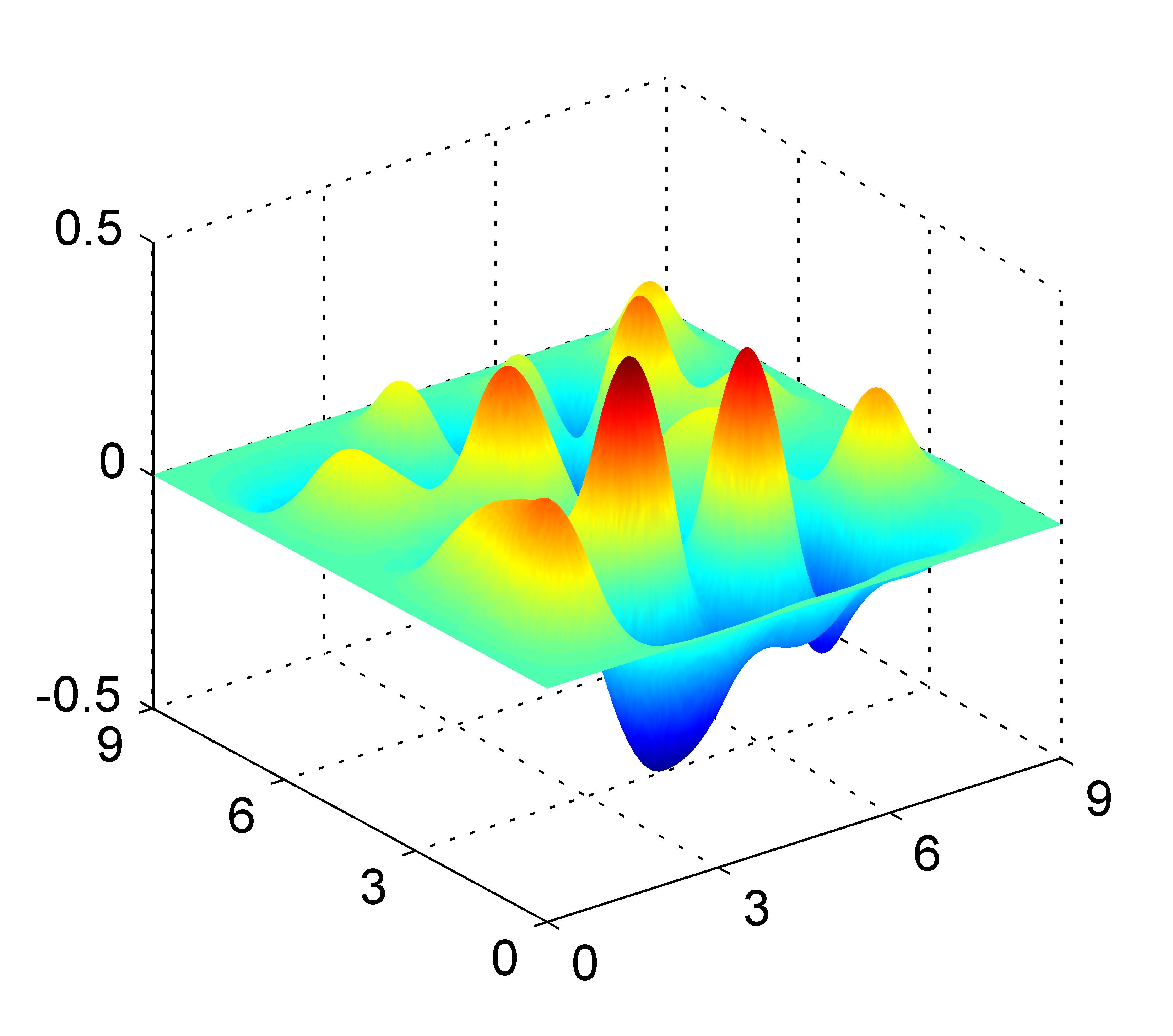}
\includegraphics[width=40mm, height=28mm]{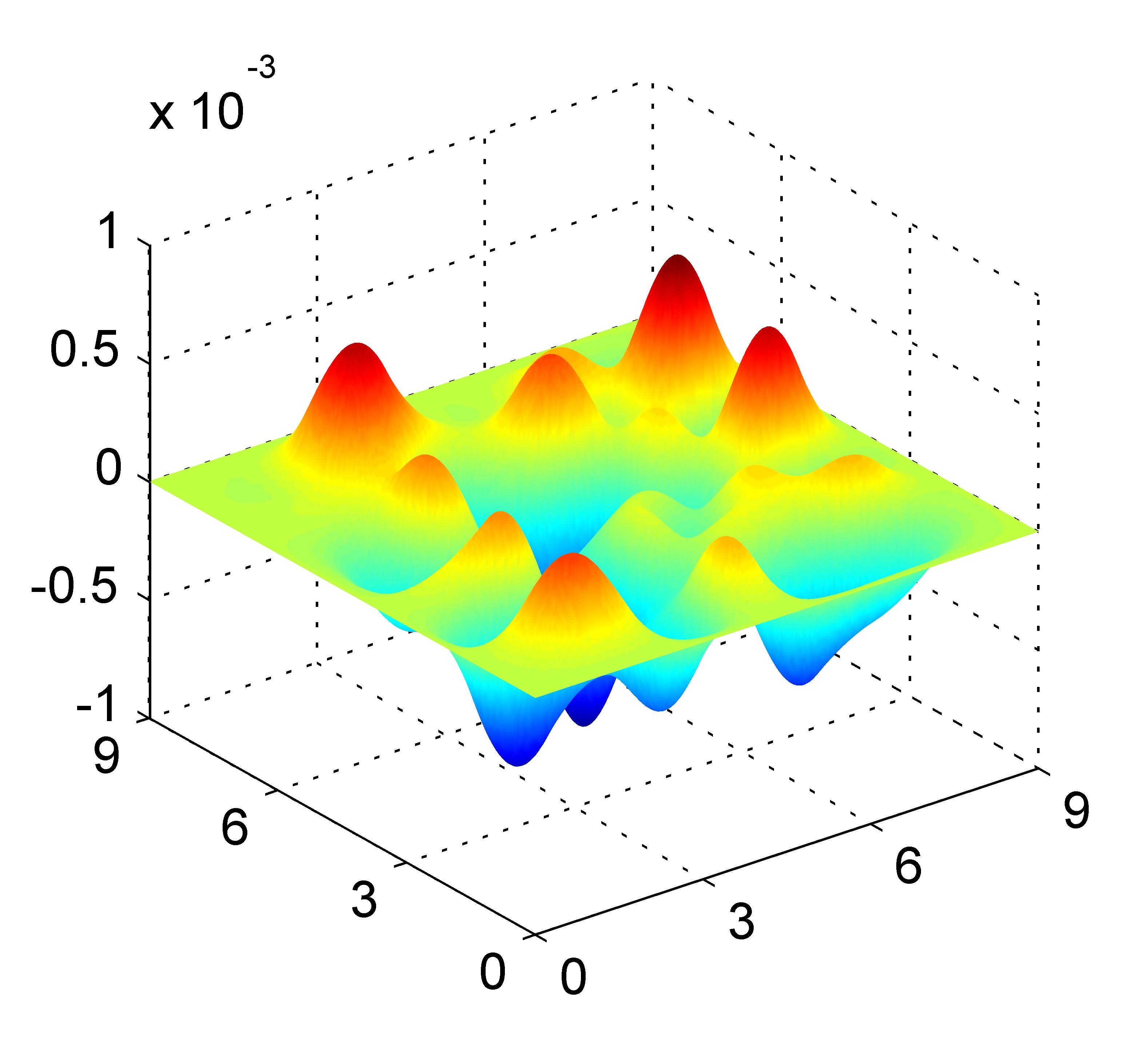}
\includegraphics[width=40mm, height=28mm]{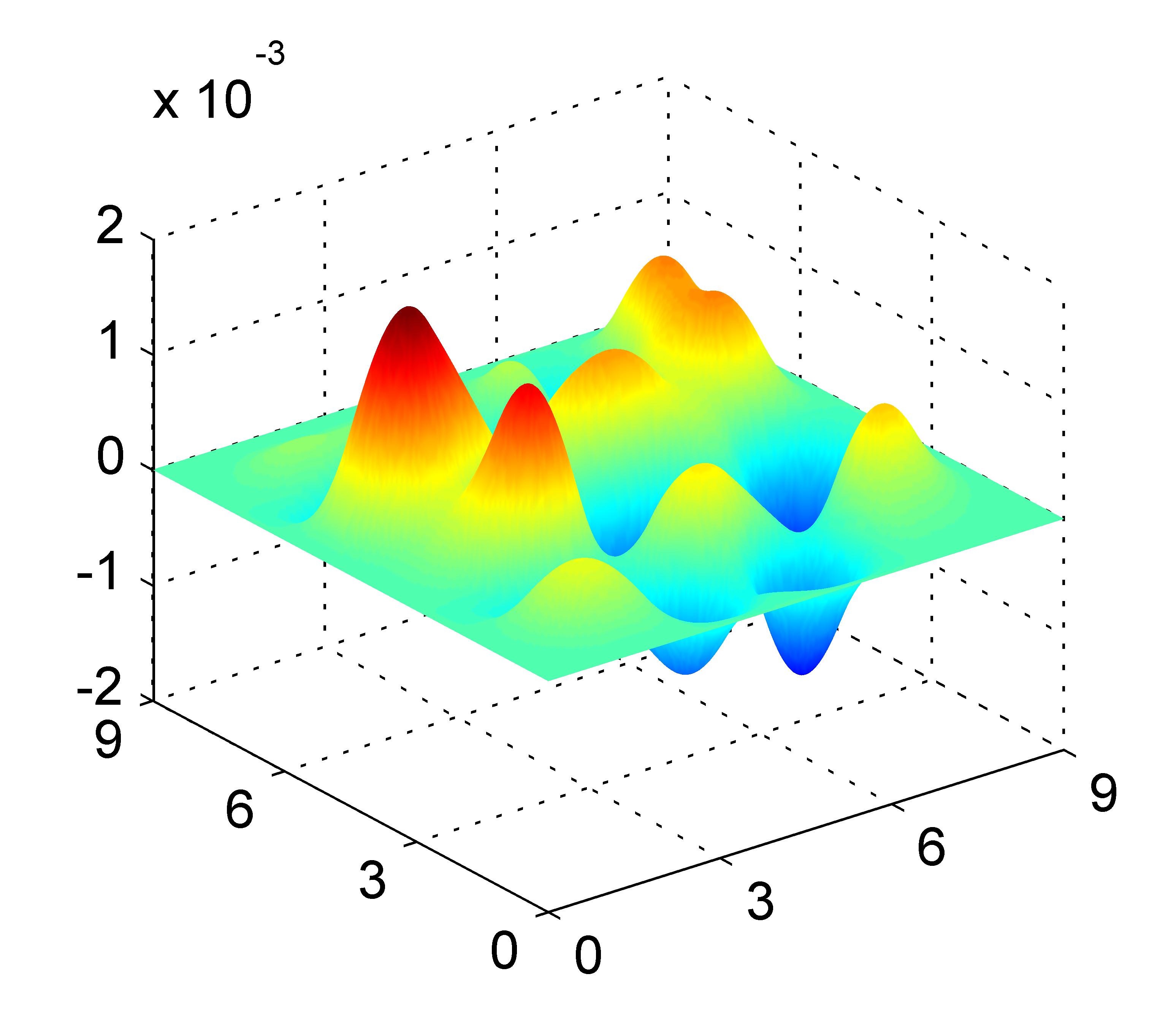}
\caption{Plotted on the left is a spline  signal in  \eqref{tensorsignal.def0} with $K_1= K_2 = 9$. In the middle and on the right are the difference between the above signal $f$ and the signal $f_{\epsilon}$ reconstructed by the MAPSET algorithm
with noise level $\epsilon = 10^{-4}$ and the sampling set $\Gamma$ being  $\Gamma_0$ and $\Gamma_1$ in
Figure \ref{sample.pic}, respectively.  
The  maximal amplitude errors  $e(\epsilon)$ in \eqref{rec.error.def}
is $0.0014$ (middle) and $0.0030$ (right),
 and  the reconstruction error $\min_{\delta \in \{-1, 1\}}\|f_{\epsilon}-\delta f\|_{\infty}$ is  $7.2567\times 10^{-4}$ (middle) and $0.0015$ (right), respectively.
}\label{signalandrec.fig}
\end{center}
\end{figure}

 The  signal $f_\epsilon$ reconstructed from the MAPSET algorithm may not provide a good approximation, up to a sign, to the original signal $f$ if the noise level $\epsilon$ is larger than a multiple of $\sqrt{M_0}/\|\Phi^{-1}\|_{2}$, cf. \eqref{stability.thm.eq1} in Theorem \ref{stability.thm}. Our numerical simulations indicate that
 the  MAPSET algorithm sometimes fails to save the phase of the original signal $f$ when $\epsilon\ge 3\times 10^{-3}$ for $\Gamma=\Gamma_0$ and  $\epsilon\ge 8\times 10^{-4}$  for $\Gamma=\Gamma_1$. 

\subsection{Nonseparable spline signals of non-tensor product type}\label{nontensor.al}
Let $M_{\Xi_Z}$ be the box spline function in  \eqref{boxspline.def} with
 $\Xi_Z=\left ( \begin{array} {cccc} 1 & 1 & 0 & 1 \\
 0 & 0 & 1 & 1\end{array}\right  )$, see \cite{deboorbook}.
 Unlike the spline function  $B_{(3,3)}$ of tensor-product type,  the shift-invariant space spanned by $M_{\Xi_Z}$ does not have the local complement property on $(0, 1)^2$, cf. Section \ref{tensor.al}. Set  $A_U:= \{(s,t): 0<s<t<1\}$ and $A_L:= \{(s, t): 0<t<s<1\}$. 
  Then the triangle regions $A_U$ and $A_L$ satisfy \eqref{necandsuf.thm.eq1}, and the  shift-invariant space spanned by $M_{\Xi_Z}$ has local complement property on $A_U$  and on $A_L$.

  For $A=A_U$ and $\phi=M_{\Xi_Z}$, the function $\Phi_{A_U}(s, t)$ in \eqref{phiA.def}  is a 5-dimensional vector-valued polynomial about $s^2, (t-s)^2, s, t-s, 1$, and
  the set $K_{A_U}$ in \eqref{KA.def} is
$
\{(0, 0), (-1, 0), (-2,0), (-1, -1), (-2, -1)\}
$. Hence
 the space spanned by the outer products of $\Phi_{A_U}(s, t)$
has dimension $13$,  and we can select a set
$\Gamma_{2, U} \subset A_U$ with cardinality $13$
   to satisfy \eqref{gamma.1}, see Figure \ref{sample_nontensor.pic}.
   Similarly, for the lower triangle region $ A_L$,  a sampling set $\Gamma_{2, L}$ with cardinality $13$ can be chosen to satisfy  \eqref{gamma.1}. 
    For our simulations, we use $$\Gamma_2=\Gamma_{2, U}\cup \Gamma_{2, L} $$
    as the sampling set contained in $A_U\cup A_L\subset (0, 1)^2$, see Figure \ref{sample_nontensor.pic}. For the above set $\Gamma_2$, the corresponding $\|\Phi^{-1}\|_2$ in  \eqref{phil2.def}  is
 $87.9420$.

 Recall $\Phi_{A_U}(s, t)$
  is a  vector-valued polynomial about  $s^2, (t-s)^2, s, t-s$ and $ 1$. Then
the matrix
  $(\Phi_{A_U}(s_i, t_i))_{1\le i\le 5}$ has full rank $5$ for almost all
  $(s_i, t_i)\in A_U, 1\le i\le 5$, and
    $(\Phi_{A_U}(s_i, t_i))_{1\le i\le 9}$ is a phase retrieval frame for almost all
  $(s_i, t_i)\in A_U, 1\le i\le 9$.
 So we can use  randomly distributed  sets
 $\Gamma_{3, U}\subset A_U$ and  $\Gamma_{3, L}\subset A_L$ with cardinality 9 that satisfy \eqref{gamma.2}, see Figure \ref{sample_nontensor.pic}.   Set $$\Gamma_3=\Gamma_{3,U}\cup \Gamma_{3, L}.$$  For the above set $\Gamma_3$, the corresponding $\|\Phi^{-1}\|_2$ in \eqref{phil2.def} is
$761.2227$.

 \begin{figure}[t] 
\begin{center}
\includegraphics[width=60mm, height=52mm]{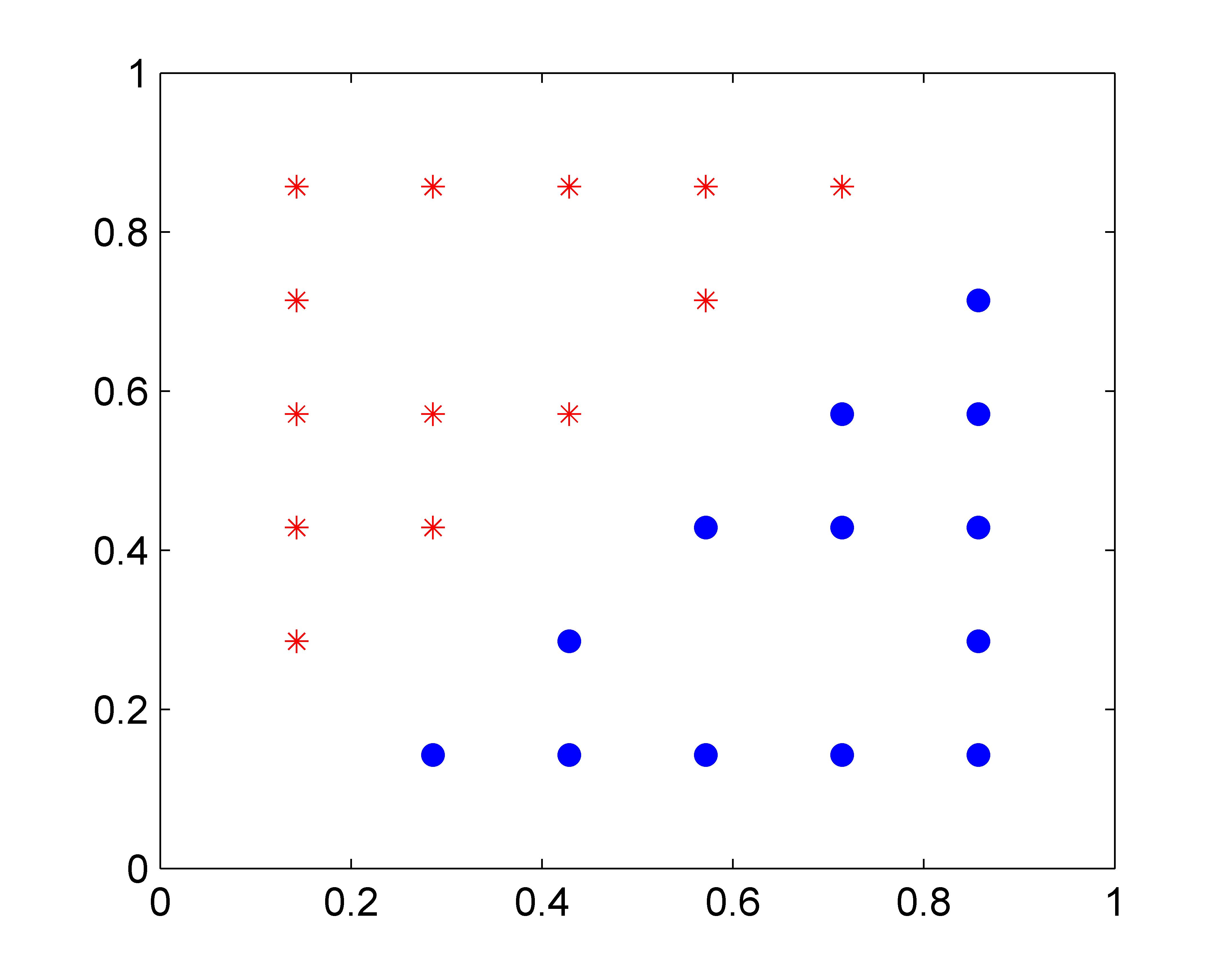}
\includegraphics[width=60mm, height=52mm]{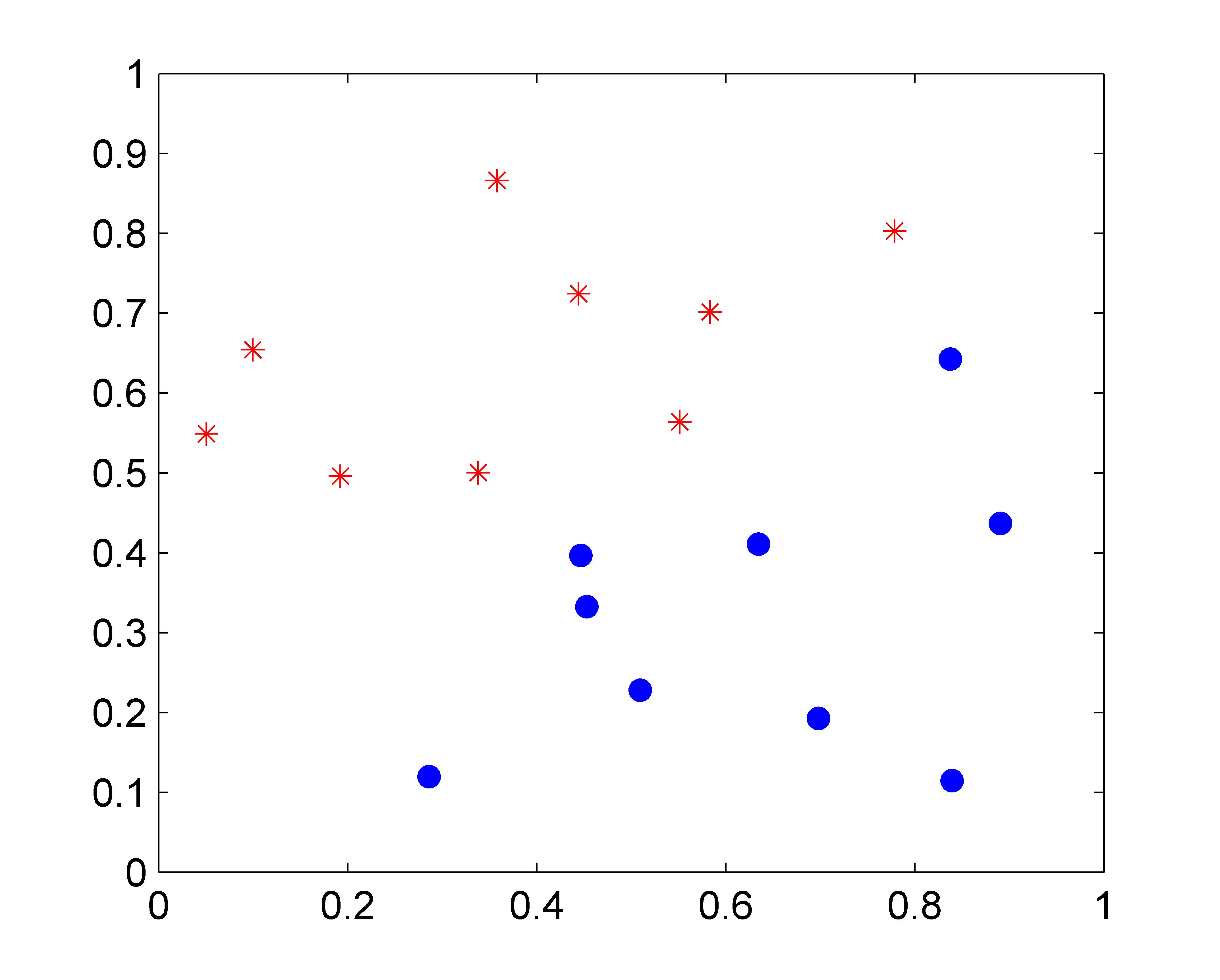}
\caption{Plotted on the left are the sampling sets $\Gamma_{2, U}\subset A_U$ (in the  red star) and   $\Gamma_{2, L}\subset A_L$
(in the blue dot).
Plotted on the right are the  random sets $\Gamma_{3,U} \subset A_U$ (in the red star) and $\Gamma_{3, L}\subset A_L$ (in the blue dot) that have cardinality $9$. The corresponding $\|\Phi^{-1}\|_2$ in \eqref{phil2.def} to the above sets is  $87.9420$ (left) and $761.2227$ (right) respectively. }\label{sample_nontensor.pic}
\end{center}
\end{figure}

In our simulations, the available  data  $z_\epsilon(y)=|f(y)|+\epsilon(y)\ge 0, y\in \Gamma+K$,  are  noisy phaseless samples of a  spline signal
  \begin{equation}\label{nontensorsignal.def0}
  f(s, t)=\sum_{0\le m\le K_1, 0\le  n\le K_2} c(m,n)M_{\Xi_Z}(s-m, t-n),  \end{equation}
taken on $\Gamma+K$, where
 $K=[0, K_1]\times [0, K_2]$ for some $1\le K_1, K_2\in \Z$,
 $\Gamma$ is either $\Gamma_2$  or $\Gamma_3$ in
Figure \ref{sample_nontensor.pic},
 amplitudes of the signal $f$ are as in \eqref{amplitude.def},
and   the additive noises $\epsilon(y)\in [-\epsilon, \epsilon], y\in \Gamma+K $,
  with noise level $\epsilon\ge 0$ are randomly selected.
Denote the signal reconstructed by the proposed MAPSET algorithm with thresholding constant $M_0=0.01$ 
 by
 \begin{equation}
 f_{ \epsilon}(s, t)=\sum_{-2\le m\le K_1, -1\le n\le K_2}c_{\epsilon}(m,n)M_{\Xi_Z}(s-m, t-n).
 \end{equation}
As in Section \ref{tensor.al}, the
reconstruction signal $f_\epsilon$ provides an approximation, up to a sign, to the original signal $f$.
 Our numerical simulation indicates that
  the MAPSET algorithm saves phases 
  in 1000 trials and the reconstruction error  $e(\epsilon)$ is about
   $O(\epsilon)$,
  provided that  $\epsilon \le 8\times 10^{-3}$ for  $\Gamma=\Gamma_2$ and $\epsilon \le 4\times 10^{-3}$  for  $\Gamma=\Gamma_3$,
  where $\sqrt{M_0}/\|\Phi^{-1}\|_{2}$ are $0.0011$ and  $1.3137 \times 10^{-4}$ respectively.  Presented in Figure \ref{signal_nontensor.fig} are
  a spline  signal  $f$ in  \eqref{nontensorsignal.def0} with $(K_1, K_2) =(9, 8)$, and the difference between the original signal $f$ and
 the reconstructed signal $f_{\epsilon}$ via the MAPSET algorithm with  noise level $\epsilon =10^{-4}$.
\begin{figure}[t] 
\begin{center}
\includegraphics[width=40mm, height=28mm]{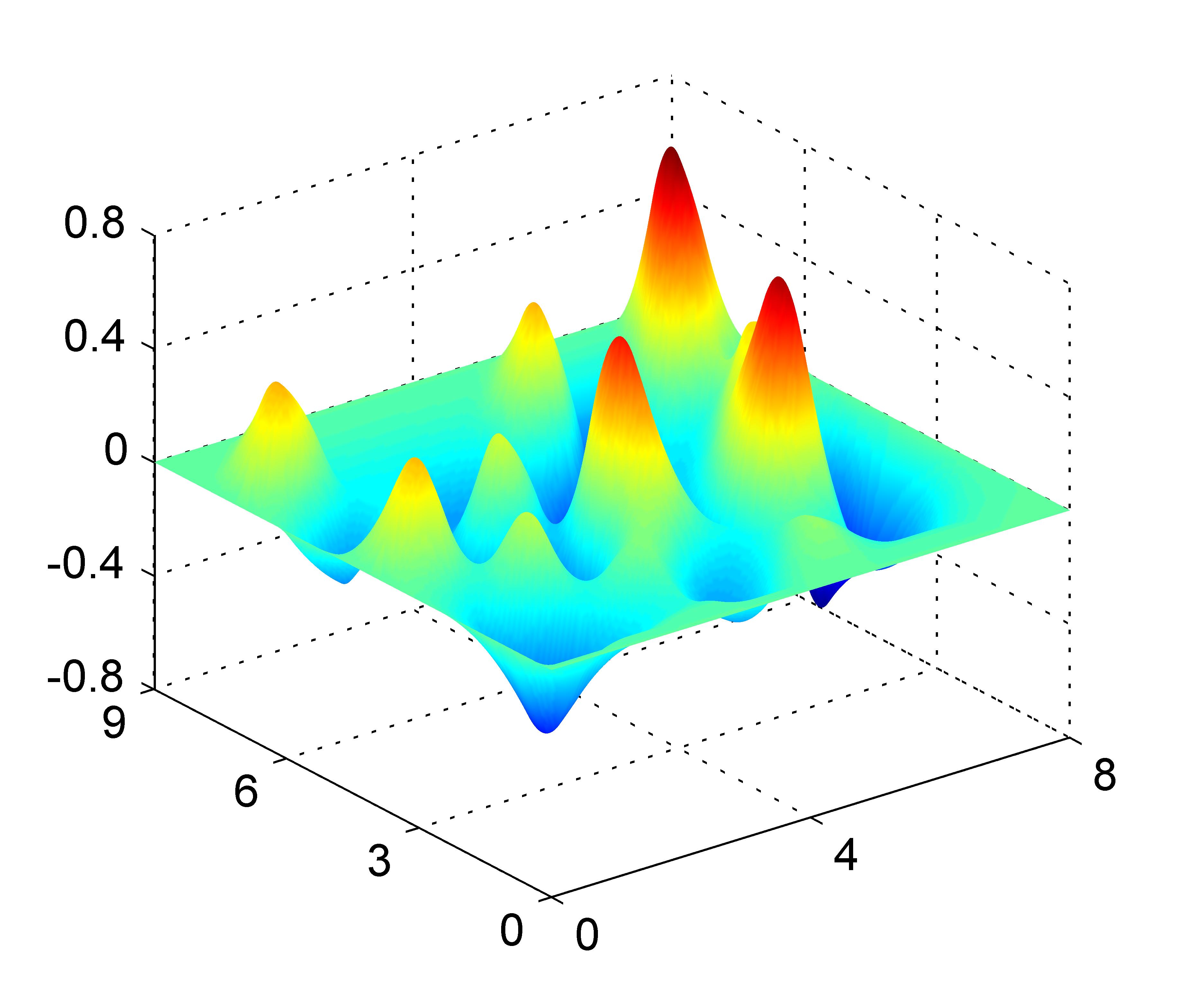}
\includegraphics[width=40mm, height=28mm]{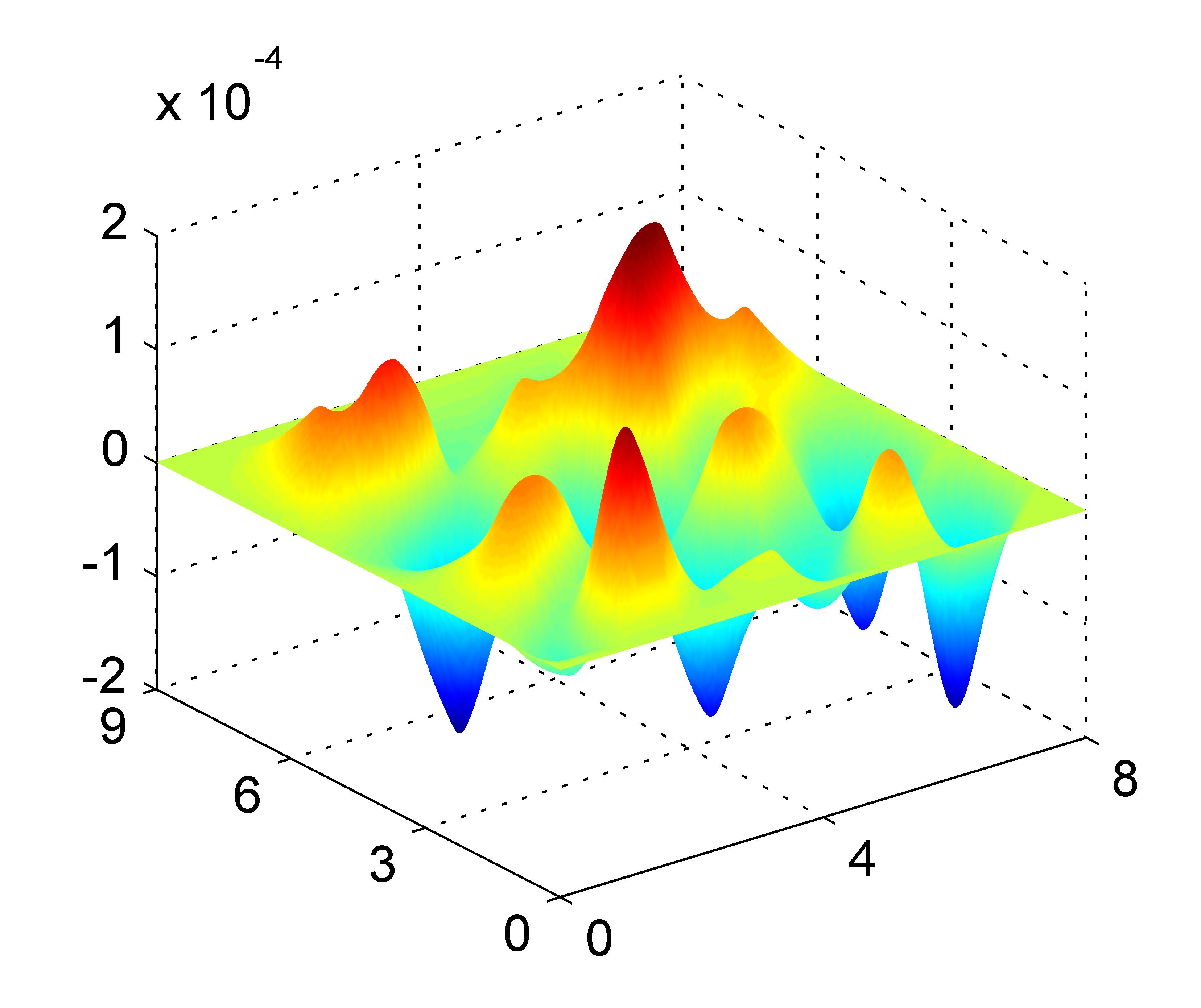}
\includegraphics[width=40mm, height=28mm]{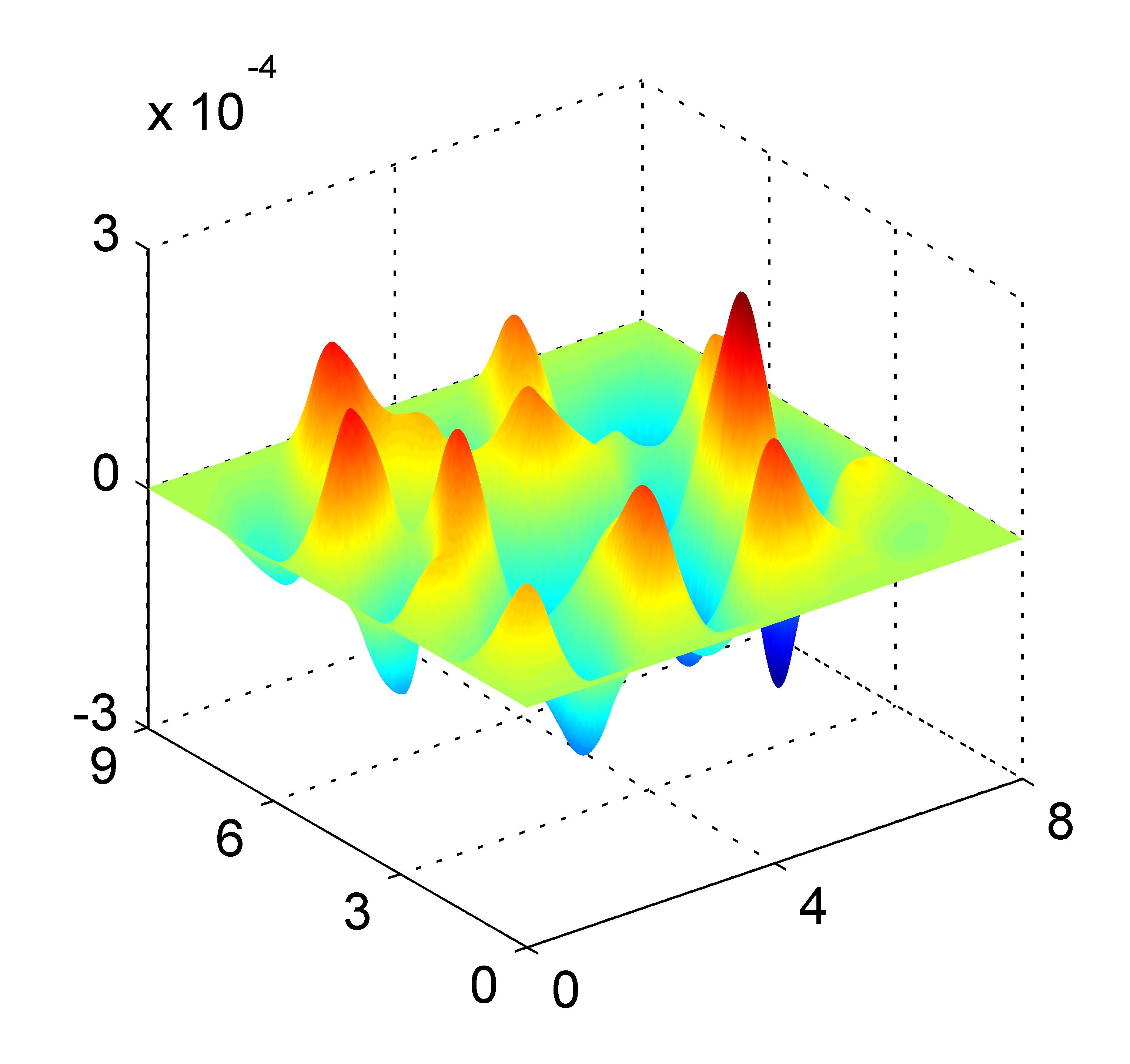}
\caption{Plotted on the left is a nonseparable signal of the form \eqref{nontensorsignal.def0},  
 where $K=[0, 9]\times [0, 8]$,
and in the middle and on the right are the difference between the above signal $f$ and the signal $f_{\epsilon}$ reconstructed by the MAPSET algorithm
with noise level $\epsilon = 10^{-4}$ and the sampling set $\Gamma$ being  $\Gamma_2$ and $\Gamma_3$ in
Figure \ref{sample_nontensor.pic}, respectively.
The  maximal amplitude error  $e(\epsilon)$ in \eqref{rec.error.def}
is $2.4922\times 10^{-4} $ (middle) and $3.8975\times 10^{-4}$ (right).  The reconstruction error $\min_{\delta \in \{-1, 1\}}\|f_{\epsilon}-\delta f\|_{\infty}$ is  $1.9660\times 10^{-4} $ (middle) 
and $ 2.9216\times 10^{-4}$ (right).  
}
\label{signal_nontensor.fig}
\end{center}
\end{figure}

As in Section \ref{tensor.al}, 
the MAPSET algorithm may not yield a good approximation to the original signal if the noisy level $\epsilon$ is not sufficient small.  Our numerical results indicate that the MAPSET algorithm sometimes fails to save the phase of the original signal $f$ when $\epsilon \ge 9 \times 10^{-3}$ for $\Gamma=\Gamma_2$ and $\epsilon \ge 5\times 10^{-3}$ for $\Gamma=\Gamma_3$.

\section{Proofs}
\label{proof.section}

In this section, we include the proofs of Theorems  \ref{necessarypr.thm},
\ref{tofinite.thm}, \ref{necandsuf.thm},  \ref{finite.cor},  \ref{sufficient.thm}, \ref{stability.thm}
and Corollary \ref{tensor.cor}.

\subsection{Proof of Theorem \ref{necessarypr.thm}}
\label{necessary.appendix}

Suppose, on the contrary, that  ${\mathcal G}_f$ is disconnected.
Let $W$ be the set of vertices in a connected component of the graph ${\mathcal G}_f$. Then
$W\ne \emptyset, V_f\backslash W\ne \emptyset$, and
there are no edges between vertices in $W$ and $V_f\backslash W$.
Write
 \begin{eqnarray}\label{necessarypr.thm.pf.eq1}
 f &\hskip-0.08in  = & \hskip-0.08in \sum_{{k}\in \Zd} c({k}) \phi(\cdot-{k})=
\sum_{k\in V_f} c({ k}) \phi(\cdot-{ k})\nonumber
 \\
 & \hskip-0.08in = & \hskip-0.08in \sum_{{ k}\in W}c({k})\phi(\cdot-{k})+\sum_{{ k}\in V_f\backslash W}c({k})\phi(\cdot-{ k}) 
 =: f_1+f_2\end{eqnarray}
where $c({k})\in \R, {k}\in \Zd$.
From the global linear independence on $\phi$ and nontriviality of the sets $W$  and $V_f\backslash W$, we obtain
\begin{equation} \label{necessarypr.thm.pf.eq2}
f_1\ne 0 \ \ {\rm  and}\  \  f_2\ne 0.\end{equation}
Combining
\eqref{necessarypr.thm.pf.eq1} and \eqref {necessarypr.thm.pf.eq2}
   with nonseparability of the signal $f$, we obtain that $f_1({x}_0)f_2({x}_0)\neq 0$ for some ${x}_0\in \Rd$.  Then by the global linear independence of   $\phi$,  there exist ${k}\in W$ and  ${k}'\in  V_f\backslash W$ such that $\phi({x}_0-{k})\neq 0$ and  $\phi({ x}_0-{k}')\neq 0$.
   Hence $({k}, {k}')$ is an edge between ${ k}\in W$ and ${ k}'\in V_f\backslash W$,  which contradicts to the construction of the set $W$. 

\subsection{Proof of Theorem \ref{tofinite.thm}}\label{tofinite.section}
A linear  space $V$ 
 on $\Rd$ is said to be {\em locally finite-dimensional} if it has finite dimensional restrictions on any bounded
open set.  The shift-invariant space
generated by a compactly supported function is locally finite-dimensional.
 The reader may refer to \cite{akram02} and references therein
on locally finite-dimensional spaces.  In this section, we will prove the following generalization of Theorem \ref{tofinite.thm}.

 \begin{theorem}\label{tofinitenew.thm}  Let
 $V$ be locally finite-dimensional shift-invariant  space of functions on $\Rd$.
Then there exists a finite set $\Gamma\subset (0, 1)^d$ such that
any nonseparable signal $f\in  V$ can be reconstructed, up to a sign,
 from its phaseless samples on $\Gamma+\Zd$.
 \end{theorem}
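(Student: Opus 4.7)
My plan is to choose $\Gamma$ as a finite set in $(0,1)^d$ at which the rank-one matrices $\Psi(\gamma)\Psi(\gamma)^T$ span a certain finite-dimensional linearization space, so that the phaseless samples on $\Gamma+\Z^d$ linearly determine $|f|^2$ on a full-measure subset of $\R^d$, at which point nonseparability lets us recover $f$ up to a sign. The argument naturally mirrors the special case $V=V(\phi)$ sketched in the paper after Theorem~\ref{tofinite.thm}, but without invoking any specific generator.

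First I would exploit the hypothesis that $V$ is locally finite-dimensional: the restriction of $V$ to the bounded open cube $A=(0,1)^d$ is a finite-dimensional space of functions. I would fix a basis $\psi_1,\ldots,\psi_n$ of $V|_A$ and let $\Psi(x)=(\psi_1(x),\ldots,\psi_n(x))^T$, so that every $f\in V$ admits a unique representation $f|_A(x)=c(f)^T\Psi(x)$ with coefficient vector $c(f)\in\R^n$. Consequently $|f(x)|^2=\langle c(f)c(f)^T,\Psi(x)\Psi(x)^T\rangle$ in the Frobenius inner product on symmetric $n\times n$ matrices. The matrix-valued map $x\mapsto\Psi(x)\Psi(x)^T$ takes values in the finite-dimensional space of symmetric matrices, so the linear span
\[
W:=\mathrm{span}\bigl\{\Psi(x)\Psi(x)^T:\,x\in A\bigr\}
\]
is finite-dimensional (of dimension at most $n(n+1)/2$).

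Next I would extract a finite test set from $W$: iteratively pick $\gamma_1,\gamma_2,\ldots\in A$ so that each $\Psi(\gamma_{j+1})\Psi(\gamma_{j+1})^T$ lies outside the span of the previously chosen matrices; this terminates after $K:=\dim W$ steps and produces $\Gamma:=\{\gamma_1,\ldots,\gamma_K\}\subset(0,1)^d$ with $\{\Psi(\gamma_j)\Psi(\gamma_j)^T\}_{j=1}^K$ a basis of $W$. For any $x\in A$, unique coefficients $\lambda_j(x)$ give $\Psi(x)\Psi(x)^T=\sum_{j=1}^K\lambda_j(x)\Psi(\gamma_j)\Psi(\gamma_j)^T$, and pairing with $c(f)c(f)^T$ yields
\[
|f(x)|^2=\sum_{j=1}^K\lambda_j(x)\,|f(\gamma_j)|^2,\qquad x\in A.
\]
Because $V$ is shift-invariant under $\Z^d$, the same identity applied to $f(\cdot+l)\in V$ translates to
\[
|f(y)|^2=\sum_{j=1}^K\lambda_j(y-l)\,|f(\gamma_j+l)|^2,\qquad y\in l+A,\ l\in\Z^d.
\]
Thus the phaseless samples $|f(\gamma)|$, $\gamma\in\Gamma+\Z^d$, determine $|f|^2$ on the full-measure open set $\bigcup_{l\in\Z^d}(l+A)=\R^d\setminus(\partial A+\Z^d)$.

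Finally I would close the loop using phase retrievability: the characterization of nonseparability given in Theorem~\ref{separable.tm} extends verbatim to the locally finite-dimensional shift-invariant setting, since nonseparability is defined by the multiplicative condition $f_1 f_2=0$ which is intrinsic to pointwise (and hence almost-everywhere) values. Consequently a nonseparable $f\in V$ is determined up to a sign by $|f|$ on the full-measure set produced in the previous step. The main obstacle I anticipate is verifying that the extension of Theorem~\ref{separable.tm} to this abstract setting tolerates losing the boundary null set $\partial A+\Z^d$; this should follow from the fact that separability is formulated in terms of a.e.-multiplicative annihilation, but if continuity or some regularity of elements of $V$ is needed, one could inflate $A=(0,1)^d$ to a slightly larger cube and repeat the selection of $\Gamma$ so as to cover $\R^d$ by open translates with overlaps, thereby bypassing the boundary issue entirely.
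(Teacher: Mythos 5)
Your proposal is correct and follows essentially the same route as the paper's proof: restrict to $A=(0,1)^d$, use local finite-dimensionality to fix a basis of $V|_A$, span the symmetric outer products $\Psi(x)\Psi(x)^T$, choose $\Gamma$ so that these matrices at $\gamma\in\Gamma$ form a basis of that span, deduce the linear formula $|f(x)|^2=\sum_{\gamma\in\Gamma}\lambda_\gamma(x)|f(\gamma)|^2$ on $A$, propagate by shift-invariance, and conclude via the nonseparability-implies-phase-retrievability direction. The only difference is that you explicitly treat the null set $\partial A+\Z^d$ missed by the open translates (a point the paper leaves implicit, resting on continuity of the signals), and your fallback of inflating the cube to overlapping open translates is a legitimate way to close that gap.
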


\begin{proof}
Let $A=(0,1)^d$ and  $V|_A$  be the space containing restrictions of all signals in $V$  on $A$.
By the shift-invariance, it suffices to find a set $\Gamma\subset A$ such that
 \begin{equation}\label{tofinitenew.pf.eq1}
|f({x})|^2=\sum_{\gamma\in \Gamma}
d_\gamma({x}) |f({\gamma})|^2, \ x\in A
\end{equation}
 hold for all  $f\in V$.
  By the assumption on $V$,   $V|_A$ is a finite-dimensional space. Let $ g_n\in V, 1\le n\le N$, be a basis for $V|_A$, and  $W$ be the linear space generated by symmetric matrices
$$G({x}):= \big(g_n({x}) g_{n'}({x})\big)_{1\le n, n'\in N}, \ {x}\in A.$$
Then there exists a finite set $\Gamma\subset A$ with cardinality at most $N(N+1)/2$ such that
$G({\gamma}), \gamma\in \Gamma$, is a basis  for the space $W$.
This implies that  for any ${x}\in A$ there exist $d_\gamma({x}), \gamma\in \Gamma$, such that
$$G(x)=\sum_{\gamma\in \Gamma} d_\gamma({x}) G({\gamma}), \ {x}\in A.$$
For any $f\in V$, we write $f=\sum_{n=1}^N c_n g_n$ on $A$.
Then
\begin{eqnarray*}
|f({x})|^2 & = &  \Big|\sum_{n=1}^N c_n g_n({ x})\Big|^2=
\sum_{n,n'=1}^N c_n c_{n'} g_n({x}) g_{n'}({x})\\
& = &
\sum_{n,n'=1}^N c_n c_{n'} \Big(\sum_{\gamma\in \Gamma}d_\gamma(x)  g_n(\gamma) g_{n'}(\gamma)\Big) =  \sum_{\gamma\in \Gamma}
d_\gamma(x) |f({\gamma})|^2,\ x\in A.
\end{eqnarray*}
This proves \eqref {tofinitenew.pf.eq1} and hence completes the proof.
\end{proof}

\subsection{Proof of Theorem \ref{necandsuf.thm}}
\label{necandsuf.appendix}

The implication (iii)$\Longrightarrow$(i) is obvious, while the  implication (i)$\Longrightarrow$(ii) follows from Theorem \ref{necessarypr.thm} with its proof given in Section  \ref{necessary.appendix}.
 Now it remains to prove  
  (ii)$\Longrightarrow$(iii).
Let $\Gamma_m, 1\le m\le M$, be finite sets constructed in Proposition \ref{finiteset.pr} 
 with the set $A$ and the space $V$ replaced by $A_m$ and $V(\phi)$ respectively, and set $\Gamma=\cup_{m=1}^M \Gamma_m$.

 Let
$f,  g\in V(\phi)$  satisfy
\begin{equation}\label{necandsuf.pf.eq-1}
|g(y)|= |f(y)|\ {\rm  for\ all} \ y\in  \Gamma+\Zd.\end{equation}
Then it suffices to prove that
\begin{equation}\label{necandsuf.pf.eq0} g=\epsilon f
\end{equation}
for some $\epsilon \in \{-1, 1\}$.
By Proposition \ref{finiteset.pr}  and the shift-invariance of the linear space $V(\phi)$,
\begin{equation*}
|g(x+l)|= |f(x+l)|,\   x\in A_m\end{equation*}
where ${l}\in \Zd$ and $1\le m\le M$.  This, together  shift-invariance of the linear space $V(\phi)$
and local complement property on $A_m, 1\le m\le M$,
implies the existence of $\epsilon_{ l, m}\in \{-1, 1\}$ 
such that
\begin{equation}\label{necandsuf.pf.eq1}
g( x)=\epsilon_{ l, m} f(x), \  x\in A_m+l.
\end{equation}
 Write
$f=\sum_{{ k}\in \Zd} c({k}) \phi(\cdot-k)$ and
 $g=\sum_{{k}\in \Zd} d({ k}) \phi(\cdot-{k})\in V(\phi)$.
 Then it follows from  \eqref{necandsuf.pf.eq1} and local linear independence on $A_m$ for the generator $\phi$
 that
 \begin{equation}\label{necandsuf.pf.eq2}
 d({k}'+{ l})= \epsilon_{{ l}, m} c({ k}'+{ l}), \ {k}'\in K_{A_m},
 \end{equation}
 where $K_{A_m}$ is given in \eqref{KA.def}. Hence the proof of \eqref{necandsuf.pf.eq0} reduces to showing
 \begin{equation}\label{necandsuf.pf.eq3}
 \epsilon_{{ l}, m}=\epsilon
 \end{equation}
 for all ${l}\in \Zd$ and $1\le m\le M$ so that ${ k}'+{l}\in V_f$
 for some $k' \in K_{A_m}$.

 Recall that $c({ k})\ne 0$ for all ${k}\in V_f$.
Then by \eqref{necandsuf.pf.eq2}  there exist $\delta_{ k}\in \{-1, 1\}$ for all ${ k}\in V_f$
 such that
\begin{equation} \label{necandsuf.pf.eq4}
 \epsilon_{{ l}, m}=\delta_{k}
\end{equation}
  for all ${ l}\in \Zd$ and $1\le m\le M$ so that ${ k}={ k}'+{l}\in V_f$
 for some $k' \in K_{A_m}$. By \eqref{necandsuf.pf.eq4} and the connectivity of the graph ${\mathcal G}_f$,
  the proof of \eqref{necandsuf.pf.eq3}
reduces further to proving
 \begin{equation}\label{necandsuf.pf.eq5} \delta_{ k}=\delta_{\tilde{ k}} \end{equation}
 for all edges $({ k}, \tilde{ k})$ of the graph ${\mathcal G}_f$.

For an edge $({ k}, \tilde{k})$ of the graph ${\mathcal G}_f$,
we have that
$$S:=\{{x}: \  \phi({x}-{k})\phi({ x}-\tilde{k})\ne 0\}\ne \emptyset.$$
Then
there exist $1\le m\le M$  by \eqref{necandsuf.thm.eq2} and \eqref{necandsuf.thm.eq1} such that
$S\cap (A_m+{k})\ne \emptyset$. Thus
  ${ k}, \tilde{k}\in K_{A_m}+{k}$,  which together with
  \eqref{necandsuf.pf.eq2} and \eqref{necandsuf.pf.eq4} implies that
$\delta_{k}=\epsilon_{ k, m} =\delta_{\tilde{k}}$.
Hence \eqref{necandsuf.pf.eq5} holds.  This completes the proof.

\subsection{Proof of Corollary \ref{tensor.cor}} \label{tensor.proof}
As  restriction of  signals in $V(B_{\bf N})$ on $(0, 1)^d$
 are polynomials of finite degree, the space $V(B_{\bf N})$ has the complement property on $(0, 1)^d$.
Set ${\bf n}=(n, \ldots, n)$ for $n\in \Z$.
 It is observed that the function
$\Phi_{(0,1)^d}$ in \eqref{phiA.def}
is  a vector-valued polynomial of degree ${\bf N}-{\bf 1}$,
and  its outer product
$\Phi_{(0,1)}(x) \Phi_{(0,1)}(x)^T$ is
 a matrix-valued polynomial of degree $2{\bf N}-{\bf 2}$.
Therefore
 $\Phi_{(0,1)}(y) \Phi_{(0,1)}(y)^T, y\in X_1\times \cdots\times X_d$,
 is a spanning set of the space spanned by $\Phi_{(0,1)}(x) \Phi_{(0,1)}(x)^T, x\in (0,1)^d$.
 This together with Theorem \ref{necandsuf.thm} completes the proof.

\subsection{Proof of Theorem \ref{finite.cor}}
Let $f=\sum_{k\in \Zd}c(k)\phi(\cdot-k)$ and $g=\sum_{k\in \Zd}d(k)\phi(\cdot-k)$
satisfy
$$|g(y)|=|f(y)| \ \ {\rm for \ all}\ \  y\in \Gamma'+\Zd,$$
where $\Gamma'=\cup_{m=1}^M \Gamma_m^\prime$ is given in \eqref{gamma.2}.
 Take  $l\in \Zd$ and  $ 1\le m\le M$. Then
      $$\Big|\sum_{k\in K_{A_m}+l}d(k)\phi(\gamma_m+l-k)\Big|=\Big|\sum_{k\in K_{A_m}+l}c(k)\phi(\gamma_m+l-k)\Big|\quad {\rm for \ all} \ \gamma\in \Gamma_m^\prime.$$
By the assumption on $\Phi_{A_M}(\gamma'), \gamma\in \Gamma_m', 1\le m\le M$, and the shift-invariance of the linear space $V(\phi)$,  there exists $\epsilon_{m,l}\in \{1,-1\}$ such that
 \begin{equation*}
      d(k)=\epsilon_{m,l}c(k), \ k\in K_{A_m}+l.
      \end{equation*}
 Following the same argument as the one used in  the implication i)$\Longrightarrow$iii) in Theorem \ref{necandsuf.thm},  we can find $\epsilon\in \{-1, 1\}$ such that
    $\epsilon_{l,m}=\epsilon $
      for all $l\in \Zd$ and $1\le m\le M$. This completes the proof.

\subsection{Proof of Theorem \ref{sufficient.thm}}
By Proposition \ref{llilocalcomplement.pr},
 there are open sets $A_1, \ldots, A_M$ satisfy the requirements
in Theorem \ref{necandsuf.thm}. Then the conclusion in  Theorem \ref{sufficient.thm}
follows from Theorem \ref{necandsuf.thm}.

\subsection{Proof of Theorem \ref{stability.thm}}
Given $\Gamma\subset \Rd$ and $f=\sum_{k\in \Zd} c(k) \phi(\cdot-k)$, we define
\begin{equation}\label{graphG2.def}
\widetilde {\mathcal G}_{f, \Gamma}=(V_f, E_{f, \Gamma}),\end{equation}
where
$(k, k')\in E_{f, \Gamma}$ only if
$\phi({ y}-k) \phi({ y}-k')\ne 0$
for some ${y}\in \Gamma +\Zd$.  
To prove Theorem \ref{stability.thm}, we need a lemma about  the graph ${\mathcal G}_f$.

\begin{lemma}\label{stability.lem}  Let  $\phi$,  $A_m$ and $\Gamma_m, 1\le m\le M$, be as in Theorem \ref{stability.thm}. Set $\Gamma=\cup_{m=1}^M \Gamma_m$.
 Then for any  $f\in V(\phi)$, the graph ${\mathcal G}_f$ in \eqref{graphG.def} and
 $\widetilde {\mathcal G}_{f, \Gamma}$ in \eqref{graphG2.def} are the same,
\begin{equation} \label{stability.lem.eq1}
{\mathcal G}_f=\widetilde {\mathcal G}_{f, \Gamma}.
\end{equation}
\end{lemma}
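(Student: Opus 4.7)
The plan is to establish the two inclusions $E_{f,\Gamma} \subseteq E_f$ and $E_f \subseteq E_{f,\Gamma}$; the vertex sets of both graphs coincide by definition. The first inclusion is immediate, since any $y \in \Gamma + \Zd$ already lies in $\Rd$, so an edge witnessed at $y$ by the second graph is a fortiori an edge of the first.

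For the reverse inclusion, I would suppose $(k,k') \in E_f$ and try to produce some $y \in \Gamma + \Zd$ with $\phi(y-k)\phi(y-k') \neq 0$. Setting $n = k - k'$, the edge condition translates to $S_n \neq \emptyset$. By the hypothesis \eqref{necandsuf.thm.eq1} imposed on the open sets $A_m$, some translate $A_m + l$ meets $S_n$; shifting by $-l$ produces $w_0 \in A_m$ with $\phi(w_0-(-l))\phi(w_0-(n-l)) \neq 0$. In particular $i := -l$ and $j := n-l$ both lie in $K_{A_m}$, and the corresponding shifts of $\phi$ overlap on $A_m$.

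The key step, and the main obstacle, is to promote this overlap on $A_m$ to one witnessed by some $\gamma \in \Gamma_m$. I would argue by contradiction: suppose no $\gamma \in \Gamma_m$ satisfies $\phi(\gamma-i)\phi(\gamma-j) \neq 0$, and partition $\Gamma_m$ into $\Gamma_m^{(1)} = \{\gamma : \phi(\gamma-i) = 0\}$ and its complement $\Gamma_m^{(2)}$, noting that on $\Gamma_m^{(2)}$ the assumption forces $\phi(\gamma-j) = 0$. Then the two vectors $c = e_i + e_j$ and $d = e_i - e_j$ in $\R^{K_{A_m}}$ satisfy $|\langle \Phi_{A_m}(\gamma), c\rangle| = |\langle \Phi_{A_m}(\gamma), d\rangle|$ for every $\gamma \in \Gamma_m$ (on each piece of the partition only one of the two coordinates contributes), yet $c \neq \pm d$ because $i \neq j$. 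This contradicts the hypothesis that $\{\Phi_{A_m}(\gamma)\}_{\gamma \in \Gamma_m}$ is a phase retrievable frame for $\R^{K_{A_m}}$.

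Applying this combinatorial step to $i = -l$ and $j = n-l$ produces $\gamma \in \Gamma_m$ with $\phi(\gamma+l)\phi(\gamma+l-n) \neq 0$; setting $y := \gamma + l + k' \in \Gamma_m + \Zd \subseteq \Gamma + \Zd$ then yields $\phi(y-k)\phi(y-k') \neq 0$, so $(k,k') \in E_{f,\Gamma}$ as desired. What makes the argument go through is that phase retrievability is a substantially stronger hypothesis than being a mere frame: it forces every pair of indices in $K_{A_m}$ to be ``linked'' by some sampling point in $\Gamma_m$, which is exactly the combinatorial content needed to convert the continuous overlap condition defining $\mathcal{G}_f$ into the discrete one defining $\widetilde{\mathcal{G}}_{f,\Gamma}$.
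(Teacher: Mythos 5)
Your proof is correct and is essentially the paper's own argument in different notation: your vectors $c=e_i+e_j$ and $d=e_i-e_j$ are exactly the coefficient vectors of the paper's test functions $g_{\pm}(x)=\phi(x+l_0-k)\pm\phi(x+l_0-k')$, your partition of $\Gamma_m$ reproduces the paper's observation that $|g_+(\gamma)|=|g_-(\gamma)|$ at the sample points, and both proofs close by invoking the phase retrievable frame property of $\Phi_{A_m}(\gamma),\gamma\in\Gamma_m$. The only (cosmetic) difference is that you reach the contradiction at the level of coefficient vectors ($c\neq\pm d$), whereas the paper phrases it at the level of functions, concluding that one translate of $\phi$ would have to vanish identically on $A_{m_0}$, contradicting the choice of the overlap point.
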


\begin{proof} Clearly it suffices to prove that an edge 
in ${\mathcal G}_f$ is also an edge in $\widetilde {\mathcal G}_{f, \Gamma}$.
Suppose, on the contrary, that there exists an edge $(k, k')\in E_f$
such that
\begin{equation}\label{stability.lem.pf.eq1}
\phi({y}-k) \phi({ y}-k')= 0 \ {\rm for \ all}\ { y}\in \cup_{m=1}^M \Gamma_m+\Zd.
\end{equation}
Define
\begin{equation}
\label{stability.lem.pf.eq2}
A=\{{ x}\in \Rd: \ \phi({ x}-k)\phi({ x}-k')\ne 0\} \ne \emptyset.
\end{equation}
   By \eqref{necandsuf.thm.eq1},  there exist $l_0\in \Zd$ and $1\le m_0\le M$ such that
\begin{equation}\label{stability.lem.pf.eq3}
A\cap (A_{m_0}+l_0)\ne \emptyset. \end{equation}
Set $g_{\pm}( x) = \phi({ x}+l_0-k)\pm  \phi({x}+l_0-k')$.
Then it follows from \eqref{stability.lem.pf.eq1} that
$$|g_{\pm}(\gamma)| 
=|\phi(\gamma+l_0-k)|+|\phi( \gamma+l_0-k')|, \ \gamma\in \Gamma_{m_0}.
$$
 By
 the construction of the set $\Gamma_{m_0}$, 
  we get either
$g_{+} =g_{-}$ or $g_{+}= -g_{-}$ on $A_{m_0}$. Therefore
either $\phi({x}+l_0-k)\equiv 0$ on $A_{m_0}$ or $\phi({ x}+l_0-k')\equiv 0$ on $A_{m_0}$. This contradicts to the construction of set $A$ in \eqref{stability.lem.pf.eq2} and \eqref{stability.lem.pf.eq3}.
\end{proof}

Now, we continue the proof of Theorem \ref{stability.thm}.

\begin{proof} [Proof of Theorem \ref{stability.thm}]
 Take $l \in \Zd$ and $1\le m\le M$.
For $\gamma\in \Gamma_m$ there exists $\tilde \delta_{\gamma, l, m}\in \{-1, 1\}$ such that
\begin{eqnarray}\label{stability.thm.pf.eq1}
 & &\Bigg(\sum_{\gamma\in \Gamma_m} \Bigg|\sum_{k\in l+\Omega_{m}} (c_{ \epsilon, l; m}(k)-\tilde \delta_{ \gamma, l; m} c(k)) \phi(\gamma+l-k)\Bigg|^2\Bigg)^{1/2}
 \nonumber\\
 & = & \Bigg(\sum_{\gamma\in \Gamma_m}\Bigg| \Big|\sum_{k\in l+\Omega_{m}} c_{ \epsilon, l; m}(k)\phi(\gamma+l-k)\Big|\nonumber\\
 & & \qquad \qquad\quad -
\Big|\sum_{k\in l+\Omega_{m}} c(k) \phi(\gamma+l-k)\Big|\Bigg|^2\Bigg)^{1/2}\nonumber\\
 & \le &
 \Bigg(\sum_{\gamma\in \Gamma_m}\Bigg| \Big|\sum_{k\in l+\Omega_{m}} c_{\epsilon, l; m}(k)\phi(\gamma+l-k)\Big|-
 z_{\epsilon} (\gamma+l)\Bigg|^2\Bigg)^{1/2}\nonumber\\
 &  & +  \Bigg(\sum_{\gamma\in \Gamma_m}\Bigg|
\Big|\sum_{k\in l+\Omega_{m}} c(k) \phi(\gamma+l-k)\Big|-z_{\epsilon} (\gamma+l)\Bigg|^2\Bigg)^{1/2}\nonumber\\
& \le & 2 \Bigg(\sum_{\gamma\in \Gamma_m}\Bigg|
\Big|\sum_{k\in l+\Omega_{m}} c(k) \phi(\gamma+l-k)\Big|-z_{\epsilon} (\gamma+l)\Bigg|^2\Bigg)^{1/2}\nonumber\\
&\le &  2 \sqrt{\# \Gamma_m}  \|\epsilon\|_{\infty}\le  2  \sqrt{\# \Gamma} \|\epsilon\|_{\infty} ,
\end{eqnarray}
where the second inequality holds by \eqref{meps.def2}
and the last inequality follows from
$$z_{ \epsilon} (\gamma+l)=\Big|\sum_{k\in l+\Omega_{m}} c(k) \phi(\gamma+l-k)\Big|+\epsilon({\gamma}+l), \ {\gamma}\in \Gamma_m.$$

 From the phase retrievable frame property for $\big(\phi(\gamma-k)\big)_{k\in K_{A_m}}, \gamma\in \Gamma_m$, we obtain that
 \begin{equation}\label{omegam.equiv}
 \Omega_m=K_{A_m},\  1\le m\le M.
 \end{equation}
Let $A_{l, m}=\{\gamma\in \Gamma_m: \  \delta_{\gamma, l; m}=1\}$.
 This together with \eqref{omegam.equiv} and the phase retrievable frame assumption that either $\big(\phi(\gamma-k)\big)_{k\in \Omega_{m}}, \gamma\in A_{l, m}$ or
$\big(\phi(\gamma-k)\big)_{k\in \Omega_{m}},   \gamma\in \Gamma_m\backslash A_{l, m}$ is a spanning set for $\R^{\# \Omega_{m}}$. 
This together with
\eqref{stability.thm.pf.eq1} implies that
\begin{equation} \label{stability.thm.pf.eq2}
 \Big(\sum_{k\in l+\Omega_{m} } \big|c_{\epsilon, l; m}(k)- \tilde\delta_{l,m} c(k)\big|^2\Big)^{1/2} 
  \le   2 \|\Phi^{-1}\|_2 \sqrt{\# \Gamma} \|\epsilon\|_\infty
\end{equation}
for some sign $\tilde\delta_{l, m}\in \{-1, 1\}$.

Now we show that phases of  ${ c}_{\epsilon, l; m}, l\in \Zd, 1\le m\le M$,
can be adjusted so that \eqref{meps.def13} holds.
Let $\tilde\delta_{l, m}, l\in \Zd, 1\le m\le M$, be as in \eqref{stability.thm.pf.eq2}.
Then for any $l, l'\in \Zd$ and  $1\le m, m'\le M$,  set
$\Omega_{l, m; l',  m'}=(\Omega_m+l)\cap (\Omega_{l'}+m')$. Then
\begin{eqnarray}\label{stability.thm.pf.eq3}
 \quad  & & \langle \tilde\delta_{l, m}{ c}_{\epsilon, l; m},
\tilde\delta_{l', m'}{c}_{\epsilon, l'; m'}\rangle
  =  \sum_{k\in \Omega_{l, m; l',  m'}}
 \tilde\delta_{l, m} \tilde\delta_{l', m'} c_{ \epsilon, l; m}(k)
 c_{ \epsilon, l'; m'}(k)
 \nonumber\\
  \qquad  & \ge  &     \sum_{k\in \Omega_{l, m; l', m'}}|c(k)|^2
-
  \sum_{k\in \Omega_{l, m; l',  m'}}
|c(k)|  | \tilde \delta_{l', m'}c_{ \epsilon, l', m'}(k)-c(k)| \nonumber\\
   & & - \sum_{k\in \Omega_{l, m;  l', m'}}
 |\tilde\delta_{l, m} c_{\epsilon,  l; m}(k)- c(k)|
|c(k))|\nonumber\\
& & - \sum_{k\in \Omega_{l, m; l', m'}}
 |\tilde\delta_{l, m} c_{ \epsilon, l; m}(k)- c(k)|
  |\tilde \delta_{l', m'}c_{\epsilon, l'; m'}(k)-c(k)|\nonumber\\
  &\ge  &  -\frac{1}{2}
  \sum_{k\in \Omega_{l, m; l',  m'}}
  \Big(| \tilde\delta_{l', m'}c_{\epsilon, l'; m'}(k)-c(k)|^2 +
 |\tilde\delta_{l, m} c_{\epsilon,  l; m}(k)- c(k)|^2\Big)
\nonumber\\
& & - \sum_{k\in \Omega_{l, m; l',  m'}}
 |\tilde\delta_{l, m} c_{\epsilon, l; m}(k)- c(k)|
  |\tilde \delta_{l', m'}c_{\epsilon, l'; m'}(k)-c(k)|\nonumber\\
 \qquad  & \ge &  -8 \|\Phi^{-1}\|_2^2  \#\Gamma  \|\epsilon\|_\infty^2\ge -M_0,
\end{eqnarray}
where  the third inequality follows from \eqref{stability.thm.pf.eq2}
and the last inequality holds by the assumption \eqref{stability.thm.eq1}
on the noise level $\|\epsilon\|_\infty$ and the threshold constant $M_0$.

 The phase adjustments in \eqref{meps.def13}  for ${c}_{\epsilon, l, m}, l\in \Zd, 1\le m\le M$,
are non-unique.  Next we show  that they are
essentially the phase adjustments in \eqref{stability.thm.pf.eq3}, i.e.,
 for any  phase adjustments  $\delta_{l, m} \in \{-1, 1\}$  in \eqref{meps.def13} 
 there exists $\delta\in \{-1, 1\}$ such that
\begin{equation}\label{stability.thm.pf.eq4}
 \delta_{l, m} c(k)= \delta \tilde\delta_{l, m} c(k)\quad  {\rm for \ all} \ k\in l+\Omega_{m}.
\end{equation}

To prove \eqref{stability.thm.pf.eq4}, we claim that
\begin{equation}\label{stability.thm.pf.eq5}
\tilde \delta_{l, m}/ \delta_{l, m} =  \delta_{l', m'}/ \tilde \delta_{l', m'}
\end{equation}
for all  $(l, m)$ and $(l', m')$ with $\Omega_{l, m; l', m'}\cap V_f\ne\emptyset$.
Suppose on the contrary that \eqref{stability.thm.pf.eq5} does not hold.
Then
$$\langle  \delta_{l, m}{ c}_{\epsilon, l; m},
\delta_{l', m'}{c}_{\epsilon, l'; m}\rangle=-\langle \tilde \delta_{l, m}{ c}_{ \epsilon, l; m},
\tilde \delta_{l', m'}{ c}_{ \epsilon, l'; m}\rangle.
$$
Therefore
\begin{eqnarray*}\label{globalphase.eq}
  & & \langle  \delta_{l, m}{c}_{\epsilon, l; m},
 \delta_{l', m'}{c}_{\epsilon, l'; m}\rangle
 \nonumber\\
 & \le  &    - \sum_{k\in \Omega_{l, m;  l',  m'}}|c(k)|^2
+
  \sum_{k\in \Omega_{l, m;  l',  m'}}
|c(k)|  | \tilde\delta_{l', m'}c_{ \epsilon, l'; m}(k)-c(k)| \nonumber\\
   & & + \sum_{k\in \Omega_{l, m;  l',  m'}}
 |\tilde\delta_{l, m} c_{\epsilon,  l, m}(k)- c(k)|
|c(k))|\nonumber\\
& & + \sum_{k\in \Omega_{l, m;  l',  m'}}
 |\tilde\delta_{l, m} c_{\epsilon, l; m}(k)- c(k)|
  | \tilde\delta_{l', m'}c_{\epsilon, l', m}(k)-c(k)|\nonumber\\
  &\le  &  - \sum_{k\in \Omega_{l, m;  l', m'}}|c(k)|^2
 + 4 \sqrt{\# \Gamma} \|\Phi^{-1}\|_2
       \Big(\sum_{k\in \Omega_{l, m;  l',  m'}}|c(k)|^2\Big)^{1/2} \|\epsilon\|_\infty
       \nonumber\\
       & &
   +   4  \# \Gamma \|\Phi^{-1}\|_2^2
  \|\epsilon\|_\infty^2
\nonumber\\
  &\le  &  - \sum_{k\in \Omega_{l, m;  l', m'}}|c(k)|^2
 + \Big(2M_0\sum_{k\in \Omega_{l, m;  l', m'}}|c(k)|^2\Big)^{1/2}
  +\frac{M_0}{2} < -M_0,
\end{eqnarray*}
where the second inequality follows from \eqref{stability.thm.pf.eq2}, and the third and fourth inequalities hold by \eqref{stability.thm.eq0} and \eqref{stability.thm.eq1}.  This contradicts to the requirement \eqref{meps.def13} for the phase adjustment
and hence completes the proof of the Claim \eqref{stability.thm.pf.eq5}.

By \eqref{stability.thm.pf.eq5}, for any $k\in V_f$ there exist $\delta_k\in \{-1, 1\}$ such that
\begin{equation}\label{stability.thm.pf.eq6}
 \delta_{l, m} c(k)= \delta_{k}\tilde \delta_{l, m} c(k)\quad  {\rm for \ all} \ k\in l+\Omega_{m}.
\end{equation}
Let $(k_1, k_2)$ be an edge in ${\mathcal G}_f$. By Lemma \ref{stability.lem}
there exist $l\in \Zd$ and $1\le m\le M$ such that
$k_1, k_2\in \Omega_m+l$. Therefore
\begin{equation*}
 \delta_{l, m} c(k_1)= \delta_{k_1} \tilde\delta_{l, m} c(k_1)
\ {\rm and}\
 \delta_{l, m} c(k_2)= \delta_{k_2} \tilde\delta_{l, m} c(k_2)
\end{equation*}
by \eqref{stability.thm.pf.eq6}.
This implies that $\delta_{k_1}= \delta_{k_2}$ for any edge $(k_1, k_2)$ in ${\mathcal G}_f$. Combining it with the connected of the graph ${\mathcal G}_f$, we can find
  $\delta\in \{-1, 1\}$ such that
\begin{equation} \label{stability.thm.pf.eq7}
\delta_k=\delta\ \ {\rm for\ all} \ k\in V_f.
\end{equation}
Combining \eqref{stability.thm.pf.eq6} and \eqref{stability.thm.pf.eq7} proves
\eqref{stability.thm.pf.eq4}.

By  \eqref{stability.thm.pf.eq2}
and  \eqref{stability.thm.pf.eq4}, we obtain
\begin{eqnarray}  \label{stability.thm.pf.eq8}
|d_{\epsilon}(k)-\delta {c}(k)|
 & \le &
\frac{\sum_{m=1}^M\sum_{l\in \Zd} |  \delta_{l, m} { c}_{\epsilon, l; m}(k)-\delta { c}(k)| }{\sum_{m=1}^M\sum_{ l\in \Zd} \chi_{l+\Omega_{m}(k)}}\nonumber\\
& = &
\frac{\sum_{m=1}^M\sum_{l\in \Zd} |  { c}_{\epsilon, l; m}(k)- \tilde \delta_{l, m} { c}(k)| }{\sum_{m=1}^M\sum_{l\in \Zd} \chi_{l+\Omega_{ m}(k)}}\nonumber\\
& \le & 2 \sqrt{\# \Gamma} \|\Phi^{-1}\|_2 \|\epsilon\|_\infty, \ k\in \Zd.
\end{eqnarray}
This together with
\eqref{stability.thm.eq1-} and \eqref{stability.thm.eq1}
implies that
\begin{equation} \label{stability.thm.pf.eq9}
|d_{\epsilon}(k)|\ge  \frac{3}{2} \sqrt{M_0} \ \ {\rm for \ all} \ k\in V_f,
\end{equation}
and
\begin{equation}  \label{stability.thm.pf.eq10}
|d_{\epsilon}(k)|\le \frac{1}{2} \sqrt{M_0}  \ \ {\rm for \ all} \ k\not\in V_f.
\end{equation}
Combining \eqref{thresholding}, \eqref{stability.thm.pf.eq8}, \eqref
{stability.thm.pf.eq9} and \eqref{stability.thm.pf.eq10}
 completes the proof of the desired error estimate \eqref{stability.thm.eq3}
 and \eqref{stability.thm.eq3+}.
\end{proof}

\begin{appendix}

\section{Local complement property}
\label{localcomplementproperty.appendix}

A linear  space $V$ 
 on $\Rd$ is said to be {locally finite-dimensional} if it has finite-dimensional restrictions on any bounded
open set. Examples of locally finite-dimensional spaces include the space of polynomials of finite degrees, the shift-invariant space 
generated by  finitely many compactly supported functions, and their linear subspaces. The reader may refer \cite{akram02} and references therein
on locally finite-dimensional spaces.  In this section, we consider the local complement property for a
locally finite-dimensional space, cf. Definition \ref{lcp.def0}.

\begin{definition} {\rm
Let $V$ be a linear space of real-valued continuous signals on $\Rd$, and $A\subset \Rd$. We say that $V$ has {\em local complement property on $A$} if for any $A'\subset A$
there does not exist $f, g\in V$ such that $f, g\not\equiv 0$ on $A$, $f\equiv 0$ on $A'$ and $g\equiv 0$ on $A\backslash A'$.
}
\end{definition}

In the following theorem, we establish the equivalence between the local complement property on a bounded open set and  complement property for ideal sampling functionals on a finite subset.

  \begin{theorem} \label{AtoGamma.thm}
 Let $A$ be a bounded open set
 and $V$ be a locally finite-dimensional space of real-valued continuous signals on $\Rd$.
  Then $V$ has the local complement property on $A$ if and only if there exists a finite set $\Gamma\subset  A$ such that for any $\Gamma'\subset \Gamma$
  either
  there does not exist
 $f\in V$ satisfying
\begin{equation}f \not\equiv 0 \ \ {\rm  on}  \ A\ \ {\rm and} \ \
f({\gamma}')=0, \ {\gamma}'\in \Gamma',
\end{equation}
 or there does not exist
 $g\in V$ satisfying
\begin{equation}g \not\equiv 0 \ \ {\rm  on}  \ A\ \ {\rm and} \ \ g({\gamma})=0, \ {\gamma}\in \Gamma\backslash \Gamma'.
\end{equation}
 \end{theorem}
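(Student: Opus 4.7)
The plan is to convert both properties into an equivalent ``simultaneously non-vanishing product'' statement and then reduce the discrete version to the continuous one by an elementary interpolation argument on a finite-dimensional auxiliary space. First I would reformulate the local complement property on $A$: it is equivalent to the assertion that for every $f, g \in V|_A \setminus \{0\}$, the product $fg$ does not vanish identically on $A$. Indeed, a forbidden decomposition $A = A' \cup (A \setminus A')$ with $f|_{A'} = 0$ and $g|_{A \setminus A'} = 0$ is precisely $fg \equiv 0$ on $A$. A parallel reformulation, taking $\Gamma' = \{\gamma \in \Gamma : f(\gamma) = 0\}$, shows that the conclusion of the theorem is equivalent to: for every $f, g \in V|_A \setminus \{0\}$, $fg$ does not vanish identically on $\Gamma$. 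With this framing, the ``if'' direction is immediate: if such a $\Gamma \subset A$ exists, then $fg \not\equiv 0$ on $\Gamma$ forces $fg \not\equiv 0$ on $A$, since $\Gamma \subset A$.

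For the ``only if'' direction, assume the local complement property on $A$ and use that $V$ is locally finite-dimensional, so $V|_A$ has some finite dimension $N$ with basis $\{g_1, \dots, g_N\}$. I would introduce
\[
W := \mathrm{span}\{g_i g_j|_A : 1 \le i \le j \le N\} \subset C(A),
\]
a finite-dimensional subspace of continuous functions on $A$ with $\dim W \le N(N+1)/2$. For any $f = \sum_i a_i g_i$ and $g = \sum_j b_j g_j$ in $V|_A$, the product $fg|_A = \sum_{i,j} a_i b_j g_i g_j|_A$ lies in $W$; by the local complement property this product is nonzero in $C(A)$, hence nonzero as an element of $W$, whenever $f, g \ne 0$ in $V|_A$. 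The task thus reduces to finding a finite $\Gamma \subset A$ such that the evaluation map $W \ni h \mapsto (h(\gamma))_{\gamma \in \Gamma}$ is injective: then for any $f, g \in V|_A \setminus \{0\}$ the nonzero element $fg$ of $W$ is non-vanishing at some $\gamma \in \Gamma$, giving the discrete complement property.

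The set $\Gamma$ is produced by a standard greedy selection. Set $W_0 = W$; having chosen $\gamma_1, \dots, \gamma_j \in A$, define $W_j := \{h \in W : h(\gamma_i) = 0,\ 1 \le i \le j\}$, and if $W_j \ne \{0\}$ pick any nonzero $h \in W_j$ together with $\gamma_{j+1} \in A$ satisfying $h(\gamma_{j+1}) \ne 0$. This strictly drops the dimension, so after at most $\dim W$ steps one obtains $W_j = \{0\}$. The one point requiring care---and the only mild obstacle---is that each $\gamma_{j+1}$ can actually be chosen \emph{inside} the open set $A$: since $h$ is continuous on $\mathbb{R}^d$ and not identically zero on the open set $A$, the locus $\{h \ne 0\}$ is open and has nonempty intersection with $A$, from which a valid $\gamma_{j+1}$ can be drawn. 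This completes the construction of $\Gamma$ and hence the proof.
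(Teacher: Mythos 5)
Your proof is correct, and it takes a genuinely cleaner logical route than the paper, although both rest on the same finite-dimensional core. The paper dismisses the easy direction as obvious and says the hard direction ``follows from'' Proposition \ref{finiteset.pr}, which produces a finite $\Gamma\subset A$ and functions $d_\gamma$ with $|f(x)|^2=\sum_{\gamma\in\Gamma}d_\gamma(x)|f(\gamma)|^2$ on $A$ for all $f\in V$, obtained by extracting a basis $G(\gamma)$, $\gamma\in\Gamma$, from the span of the symmetric matrices $G(x)=\big(g_n(x)g_{n'}(x)\big)_{n,n'}$. Deducing the theorem from that magnitude identity still needs a polarization step that the paper never writes down: if $f$ vanishes on $\Gamma'$ and $g$ on $\Gamma\setminus\Gamma'$, then $|f+g|=|f-g|$ on $\Gamma$, so the identity applied to $f\pm g$ forces $|f+g|=|f-g|$ on all of $A$, i.e.\ $fg\equiv 0$ on $A$, contradicting the local complement property. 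Your reformulation of both the continuous and the discrete properties as ``no product of two nonzero elements of $V|_A$ vanishes identically'' bypasses magnitudes and polarization entirely: the easy direction becomes a one-line restriction argument, and the hard direction reduces to injectivity of the evaluation map on the product space $W$, which your greedy selection delivers with the same cardinality bound $\#\Gamma\le N(N+1)/2$; that selection is precisely the evaluation-dual of the paper's extraction of a basis from $\{G(x):x\in A\}$, since a set $\Gamma$ makes evaluation injective on $W$ exactly when $G(\gamma)$, $\gamma\in\Gamma$, spans the matrix space. What the paper's formulation buys is reusability --- Proposition \ref{finiteset.pr} is the same mechanism that drives Theorem \ref{tofinitenew.thm}, where the available data genuinely are magnitudes --- whereas yours buys a self-contained and fully explicit proof of this particular equivalence.
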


 The necessity is obvious and the sufficiency
 follows from the following proposition.

 \begin{proposition}\label{finiteset.pr} Let $A$  and $V$ be as in Theorem \ref{AtoGamma.thm}.
 Then there exist a finite set $\Gamma\subset A$ and functions $d_\gamma(x), \gamma\in \Gamma$, such that
 \begin{equation}\label{finiteset.pr.eq1}
|f({x})|^2=\sum_{\gamma\in \Gamma}
d_\gamma({x}) |f({\gamma})|^2, \ x\in A
\end{equation}
 hold for all  $f\in V$.
\end{proposition}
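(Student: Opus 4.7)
The plan is to exploit the local finite-dimensionality of $V$ to reduce the identity to a finite-dimensional linear algebra problem on the space of symmetric matrices.

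First I would restrict everything to $A$. By assumption, the restriction space $V|_A$ is finite-dimensional; pick a basis $g_1,\ldots,g_N$ for $V|_A$, so that any $f\in V$ satisfies $f(x)=\sum_{n=1}^N c_n g_n(x)$ for $x\in A$, with coefficients $c_n=c_n(f)\in \R$. Squaring the pointwise value gives
\begin{equation*}
|f(x)|^2 \;=\; \sum_{n,n'=1}^N c_n c_{n'}\, g_n(x) g_{n'}(x),\qquad x\in A.
\end{equation*}
Thus the pointwise function $|f(\cdot)|^2$ on $A$ is controlled entirely by the rank-one symmetric matrices $G(x):=\bigl(g_n(x)g_{n'}(x)\bigr)_{1\le n,n'\le N}$.

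Next I would consider the linear span $W\subset \R^{N\times N}_{\rm sym}$ of the matrices $\{G(x):x\in A\}$. Since $W$ is a subspace of the space of $N\times N$ symmetric matrices, it is finite-dimensional (of dimension at most $N(N+1)/2$). Therefore I can select a finite subset $\Gamma\subset A$ such that $\{G(\gamma):\gamma\in\Gamma\}$ is a basis of $W$. For each $x\in A$, the matrix $G(x)$ lies in $W$, so there exist scalars $d_\gamma(x)\in\R$ (uniquely determined once the basis is fixed) such that
\begin{equation*}
G(x)\;=\;\sum_{\gamma\in\Gamma} d_\gamma(x)\,G(\gamma),\qquad x\in A.
\end{equation*}

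Finally I would substitute this decomposition into the expression for $|f(x)|^2$. Writing out the bilinear form entrywise,
\begin{equation*}
|f(x)|^2 = \sum_{n,n'=1}^N c_n c_{n'} g_n(x) g_{n'}(x) = \sum_{\gamma\in\Gamma} d_\gamma(x) \sum_{n,n'=1}^N c_n c_{n'} g_n(\gamma) g_{n'}(\gamma) = \sum_{\gamma\in\Gamma} d_\gamma(x) |f(\gamma)|^2,
\end{equation*}
which is the identity \eqref{finiteset.pr.eq1}. The key point is that the coefficients $d_\gamma(x)$ depend only on $x$ and on the chosen basis, not on the particular signal $f$, so the same $\Gamma$ and same $d_\gamma$ work uniformly over $V$.

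There is no real obstacle here; the only mild subtlety is confirming that a finite basis $\Gamma$ can actually be extracted from the uncountable family $\{G(x):x\in A\}$, but this is standard: any spanning family in a finite-dimensional space contains a basis, obtained for instance by iteratively picking points $\gamma_j\in A$ such that $G(\gamma_j)$ is linearly independent from the previously chosen matrices, a process that must terminate in at most $\dim W\le N(N+1)/2$ steps.
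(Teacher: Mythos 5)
Your proof is correct and follows essentially the same route as the paper: the paper's argument (given via Theorem \ref{tofinitenew.thm} and referenced in Proposition \ref{finiteset.pr}) likewise takes a basis $g_1,\ldots,g_N$ of $V|_A$, spans the finite-dimensional space $W$ of symmetric matrices by the outer products $G(x)$, extracts a finite basis $G(\gamma)$, $\gamma\in\Gamma$, and expands $|f(x)|^2$ through the decomposition $G(x)=\sum_{\gamma\in\Gamma}d_\gamma(x)G(\gamma)$. No discrepancies to note.
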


\begin{proof}  Let  $g_n, 1\le n\le N$, be a basis of the space $V|_A$,
and
$W$ be the linear space generated by symmetric matrices
$G({x}):= \big(g_n({x}) g_{n'}({x})\big)_{1\le n, n'\in N}$,  ${x}\in A$.
Then there exists a finite set $\Gamma\subset A$  such that
$G({\gamma}), \gamma\in \Gamma$, is a basis (or a spanning set) for the space $W$.
With the above set $\Gamma$,  we can follow 
 the proof of Theorem \ref{tofinitenew.thm} in Section  \ref{tofinite.section}
   to prove \eqref{finiteset.pr.eq1}.
\end{proof}

 Let  $g_n, 1\le n\le N$, be a basis of the space $V|_A$,  and  $\Gamma$ be as in the proof of Proposition \ref{finiteset.pr}.
  By Theorem \ref{AtoGamma.thm} and  \cite[Theorem 2.8]{BCE06}, we have the following criterion that can be used to verify the
local complement property on a bounded open set $A$ in finite steps.

\begin{theorem}\label{localcomplementcriterion.thm}
The linear space $V$ has the local complement property on $A$ if and only if
  for any $\Gamma'\subset \Gamma$, either $(g_n({\gamma}'))_{1\le n\le N}, {\gamma}'\in \Gamma'$
   form a  frame for $\R^{N}$ or $(g_n({\gamma}))_{1\le n\le N}, {\gamma}\in \Gamma\backslash\Gamma'$
  form a  frame for $\R^{N}$.
\end{theorem}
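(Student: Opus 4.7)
The plan is to reduce the statement to the classical finite-dimensional complement property via the coordinate expansion in the basis $\{g_n\}_{n=1}^N$ and then invoke Theorem \ref{AtoGamma.thm}.

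First, by Theorem \ref{AtoGamma.thm}, $V$ has the local complement property on $A$ if and only if the finite set $\Gamma \subset A$ furnished by Proposition \ref{finiteset.pr} satisfies the following dichotomy: for every subset $\Gamma' \subset \Gamma$, either there is no $f \in V$ with $f \not\equiv 0$ on $A$ and $f(\gamma') = 0$ for all $\gamma' \in \Gamma'$, or there is no $g \in V$ with $g \not\equiv 0$ on $A$ and $g(\gamma) = 0$ for all $\gamma \in \Gamma \setminus \Gamma'$.

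Next, I would translate each half of this dichotomy into linear-algebraic language. Any $f \in V$ restricts to $A$ as $f|_A = \sum_{n=1}^N c_n g_n$ for a unique coefficient vector $c = (c_1, \ldots, c_N)^T \in \R^N$, and since $\{g_n\}_{n=1}^N$ is a basis of $V|_A$, one has $f \equiv 0$ on $A$ if and only if $c = 0$. Setting $v_\gamma := (g_1(\gamma), \ldots, g_N(\gamma))^T \in \R^N$, the vanishing condition $f(\gamma') = 0$ for all $\gamma' \in \Gamma'$ reads $\langle c, v_{\gamma'}\rangle = 0$ for all $\gamma' \in \Gamma'$. Consequently, the non-existence of a nonzero $c$ satisfying this homogeneous system is equivalent to $\Span\{v_{\gamma'} : \gamma' \in \Gamma'\} = \R^N$, that is, $\{v_{\gamma'}\}_{\gamma' \in \Gamma'}$ forms a frame for $\R^N$. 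Applying the same reduction to $\Gamma \setminus \Gamma'$ and feeding both equivalences into the dichotomy from Theorem \ref{AtoGamma.thm} yields exactly the characterization of the theorem, which is also the finite-dimensional complement property of \cite[Theorem 2.8]{BCE06} applied to the finite collection $\{v_\gamma\}_{\gamma \in \Gamma} \subset \R^N$.

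The principal step is the first one, since it depends on Theorem \ref{AtoGamma.thm}; the remainder is a routine change of variables. The only minor subtleties to watch are that the basis property of $\{g_n\}_{n=1}^N$ on $V|_A$ correctly converts "$f \not\equiv 0$ on $A$" into "$c \neq 0$" (immediate from linear independence of the $g_n$ on $A$), and that the set $\Gamma$ is chosen independently of the partition $\Gamma' \cup (\Gamma \setminus \Gamma')$, which is guaranteed by Proposition \ref{finiteset.pr}.
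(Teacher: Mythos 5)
Your proof is correct and takes essentially the same route as the paper: the paper's own justification is precisely ``by Theorem \ref{AtoGamma.thm} and \cite[Theorem 2.8]{BCE06},'' i.e., reduce the local complement property to the sampling dichotomy on the finite set $\Gamma$ from Proposition \ref{finiteset.pr} and then pass to the finite-dimensional complement property for the vectors $(g_n(\gamma))_{1\le n\le N}$. Your explicit coordinate translation (nonexistence of a nonzero annihilating coefficient vector $\Leftrightarrow$ spanning $\Leftrightarrow$ frame) merely unpacks what the paper delegates to the citation of \cite{BCE06}.
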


The local complement property for different open sets can be equivalent. Following the argument used in the proof of Theorem \ref{AtoGamma.thm}, we have

 \begin{proposition}
  Let $A$ be a bounded open set and $V$ be a locally finite-dimensional space with the  local complement property on $A$. If
  $B$ is a bounded open subset of $A$ such that
 signals $g$ and $f$ satisfying $|g({x})|=|f({x})|$ on  $B$
  have same magnitude measurements on $A$, then
         $V$  has local complement property on $B$.
\end{proposition}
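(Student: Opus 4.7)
The plan is to deduce local complement property on $B$ from local complement property on $A$ by routing through the equivalent formulation in terms of local phase retrievability. Recall from the text that $V$ has local complement property on a set $A$ if and only if every pair $f, g \in V$ with $|f| = |g|$ on $A$ must satisfy $g = \delta f$ on $A$ for some $\delta \in \{-1, 1\}$. This equivalence (stated for $V(\phi)$ in the paper and carrying over verbatim to a general locally finite-dimensional $V$) is what turns the statement about magnitude vanishing patterns into a statement about sign retrieval, and it will be the bridge between the two hypotheses.

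First I would take arbitrary $f, g \in V$ with $|g(x)| = |f(x)|$ for every $x \in B$; the goal is to produce a single sign $\delta \in \{-1, 1\}$ for which $g(x) = \delta f(x)$ on all of $B$. By the hypothesis of the proposition, the assumption $|g(x)| = |f(x)|$ on $B$ forces the same identity on the ambient set, namely $|g(x)| = |f(x)|$ for all $x \in A$. Then the assumed local complement property of $V$ on $A$, re-read via the local phase retrievability equivalence, yields $\delta \in \{-1, 1\}$ with $g(x) = \delta f(x)$ throughout $A$.

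Restricting this identity to $B \subset A$ gives $g(x) = \delta f(x)$ for all $x \in B$, which is exactly the local phase retrievability of $V$ on $B$. Invoking the equivalence one more time, this time in the reverse direction on the set $B$, produces the local complement property of $V$ on $B$, completing the proof.

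The only subtle point, and thus the mildly delicate step, is checking that the equivalence between local complement property and local phase retrievability is available on both $A$ and $B$ in this generality; for $V(\phi)$ this is stated in the paper following \cite{YC16}, and the proof extends without change to any locally finite-dimensional space of real-valued continuous signals, so this is really a matter of citing the right argument rather than producing new ideas. Apart from that, no compactness, genericity, or finite-set reduction from Theorem~\ref{AtoGamma.thm} is actually needed, and the proposition follows by a short definition-chasing argument.
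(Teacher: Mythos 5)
Your proof is correct and is essentially the paper's own argument: the paper's one-line proof (``following the argument used in the proof of Theorem \ref{AtoGamma.thm}'') amounts to exactly your transfer step --- use the hypothesis to lift $|f|=|g|$ from $B$ to $A$, apply the rigidity coming from the local complement property on $A$, and restrict back to $B$ --- with the equivalence between local complement property and local phase retrievability, which you invoke by citation, unwound inline via the $f\pm g$ product trick. Your flagged subtle point is also handled correctly: that equivalence only uses linearity and real-valuedness of the signals, so it applies verbatim to a general locally finite-dimensional space $V$, and there is no gap.
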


The conclusion in the above proposition is not true in general. For instance,
the shift-invariant space $V(\phi_0)$ in Example \ref{hatexample1}
has the local complement property on $(0, 1/2)$, but not on its supset $(0, 1)$.

A linear space may have the local complement property on a bounded open $A$, but not on some of its open subsets.
For instance, one may verify that
 $V(\phi_1)$ has the local complement property on $(0,1)$ and on $(-1/2, 1/2)$, but not on their intersection $(0, 1/2)$, where $\phi_1=\phi_0(2\cdot)$ and $\phi_0$ is given in Example \ref{hatexample1}.  

We finish the appendix with a  proposition about local linear independence
 and local complement property.

\begin{proposition}\label{llilocalcomplement.pr}
Let $\phi$ have local linear independence on any open set. Then there exist $A_m, 1\le m\le M$, such that \eqref{necandsuf.thm.eq1} holds and $V(\phi)$ has the local complement property on $A_m, 1\le m\le M$.
\end{proposition}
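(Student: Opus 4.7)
The plan is to take the $A_m$'s to be the interiors of the natural strata of $\Rd$ determined by which integer shifts of $\phi$ are active at each point. For $y \in \Rd$, let $K^y := \{k \in \Zd : y \in \mathrm{supp}(\phi(\cdot - k))\}$, set $A_{K_0} := \{y \in \Rd : K^y = K_0\}$, and let $B_{K_0} := \mathrm{int}(A_{K_0})$. Since $\mathrm{supp}(\phi)$ is compact, $K^y$ is finite and the equivariance $K^{y + l} = K^y + l$ for $l \in \Zd$ shows that only finitely many types $K_0$ arise modulo $\Zd$-translation; I will take $A_1, \ldots, A_M$ to be representatives of the nonempty $B_{K_m}$'s.

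To verify \eqref{necandsuf.thm.eq1} it suffices to check that $\bigcup_m (A_m + \Zd)$ is dense in $\Rd$. Its complement is contained in the locally finite union $\bigcup_k \partial\,\mathrm{supp}(\phi(\cdot - k))$, since at any point $y$ where the type $K^y$ is constant on a small neighborhood $y$ already belongs to some $B_{K^y}$. Each $\partial\,\mathrm{supp}(\phi(\cdot - k))$ is the boundary of an open set and hence closed with empty interior, and a locally finite union of such sets has empty interior by Baire's theorem. Consequently, every nonempty open set $S_k$ with $k \in \Lambda_\phi$ meets $\bigcup_m (A_m + \Zd)$.

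The core of the argument is showing that $V(\phi)$ has the local complement property on each $A_m = B_{K_m}$. The key ingredient is the stratum identity $K_U = K_m$ for every nonempty open $U \subseteq B_{K_m}$: the inclusion $\supseteq$ holds because each $y \in U \subset A_{K_m}$ lies in $\mathrm{supp}(\phi(\cdot - k))$ for $k \in K_m$, so $\{\phi(\cdot - k) \ne 0\}$ meets $U$; the inclusion $\subseteq$ holds because $\phi(\cdot - k) \equiv 0$ on a neighborhood of every point of $B_{K_m}$ when $k \notin K_m$, and these neighborhoods cover $B_{K_m}$. Now assume, for contradiction, that local complement fails on $B_{K_m}$: there exist $A' \subseteq B_{K_m}$ and $f, g \in V(\phi)$, both nontrivial on $B_{K_m}$, with $f|_{A'} = 0$ and $g|_{B_{K_m} \setminus A'} = 0$, so $B_{K_m} \subseteq \{f = 0\} \cup \{g = 0\}$. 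If $\{f = 0\}$ contained a nonempty open $U \subseteq B_{K_m}$, then the stratum identity together with local linear independence on $U$ would force the coefficient vector of $f$ restricted to $K_m$ to vanish, contradicting nontriviality. Hence $\{f = 0\} \cap B_{K_m}$ and $\{g = 0\} \cap B_{K_m}$ are closed subsets of the open (Baire) set $B_{K_m}$ with empty interior, and Baire's theorem rules out their union equalling $B_{K_m}$.

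The principal technical point is the stratum identity $K_U = K_m$: it is what channels the hypothesis of local linear independence (phrased in terms of $K_U$) into the Baire argument (which requires the full coefficient vector on $K_m$ to be forced to zero). Once this identity is in hand, both the density step and the local complement contradiction reduce to standard Baire-category considerations.
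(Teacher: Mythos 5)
Your proof is correct, but it follows a genuinely different route from the paper's. The paper does not stratify by support types: for $T\subset\Zd$ it sets $S_T=\cap_{k\in T}S_k$, calls $T$ maximal when $S_T\ne\emptyset$ but $S_{T'}=\emptyset$ for every $T'\supsetneqq T$, and takes $A_m:=S_{T_m}$ for the finitely many maximal sets $T_m$. That choice makes \eqref{necandsuf.thm.eq1} tautological (every $k$ with $S_k\ne\emptyset$ lies in some maximal $T_m$, and then $A_m\subseteq S_k$), whereas your construction has to recover it through your first Baire/density argument; and the paper proves the complement property through the equivalent local phase retrievability: from $|f|=|g|$ on $A_m$ one gets $(f+g)(f-g)=0$ there, and local linear independence applied on the open sets $\{f+g\ne 0\}$ and $\{f-g\ne 0\}$ (or on $A_m$) forces the coefficients of $f\mp g$ indexed by $T_m$ to vanish, so $f=\pm g$ on $A_m$. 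Your stratum identity $K_U=K_m$ is the exact analogue of the paper's use of maximality (on $S_{T_m}$ the shifts indexed by $T_m\cup\{0\}$ never vanish, and all other shifts vanish identically there), so the mechanism by which local linear independence on an arbitrary open subset controls the full coefficient set is parallel in both arguments; what differs is the surrounding structure --- the paper's argument is direct and sign-explicit, while yours replaces it by a Baire-category argument (two relatively closed, nowhere dense zero sets cannot cover the open Baire space $B_{K_m}$), which is clean and equally rigorous. One point you should patch: the paper's $A_m$ all lie inside ${\rm supp}\,\phi$ and hence are bounded, as required when these sets are fed into Theorem \ref{necandsuf.thm}, whereas your empty-type stratum (the complement of $\cup_{k}{\rm supp}\,\phi(\cdot-k)$) is open, $\Zd$-periodic, and unbounded whenever it is nonempty. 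It must be discarded, and this is harmless: every $x\in S_k$ satisfies $\phi(x)\ne 0$, so $S_k$ never meets that stratum, and your density argument still yields \eqref{necandsuf.thm.eq1} after its removal.
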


\begin{proof} Let $S_k, k\in \Zd$, be as in \eqref{necandsuf.thm.eq2}.
For a set $T\subset \Zd$, define
$S_T=\cap_{k\in T} S_k$. We say that $T$ is maximal if  $S_T\ne \emptyset$ and $S_{T'}=\emptyset$ for all $T'\supsetneqq  T$.  From the definition, there are finitely many maximal sets $T_1, \ldots, T_M$, and denote
the corresponding sets by $A_m:=S_{T_m}, 1\le m\le M$.

Clearly \eqref{necandsuf.thm.eq1} holds for the above selected open sets as
$$\cup_{m=1}^M T_m=\{k\in \Zd:\   S_k\ne \emptyset\}.$$
Then it remains to prove that $V(\phi)$ has local complement property on $A_m, 1\le m\le M$. Assume that $f,g\in V(\phi)$ satisfy $|f(x)|=|g(x)|$  for all $x\in A_m$, which implies that
$(f+g)(x) (f-g)(x)=0$ for all $x\in A_m$. Write $f+g=\sum_{k\in \Zd} c(k) \phi(\cdot-k)$,
$f-g=\sum_{k\in \Zd} d(k) \phi(\cdot-k)$,
 and set $B_1=\{x\in A_m, (f+g)(x)\ne 0\}$ and $B_2=\{x\in A_m: \ (f-g)(x)\ne 0\}$.
Then either $f-g=0$ on $B_1$, or $f+g=0$ on $B_2$, or $f-g=f+g=0$ on $A_m$.
Hence 
either   $c(k)=d(k)$  for all $k\in T_m$ or $c(k)=-d(k)$ on $k\in T_m$
by the local independence on $B_1$, or $ B_2$ or $A_m$.  Therefore either
$f=g$ on $A_m$, or $f=-g$ on $A_m$, or $f=g=0$ on $A_m$. This completes the proof.
\end{proof}

\end{appendix}

\end{document}